\numberwithin{equation}{section}
\newcommand{\bP}{\mathbb{P}}
\newcommand{\bE}{\mathbb{E}}
\newcommand{\bR}{\mathbb{R}}
\newcommand{\eps}{\varepsilon}
\newtheorem{theorem}{Theorem}[section]
\newtheorem{lemma}[theorem]{Lemma}
\newtheorem{prop}[theorem]{Proposition}
\newtheorem{assumption}{Assumption}
\newtheorem{corollary}[theorem]{Corollary}
\theoremstyle{definition}
\newtheorem{definition}[theorem]{Definition}
\theoremstyle{remark}
\newtheorem{remark}[theorem]{Remark}
\def\eqref#1{(\ref{#1})}
\begin{document}

\title[Optimal Portfolio in Intraday Electricity Markets]{Optimal Portfolio in Intraday Electricity Markets Modelled by L\'{e}vy-Ornstein-Uhlenbeck Processes}
\thanks{The authors wish to thank Alessandro Balata, Fred Espen Benth, Sebastiano Don, Giorgio Ferrari, Paolo Luzzini, Frank Norbert Proske, Wolfgang J. Runggaldier and all the participants of the XVIII Workshop in Quantitative Finance in Milan, January 25-27, 2017 for useful discussion and suggestions. This work was partly supported by the grant CPDA158845-2015 of the University of Padua ``Multidimensional polynomial processes and applications to new challenges in mathematical finance and in energy markets".}

\author[Marco Piccirilli]{Marco Piccirilli}
\address[Marco Piccirilli]{University of Padova, Department of Mathematics, via Trieste 63, Torre Archimede, I-35121 Padova (Italy)}
\email[Corresponding author]{mpicciri@math.unipd.it}
\author[Tiziano Vargiolu]{Tiziano Vargiolu}
\address[Tiziano Vargiolu]{University of Padova, Department of Mathematics, via Trieste 63, Torre Archimede, I-35121 Padova (Italy)}
\email{vargiolu@math.unipd.it}

%\date{{\bf Preliminary version} at \today}
\date{\today}

\keywords{Intraday electricity market, portfolio optimization, additive process, mean-reversion, partial integro-differential equation, approximation.}

\begin{abstract}
We study an optimal portfolio problem designed for an agent operating in intraday electricity markets. The investor is allowed to trade in a single risky asset modelling the continuously traded power and aims to maximize the expected terminal utility of his wealth. We assume a mean-reverting additive process to drive the power prices.
%In particular, since we do not perform a logarithmic transformation of the price, as is usually done, negative prices can be reproduced from our model: this is consistent with what have been recently observed since the introduction of renewable energy sources in the power mix. 
In the case of logarithmic utility, we reduce the fully non-linear Hamilton-Jacobi-Bellman equation to a linear parabolic integro-differential equation, for which we explicitly exhibit a classical solution in two cases of modelling interest. The optimal strategy is given implicitly as the solution of an integral equation, which is possible to solve numerically as well as to describe analytically. An analysis of two different approximations for the optimal policy is provided. Finally, we perform a numerical test by adapting the parameters of a popular electricity spot price model.
\end{abstract}

\maketitle

\section{Introduction}

%One of the fastest growing financial sector is electricity. 
After power markets have been deregulated worldwide, this financial sector is experiencing profound structural changes. Several interesting issues have triggered a growing interest in models for energy markets. A rigorous mathematical approach may be useful for practitioners and at the same time stimulate the advance in academic research.
%and electricity distribution is becoming more and more financialized. 
In this work we consider markets structured as the European Power Exchange (EPEX), which regulates electricity spot trading in Central and Western Europe. 
In many exchanges, short-term trade is organized in mainly two markets: day-ahead and intraday. While the day-ahead market trades electricity for each hour (or block of hours) to be delivered the next day and is auction-based, the intraday market, which opens after the day-ahead closure, offers the participants the possibility to trade continuously until short time prior to delivery. The intraday market is especially important for renewable energy producers, who can adjust their day-ahead positions due to changes of weather forestcasts \cite{kieselpara}. 

In this work we study a dynamic portfolio optimization problem designed for intraday electricity trading. These markets play an important role in the equilibrium of power grids, since both electricity producers and consumers are allowed to optimize their positions and reduce the risk of imbalance, which entails fees to be paid to system  operators. Since the increasing penetration of renewable sources, modelling intraday trading has become particularly important, as well as mathematically interesting. Literature related to this problem is very recent: one of the first paper in this direction is \cite{henriot}, where the author studies how a wind power producer may benefit from trading in intraday markets by taking into account the risk of forecast production errors. 
%In his model, the power producer is a wind producer who is trying to minimize her sourcing cost on the intra-day market while maintaining a balance position between her forecast production and her sales. Henriot’s model takes into account the impact of the wind power producer on the intra-day price with a deterministic inverse demand function, and the intra-day price is not a risk factor. The only risk factor comes from the error forecast of the wind production and its auto-correlation.
\cite{garnier} study an optimal trade execution problem in order to compensate forecast errors on wind or photovoltaic power production. 
\cite{MR3439396} consider a producer who aims to minimimize the imbalance cost of his residual demand (which is stochastic) by controlling his flexible power production (thermal plants) and his position on the intraday market. Another recent paper \cite{farinelli} studies a stochastic multiperiod portfolio optimization problem in discrete time for hydroassets management and derives, in particular, an optimal intraday trading strategy. In \cite{kieselpara} the authors investigate the impact of intraday updated forecasts of wind and photovoltaic on the bidding behavior of market participants, while \cite{wolff} compare the price drivers in both the EPEX day-ahead and intraday electricity markets. Also, \cite{cartea_flora} study cross-border effects in intraday prices between interconnected locations and \cite{wind} consider a wind energy producer who trades in forward, spot, intraday and adjustment markets and derive optimal trading policies taking into account that his forecast production is imperfect. Our study arises as natural generalization of \cite{gallana} and takes the perspective of a small agent interested in exploiting the stylized features of intraday prices in order to maximize his expected terminal gains.  

We propose a stochastic model for the continuously traded electricity price and formulate an expected profit maximation problem in the language of stochastic control. The price is modeled by additive non-Gaussian Ornstein-Uhlenbeck (OU) processes. Power spot prices are usually described by either geometric \cite{carteaF,gemanR} or additive mean-reverting processes \cite{MR2323278,BPV,koeke,latini,Lucia2002}. In the context of intraday markets, our model choice generalizes both \cite{gallana}, where the authors model the price by Gaussian OU processes, and \cite{kieselpara}, where the intraday price is an AR(1) process (with regime-switching that we do not consider), which is the discrete time version of the OU process. 
This class is highly flexible and capable to reproduce both the mean-reverting and the spiky behavior of observed time series. Since we do not perform a logarithmic transformation of the price, as is usually done, negative prices can be reproduced by our model: this is consistent with what has been recently observed since the introduction of renewable energy sources in the power mix (see \cite{negative_prices}).

Following the pioneering work of Merton \cite{merton,MR0456373}, optimal portfolio management has become one of the most popular problems in mathematical finance and has been addressed in several different frameworks. For a general treatment we refer the reader to \cite{MR2322248}. Our approach is based on the dynamic programming method and the study of the associated Hamilton-Jacobi-Bellman (HJB) integro-differential equation. Some related works include e.g. \cite{MR2153580}, where the price evolves as an exponential Ornstein-Uhlenbeck process, known as the Schwartz model, which is ubiquitous in commodity prices modelling. 
%One of the earlier works accounting for jumps in the dynamics of prices is, for instance, \cite{MR2521879} (see also references therein).  
In \cite{kallsen2000}  the authors model the risky assets with exponential Lévy processes and \cite{pasin} generalize to exponential additive processes. We are inspired by \cite{MR1922696,MR1732400} when introducing a transformation for solving the HJB equation. Also, \cite{MR1967775} study the same optimization problem for the Barndorff-Nielsen-Shephard model \cite{MR1841412}, where the volatility is a superposition of non-Gaussian OU processes driven by subordinators. 

The problem of finding the value function is not straightforward from the formulation of the HJB equation. We study the case of logarithmic utility in order to disentangle the terms depending on both the wealth process and the strategy, from the ones depending on time and price. This simplification was also observed by \cite{aase} for a certain class of jump-diffusion processes. A theoretical study of optimal portfolios in the case of logarithmic utility is performed in \cite{goll_kallsen} and furtherly generalized in \cite{goll_kallsen2003}, where the authors apply martingale methods (see \cite{goll_kallsen,goll_kallsen2003} and references therein for details on this approach) to a general semimartingale framework. However, although their analysis provides a characterization of an optimal strategy and its uniqueness,  
%neither explicit conditions on the existence nor 
the study of its analytical properties is not explicitly addressed. We reduce the fully nonlinear HJB equation to a linear partial integro-differential equation (PIDE) by applying a logarithmic transformation as in \cite{MR1922696}. Even though the optimal strategy is given implicitly as the solution of an integral equation, we are able to show that it is well-defined and satisfies sufficiently regular properties in order to apply the Verification Theorem. 

We prove the existence of a classical solution to the HJB equation in two cases of interest: time-inhomogeneous compound Poisson processes with non-degenerate Brownian component and additive pure-jump processes of (possibly) infinite variation. This is done in the first case relying on a result by \cite{MR1650147}, while in the second case via Feynman-Ka\v{c} representations. In the latter approach, we follow the idea of \cite{MR1967775} and generalize it to time-inhomogeneous processes, which we do not assume  to be of finite variation as in \cite{MR1967775}. In particular, Danskin's theorem \cite{MR0210456} allows us to prove that the forcing term of the HJB equation, which is defined as the composition of non-differentiable functions, is actually differentiable. 

Partial integro-differential equations (PIDE) are in itself of interest and arise across different fields of mathematics. 
%: e.g. analysis, probability, mathematical physics and several applications. 
%For a nice account of literature related to the study of parabolic PIDE solutions we refer the reader to the introduction of \cite{gong}. 
In our paper we consider classical solutions, obtained via probabilistic representations, partly as extensions or complementary contributions of various earlier works. A classical reference for this type of problems is \cite{MR673169}, where some existence results are stated under strong regularity assumptions on the coefficients of the equation. In proving the existence of a regular solution for finite Lévy measures with non-degenerate Brownian component, we apply a result of \cite{MR1650147}. Nevertheless, this approach is based on classical smoothness results from PDE theory for linear second-order differential equations (cf. \cite{MR0181836}), which require the finiteness of the jump measure. Consequently, for more general jump-processes we instead follow \cite{MR1967775}, where the Feynman-Ka\v{c} formula yields a candidate, which is proven to be a classical solution of a PIDE very similar to ours. Unfortunately, this approach works only in the first-order case, i.e. with no Brownian component. However, as observed by \cite{MR2211710}, in order to generate realistic price trajectories, it is sufficient to consider financial models which are either finite activity jump processes combined with a diffusion part, or infinite activity pure-jump models, since the latter behave in a ``diffusive'' way when frequent small jumps occur. 
%Another important related contribution is \cite{MR2418232}, where the authors prove the existence of a solution for a semilinear first-order parabolic PIDE by applying the Banach fixed point theorem to a suitable operator defined by the Feynman-Ka\v{c} formula.

We then study an approximation of the optimal strategy based on the Taylor expansion of the first-order condition, which is a numerical integral equation. In the case of compound Poisson processes, the center of the polynomial is chosen as the mean jump size. We compare it to the classical Merton ratio \cite{merton}, which is shown to correspond to a Taylor expansion around zero. This approximation has been studied also in \cite{ascheberg,MR3176490,imma_reno,pasin} for stochastic volatility price models with jumps. Nevertheless, in their approach the authors start by approximating the HJB equation directly, while we work on the first-order condition. We derive some estimates of the approximation error and perform a numerical test on a power spot price model, specifically the factor model in \cite{Benth20121589}. Our main finding here is that Merton's ratio performs poorly in comparison to our jump-based approximation, suggesting that optimal trading in Lévy-driven markets is not well described by this economically meaningful quantity, at least for the specific case of additive mean-reverting prices. 
 
The paper is structured as follows. In Section 2 we introduce the intraday price dynamics and set our stochastic control problem. In Section 3 we describe the properties of our optimal strategy and study the reduced HJB equation for a logarithmic utility. In particular, two existence results of classical solutions to PIDE are given. We conclude this section by applying the Verification Theorem. The approximation study of the optimal strategy is contained in Section 4, while Section 5 presents an exemplary numerical test on the policy approximations. Appendix A includes auxiliary propositions for the existence of a PIDE solution in the first-order case, while in Appendix B we collect some of the most technical proofs.

\section{The Optimal Portfolio Problem}
We follow the dynamic programming strategy for solving our stochastic optimal control problem (see, for instance, \cite{MR2179357}). The purpose is maximizing the expected utility of our portfolio over a set of trading strategies, that is to study the quantity
\begin{equation}\label{21}
\sup_{\pi}\,\bE[U(X(T))],
\end{equation}
where $U:\bR\to\bR$ is a utility function representing the risk profile of the investor, $X=X^{\pi}$ denotes the portfolio value associated to the strategy $\pi$ and $T$ is the trading closure time.

Let us introduce the stochastic dynamics driving the market. Denote by $L$ a real-valued additive process (for details see e.g. \cite[Section 14.1]{MR2042661}) defined on the complete filtered probability space $(\Omega, \mathcal{F}, (\mathcal{F}_{u})_{u\geq0},\bP)$ by
%with local characteristics $(\sigma^2(t),\nu,b(t))$, 
\begin{equation}\label{L}
dL(u)  = b(u)\,du + \sigma(u)\,dW(u) + \psi(u)\,\int_\bR y\,\overline N(dy,du),\qquad u\in(0,T],
\end{equation}
where $\sigma,b,\psi:[0,T]\to\bR$ are continuously differentiable functions, such that $0\leq\psi_1\leq\psi(u)\leq\psi_2$ for any $u\in[0,T]$ and $\sigma(u)$ and $\psi(u)$ do not vanish at the same time. The process $W$ is a standard Brownian motion and $\overline N(dy,du):=N(dy,du)-\nu(dy)\, du$ is the compensated Poisson random measure associated to a Lévy measure $\nu$, i.e. a Radon measure on $\bR\setminus\{0\}$ such that $\int_\bR (1\wedge y^2)\, \nu(dy)<\infty$. In particular, if $b,\sigma$ and $\psi$ are constant, with $\psi\equiv1$, then $L$ is a Lévy process. In order to deal with processes with finite second moment, we furtherly assume that $\nu$ satisfies the following integrability condition:
%i.e. for each $t\in[0,T]$ a Lévy measure on $\bR\setminus\{0\}$ such that $\int_0^T\int_\bR (1\wedge y^2)\, \nu_t(dy)\, dt<\infty$. Moreover, given a nonnegative function $\psi:[0,T]\to\bR^+$ such that $0\leq\psi_1\leq\psi(t)\leq\psi_2$, we define $\nu_t(dy)=\psi(t)\,\nu(dy)$, where $\nu$ is a Lévy measure satisfying the following integrability condition.
%\begin{assumption}[Jump measure moments]\label{39}
%The L\'{e}vy measure $\nu$ satisfies
\begin{equation}\label{variance}
%\int_{|y|\leq1} |y|\, \nu(dy)<\infty, \quad 
\int_{|y|\geq1} y^2\, \nu(dy)<\infty.
\end{equation}
%\end{assumption}
Observe that $L$ can be decomposed in a deterministic drift part, a Brownian motion with time-varying volatility and a square integrable pure-jump martingale component. 
%Observe that in particular we have $\int_{|y|\geq1} |y|\, \nu(dy)<\infty$.
%To achieve this, it is sufficient to suppose for instance $\bE[L_t]<\infty$ for all $t>0$ (see Sato Thm 25.3). 
%Consequently, the process $L$ has finite variance. The \emph{L\'{e}vy-It\^o decomposition} of $L$ (see e.g. Theorem 19.2 in \cite{MR1739520}) can be expressed, up to changing the value of $b$, as
%$$
%dL(t)  = b(t)\,dt + \sigma(t)\,dW(t) + \psi(t)\,\int_\bR y\,\overline N(dy,dt),\qquad t\in(0,T].
%$$
We also introduce the following convention. 
\begin{remark}[Jump measure support]\label{12}
It holds that supp $\nu\subset[m,M]$ for $-\infty\leq m\leq M\leq+\infty$. We interpret the case $m=M=0$ formally as the diffusive case, i.e. when $L$ has no jump component. If, for instance, $m=-\infty$, we mean $[m,M]:=(-\infty,M]$. %As a consequence, the jump measure of $L$ 
\end{remark}
We are in a market with one asset (i.e. the continuously traded intraday price of electricity, expressed in Euros per MWh), whose market value $S=S^{t,s}$ evolves in time according to the stochastic differential equation
\begin{align}\label{5}
    dS(u)  &= -\lambda S(u)\,dt + dL(u), & u\in(t,T],  
    %\\ 
%\label{8bis}
%    S(t) &= s, &
\end{align}
given the initial condition $S^{t,s}(t)=s$ for some $t\in[0,T)$ and $s\in\bR$. The constant $\lambda$ is positive and represents the mean-reversion rate of $S$. In particular, for any additive process $L$
% defined on a probability space $(\Omega, \mathcal{F}, (\mathcal{F}_u)_t,\bP)$, 
there exists a unique (strong) solution $S$ such that in general $\bP(S(u)<0)>0$ for some $u\geq t$, i.e. the price may assume negative values. Nevertheless, if $L$ %is a Lévy \emph{subordinator}, which essentially means that it 
can have only positive jumps and there is no Brownian component, i.e. it is a time-inhomogeneous subordinator (see e.g. \cite{MR2042661}), then a nonnegative initial condition $s\geq0$ naturally implies that $S(u)$ is a.s. nonnegative for each $u\geq t$. Therefore, it is possible to consider additive processes taking both positive and negative values, according to one's modelling preferences. 
%The above equation may also be written in the following way:
%\begin{equation}
%    dS(t)  = (b(t)-\lambda S(t))\,dt + \sigma(t)\,dW(t) + \psi(t)\,\int_\bR y\,\overline N(dy,dt), \qquad t\in(0,T].
%\end{equation}
%Tracing through Section 15.3.1 in \cite{MR2042661}, the solution $S$ is said to be a non-Gaussian Ornstein-Uhlenbeck process or, simply, OU process, where the process $L$ in \eqref{5} is called the background driving L\'{e}vy process (BDLP). 
The unique solution of \eqref{5}, with starting condition $S^{t,s}(t)=s$, can be explicitly written as
\begin{equation}\label{9}
S^{t,s}(u)=s e^{-\lambda (u-t)}+\int_t^u e^{-\lambda (u-v)}\, dL(v).
\end{equation}
%%%%%%%%%%
\iffalse
Furthermore, (Proposition 15.1 in \cite{MR2042661}) the distribution of $S_t$, defined by \eqref{9}, is infinitely divisible for every $t$ and has characteristic triplet $(A^S_t,b^S_t,\nu^S_t)$ with 
$$
A^S_t=\frac{\sigma^2}{2\lambda}(1-e^{-2\lambda t}),
$$
$$
b^S_t=\frac b\lambda(1-e^{-\lambda t})+s_0 e^{-\lambda t},
$$
$$
\nu^S_t=\int_1^{e^{\lambda t}} \nu(\xi B)\frac{d\xi}{\lambda\xi},\quad \forall B\in\mathcal{B}(\bR),
$$
where $\xi B$ is a shorthand notation for $\{\xi x,x\in B\}$.

In particular, it directly follows that 
\begin{equation}\label{10}
\nu^S_t(B)=\int_B \nu^S_t(x)\,dx
\end{equation}
where $\nu^S_t(x)$ is the density w.r.t. the Lebesgue measure of the real line, defined by
$$
\nu^S_t(x)=\frac{\nu([x,x e^{\lambda t}))}{\lambda x}, \mbox{ if }x>0
\quad\mbox{and}\quad
\nu^S_t(x)=\frac{\nu((x e^{\lambda t},x])}{\lambda |x|}, \mbox{ if }x<0.
$$
\fi
%%%%%%%%%%
Since $L$ has finite second moment by \eqref{variance}, $S$ has finite second moment as well. %(Proposition 15.2 in \cite{MR2042661}) 
%and $\bE[S(t)]=\int_0^t e^{-\lambda (t-u)}\, b(u)\,du+s_0 e^{-\lambda t}$. %Hence one usually refers to $\frac b{\lambda}=\lim_{t\to\infty} \bE[S(t)]$ as the \emph{long-run mean} of the process $S$.

If $\pi(u)$ represents the amount of shares of the stock (i.e. the amount of energy in MWh), owned by the agent at time $u$, the associated self-financing portfolio dynamics $X=X^{t,s,x;\pi}$ is described by
\begin{align}\label{1}
%= -\lambda \pi(t) S(t)\,dt + \pi(t)\,dL(t)
    dX(u)  &= \pi(u) \,dS(u),& u\in(t,T],  \\ 
    X(t) &= x,
\end{align}
where $S=S^{t,s}$ and $x>0$. We also assume that $\pi(T)=0$, so that the agent liquidates the position at the terminal time $T$; in other words, we are in a pure trading context (see \cite[Remark 3.1]{gallana}). We then define the set of admissible trading strategies and the value function.
\begin{definition}[Admissible controls]\label{11}  
We call $\mathcal{A}([t,T])$ the set of admissible controls, which are defined as real-valued predictable processes $\pi$ on $[t,T]$ (in the sense of \cite[Definition 3.3]{MR637061}) such that the following conditions hold:
\begin{enumerate}
\item Equations \eqref{5} and \eqref{1} admit a unique strong solution $(S,X)=(S^{t,s,x},$ $X^{t,s,x;\pi})$ for each initial condition $S(t)=s,X(t)=x$, with $t\in[0,T)$ and $(s,x)\in\bR\times\bR^+$.
\item The associated wealth process is positive, i.e. $X^{t,s,x;\pi}(u)>0$, $\bP$-a.s. for each $u\in(t,T]$ and the final net position is zero: $\pi(T)=0$.
%\item For all $t\in[0,T)$, every $\pi\in\mathcal{A}([t,T])$ has the form%, by an abuse of notation,
%$$
%\pi(u)=\pi(u, X(u-))=\bar\pi(u) X(u-)
%$$ 
%for $u\in[t,T]$. Here $\bar\pi$ is a predictable process on $[t,T]$ taking values in a given compact subset of $\widehat\Pi=\widehat\Pi_{\nu,\psi}$, which from now on we denote by $\Pi=\Pi_{\nu,\psi}$.
\end{enumerate}
\end{definition}
%%%%%%%%%%%%%%%%%%%%%%%%%%
%%%%%%%%%%%%%%%%%%%%%%%%%
%%%%%%%%%%%%%%%%%%%%%%
\iffalse
%At this stage, we move to clarifying which strategies $\pi$ give sense to our optimization problem. 
%We want to stress that our notation $(S,X)=(S, X^{\pi})$ is meant to underline the dependency of the controlled component, represented by the portfolio process $X$, on the uncontrolled component $S$, which is the price process. As we are going to see, this dependency will play a relevant role in our approach to the problem. In many models, being most of them of exponential type, this dependency does not affect the optimization procedure as it does here. As a consequence, in the latter cases it is common for the controlled process to adopt the one-component notation $X=X^\pi$. 
\fi
%%%%%%%%%%%%%%%%%%%%%%%%%%%%%%%%%%
%%%%%%%%%%%%%%%%%%%%%%%%%%%%%%%%
%%%%%%%%%%%%%%%%%%%%%%%%%%%
\begin{definition}[Value function]
If $(S^{t,s,x},X^{t,s,x;\pi})$ denotes the controlled Markov process starting from $(s,x)$ at time $t$ and evolving as in \eqref{5} and \eqref{1}, we define the value function by
$$
V(t,s,x)=\sup_{\pi\in\mathcal{A}([t,T])} J(t,s,x;\pi),
$$ 
where $J$ is the objective function:
$$
J(t,s,x;\pi)=\bE[U(X^{t,s,x;\pi}(T))].
$$
The function $U:\bR^+\to\bR$ represents the investor's utility and is concave, increasing,
and bounded from below.
\end{definition}
% (see e.g. Section III.7 in \cite{MR2179357})
Following the dynamic programming principle, the Hamilton-Jacobi-Bellman (HJB) equation associated to this optimization problem is 
\begin{align}\label{2}
\frac{\partial}{\partial t}H(t,s,x)+\sup_\pi A^{\pi} H(t,s,x)&= 0, &(t,s,x)\in [0,T)\times\bR\times\bR^+,\\
\label{3}
H(T,s,x)&= U(x), &(s,x)\in\bR\times\bR^+.
\end{align}
According to \eqref{5} and \eqref{1}, the infinitesimal generator $A^{\pi}$ of the controlled process $(S^{t,s,x},$ $X^{t,s,x;\pi})$ acts on a sufficiently regular function $H(t,s,x)$ as follows
\begin{align*}
A^{\pi}H(t,s,x)&=(b(t)-\lambda s) (\pi H_x (t,s,x)+ H_s (t,s,x))\\
&+\frac12 \pi^2\sigma(t)^2 H_{xx} (t,s,x)+\pi \sigma(t)^2 H_{sx} (t,s,x)+\frac12 \sigma(t)^2 H_{ss} (t,s,x)
\\
&+\int_{\bR^2} [H(t,s+y_2,x+\pi \psi(t) y_1)-H(t,s,x)
\\
&-%\mathds{1}_{y_1^2+y_2^2\leq 1}\, 
(\pi H_x (t,s,x)\,\psi(t) y_1+ H_s (t,s,x) \psi(t) y_2)]\, \tilde\nu_t(dy_1 dy_2),
\end{align*}
where $\tilde\nu_t$ is the jump measure associated to the two-dimensional process $(S,X)$. Since this is a singular two-dimensional measure which coincides with the one-dimensional jump measure $\nu(dy)$ on the line $y_1=y_2$, we can rewrite the integral term as
\begin{align*}
A^{\pi}H(t,s,x)
&=(b(t)-\lambda s) (\pi H_x (t,s,x)+ H_s (t,s,x))\\
&+\frac12 \pi^2\sigma(t)^2 H_{xx} (t,s,x)+\pi \sigma(t)^2 H_{sx} (t,s,x)
+\frac12 \sigma(t)^2 H_{ss} (t,s,x)
\\
&+\int_\bR [H(t,s+\psi(t) y,x+\pi \psi(t) y)-H(t,s,x)
%\mathds{1}_{|y|\leq 1}\, 
-(\pi H_x (t,s,x) + H_s (t,s,x))\, \psi(t) y]\, \nu(dy).
\end{align*}
To link the HJB equation to the control problem, we formulate a Verification Theorem in the version of \cite[Theorem III.8.1]{MR2179357}. The basic tool is the well-known Dynkin formula (see \cite[p.122]{MR2179357}), which here applies to the controlled process $(S,X^\pi)$:
\begin{equation}\label{4}\begin{split}
\bE^{t,s,x}[f(T,S(T),X^\pi(T))]-f(t,s,x)
=\bE^{t,s,x}\left[\int_t^T A^{\pi(u)} f(u,S(u),X^\pi(u))\, du\right],
\end{split}\end{equation}
where $\bE^{t,s,x}$ denotes the conditional expectation given $S(t)=s,$ $X^\pi(t)=x$ and $f:[0,T]\times \bR\times\bR^+\to \bR$ is any function for which the expression makes sense.

\begin{theorem}[Verification Theorem]\label{verification}
Define the set 
$$
\mathcal{D}=\{f\in C^{1,2}([0,T)\times\bR\times\bR^+) \mbox{ so that } \eqref{4} \mbox{ holds for each }\pi\in \mathcal{A}([t,T])\}.
$$
Let $H\in\mathcal{D}$ be a classical solution of \eqref{2} which respects the terminal condition \eqref{3}. Then it holds, for each $(t,s,x)\in[0,T]\times\bR\times\bR^+$,
\begin{enumerate}
\item $H(t,s,x)\geq J(t,s,x;\pi)$ for each admissible control $\pi\in\mathcal{A}[t,T]$;
\item if there exists an admissible control $\pi^*\in\mathcal{A}[t,T]$ such that
$$
\pi^*(u)\in\arg\max_{\pi} A^{\pi}H(u,S(u),X^\pi(u)) \qquad \bP\mbox{-a.s.}\,\mbox{ for }\, u\in[t,T]
$$
then $H(t,s,x)= J(t,s,x;\pi^*)=V(t,s,x)$, i.e. $\pi^*$ is an optimal strategy.
\end{enumerate}  
\end{theorem}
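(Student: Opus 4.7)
The proof is the standard verification argument; the definition of $\mathcal{D}$ absorbs exactly the regularity and integrability needed to make Dynkin's formula \eqref{4} applicable, so the core of the proof is algebraic and the main analytical work (producing an $H\in\mathcal{D}$ solving \eqref{2}--\eqref{3} and exhibiting an admissible maximiser $\pi^{\ast}$) lies in the later sections, not here.

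First I would prove part (1). Fix $(t,s,x)\in[0,T]\times\bR\times\bR^{+}$ and an arbitrary $\pi\in\mathcal{A}([t,T])$. Since $H\in\mathcal{D}$, Dynkin's formula \eqref{4} applies to $H$ along $(S,X^{\pi})$ (reading $A^{\pi(u)}$ there as the full space-time generator, so that it incorporates the $\partial_{t}$ term appearing in \eqref{2}):
\begin{equation*}
\bE^{t,s,x}[H(T,S(T),X^{\pi}(T))]-H(t,s,x)=\bE^{t,s,x}\!\left[\int_{t}^{T}\bigl(\partial_{t}H+A^{\pi(u)}H\bigr)(u,S(u),X^{\pi}(u))\,du\right].
\end{equation*}
By the HJB equation \eqref{2} the integrand is pointwise non-positive, since $A^{\pi(u)}H\leq\sup_{\pi}A^{\pi}H=-\partial_{t}H$ $\bP$-a.s.\ along the trajectory. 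The terminal condition \eqref{3} turns the left-hand-side expectation into $\bE^{t,s,x}[U(X^{\pi}(T))]=J(t,s,x;\pi)$. Rearranging gives $H(t,s,x)\geq J(t,s,x;\pi)$, and taking the supremum over $\pi\in\mathcal{A}([t,T])$ yields $H\geq V$.

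Next, for part (2), I would repeat the computation with the candidate $\pi^{\ast}\in\mathcal{A}[t,T]$. Since $\pi^{\ast}(u)\in\arg\max_{\pi}A^{\pi}H(u,S(u),X^{\pi^{\ast}}(u))$ $\bP$-a.s., the integrand $\partial_{t}H+A^{\pi^{\ast}(u)}H$ vanishes identically along the optimal trajectory by \eqref{2}, so Dynkin's identity collapses to
\begin{equation*}
H(t,s,x)=\bE^{t,s,x}[H(T,S(T),X^{\pi^{\ast}}(T))]=\bE^{t,s,x}[U(X^{\pi^{\ast}}(T))]=J(t,s,x;\pi^{\ast}).
\end{equation*}
Combined with part (1), the chain $J(t,s,x;\pi^{\ast})\leq V(t,s,x)\leq H(t,s,x)=J(t,s,x;\pi^{\ast})$ forces equality throughout, proving $H=V$ and optimality of $\pi^{\ast}$.

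The only genuine subtlety lies in the interpretation of \eqref{4}: passing from the local-martingale identity
\begin{equation*}
H(u,S(u),X^{\pi}(u))-H(t,s,x)-\int_{t}^{u}\bigl(\partial_{t}+A^{\pi(v)}\bigr)H(v,S(v),X^{\pi}(v))\,dv
\end{equation*}
to the expectation identity requires suitable localisation and integrability for the Brownian and compensated-Poisson parts of the dynamics \eqref{5}--\eqref{1}, which is non-trivial for general $H$ and for admissible controls with only the mild regularity of Definition~\ref{11}. However, this burden is built into membership in $\mathcal{D}$ by construction, so for the present verification statement it need not be re-verified; the work will reappear, with concrete estimates, once a specific $H$ is constructed in Section~3.
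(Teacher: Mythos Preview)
Your proof is correct and is exactly the classical verification argument the paper invokes; the paper's own proof consists of the single sentence ``The proof is classical and follows directly from the Dynkin formula in \eqref{4}.'' Your observation that \eqref{4} must be read with the full space--time generator (i.e.\ including $\partial_{t}$) is the only point worth flagging, and you handle it correctly.
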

\begin{proof}
The proof is classical and follows directly from the Dynkin formula in \eqref{4}.
\end{proof}

\section{Optimal Control and Value Function for a Logarithmic Utility}

In this section we solve the optimization problem in the case of a logarithmic utility, i.e. when the utility function in \eqref{21} is $U(x)=\log(x)$. Specifically, we find an explicit solution for the HJB equation by means of a logarithmic transform. First, we reduce the fully nonlinear HJB equation to a linear parabolic integro-differential equation for which, under certain assumptions, the existence of a regular solution can be proven. By applying the Verification Theorem of the previous section, we prove it to be equal to the value function of the original maximization problem (Theorem \ref{36}). We also state the existence and give a representation of an optimal strategy, which is shown to solve an integral equation. 
\subsection{Optimal strategy}
By the properties of logarithmic utility, it holds that, if an optimal strategy $\pi^*$ exists, it takes the form 
\begin{equation}\label{fact}
\pi^*(u)=\bar\pi^*(u) X(u-),
\end{equation}
where $\bar\pi^*$ is a predictable process that can be implicitly defined in terms of the semimartingale characteristics of $S^{t,s}$, i.e. on the local behavior of the price process (see \cite[Theorem 3.1]{goll_kallsen}). 
%\begin{definition}[Normalized admissible strategy]\label{31}
%We call any predictable process $\bar\pi$ as in Definition \ref{11} a normalized admissible strategy, or simply normalized strategy.
%\end{definition}
This implies that we may explicitly characterize the strategies for which the wealth process $X$ is positive. In fact, for an admissible strategy $\pi(u):=\bar\pi(u) X(u-)$ we can rewrite \eqref{1} as
\begin{equation}\label{22}
dX(u)  = X(u-)\,\bar\pi(u) \,dS(u).
\end{equation}
%This is related also to Remark \ref{14}... 
This allows us to have for general $\bar\pi$ an explicit formula for $X$, since it takes the form of a stochastic exponential (cf. \cite[Section 8.4]{MR2042661}). By It\^o's formula, 
$$
X(u)=x\cdot e^{\bar\pi(u) S(u)-\frac{1}2 \int_t^u \sigma^2(v)\bar\pi^2(v) \,dv} \prod_{t<v\leq u} (1+\bar\pi(v)\Delta S(v)) e^{-\Delta S(v)}, \quad \bP\mbox{-a.s.}
$$
\begin{prop}[Positivity of the portfolio value]
If $\pi(u)=\bar\pi(u) X(u-)$, it holds $X^{t,s,x;\pi}(u)>0$, $\bP$-a.s., $\forall u\in[t,T]$, if and only if 
$$
\bar\pi(u)\psi(u) y>-1\quad \bP\mbox{-a.s.}, \ \nu\mbox{-a.e.}\ y\in\bR, \ \mbox{for all}\ u\in[t,T].
$$ 
\end{prop}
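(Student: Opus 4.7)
The plan is to exploit the stochastic exponential form of $X$ already displayed just above the statement, so that positivity of $X(u)$ reduces to positivity of each factor in the jump product. Since the exponential factor $e^{\bar\pi(u)S(u) - \frac12\int_t^u\sigma^2(v)\bar\pi^2(v)\,dv}$ is strictly positive $\bP$-a.s., and each $e^{-\Delta S(v)}$ is strictly positive, one has $X^{t,s,x;\pi}(u) > 0$ for all $u \in [t,T]$, $\bP$-a.s., if and only if every factor $1 + \bar\pi(v)\Delta S(v)$ for $v \in (t,T]$ is strictly positive $\bP$-a.s. (note that $X$ could otherwise change sign across a single jump and, unlike a diffusion, recover via a second negative jump; the statement asks for strict positivity at every time, so we need strict positivity at every jump).

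Next I would translate the jumps of $S$ into jumps of $L$. From \eqref{5} the continuous finite-variation part $-\lambda S$ contributes nothing to the jumps, so $\Delta S(v) = \psi(v)\Delta L(v)$, and the jumps of $L$ are described by the Poisson random measure $N(dy,dv)$ with compensator $\nu(dy)\,dv$. The condition ``$1 + \bar\pi(v)\psi(v)\Delta L(v) > 0$ for every jump time $v \in (t,T]$, $\bP$-a.s.'' is thus equivalent to
\[
N\!\left(\{(y,v)\in\bR\times(t,T] : 1+\bar\pi(v)\psi(v)y \leq 0\}\right) = 0, \qquad \bP\text{-a.s.}
\]

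The key step, and the one I consider the main technical point, is the equivalence between this no-jump condition and the pointwise inequality $\bar\pi(u)\psi(u)y > -1$ $\bP$-a.s., $\nu$-a.e.\ $y\in\bR$, for every $u\in[t,T]$. The ``if'' direction is immediate: if the pointwise inequality holds, then the bad set above has $(\bP\otimes du\otimes\nu)$-measure zero, and hence $\bE[N(\cdot)] = 0$ by the compensator formula, so the exceptional set carries no jumps. For the converse, suppose the pointwise inequality fails on a predictable set $B \subset \Omega\times[t,T]\times\bR$ of positive $\bP\otimes du\otimes\nu$-measure; by Fubini one can assume $B$ is essentially of product form with a predictable set $B_\omega$ of positive $du\otimes\nu$-measure on a set of $\omega$ of positive $\bP$-measure. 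On this set the compensator of $N(B_\omega)$ is strictly positive, hence $\bP(N(B_\omega)\geq 1) > 0$, producing with positive probability a jump time $v$ at which $1+\bar\pi(v)\psi(v)\Delta L(v) \leq 0$, contradicting the positivity of $X$.

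Cleaning up, I would present the argument as: (i) read off positivity from the stochastic exponential; (ii) identify $\Delta S = \psi \Delta L$ and rewrite the jump condition via $N$; (iii) apply the compensator formula in both directions to obtain the desired equivalence. No new estimates are needed; the only subtlety is the measurability/predictability bookkeeping in step (iii), which is standard once one recognizes that $\bar\pi$ is predictable and $\psi$ is deterministic and continuous.
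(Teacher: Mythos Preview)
Your proposal is correct and follows essentially the same route as the paper's own proof, which is much terser: the paper simply notes that the jump measure of $S$ has the same support as $\nu$, so $\{\bar\pi(u)\Delta S(u)>-1\}$ is equivalent to $\{\bar\pi(u)\Delta L(u)>-1\}$, which in turn is equivalent to the stated $\nu$-a.e.\ condition. Your step (iii) via the compensator formula is exactly the content the paper leaves implicit.

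One small slip to fix: from \eqref{5} you have $\Delta S(v)=\Delta L(v)$, not $\Delta S(v)=\psi(v)\Delta L(v)$; it is $\Delta L(v)$ itself that equals $\psi(v)\,y$ for the jump mark $y$ of $N$. Your final condition $1+\bar\pi(v)\psi(v)y>0$ is nonetheless correct, so this is purely notational and does not affect the argument.
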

%This characterization can be translated in terms of the BDLP $L$. Speaking of which, we can assume without loss of generality the following.
\begin{proof}
From \eqref{1}, if the jump measure at time $t$ of $S$, regarded as an additive process, is denoted by $\nu_t^S$, then it holds that supp $\nu_t^S=$ supp $\nu$. Then, $\{\bar\pi(u)\Delta S(u)>-1$ $\bP$-a.s., $\forall u\leq T$\} if and only if \{$\bar\pi(u)\Delta L(u)>-1$ $\bP$-a.s., $\forall u\leq T$\}, which is equivalent to \{$\bar\pi(u)\psi(u) y>-1$ $\bP$-a.s., $\nu$-a.e. $y\in\bR$, $\forall u\leq T$\}.
\end{proof}
Therefore, the portfolio is positive for each strategy of the form $\pi(u)=\bar\pi(u) X(u-)$ such that $\bar\pi$ takes values in a suitably chosen set. 
%This motivates us to give the following definition of admissible controls.
This sums up in the following characterization of admissible controls. 
\begin{definition}\label{18bis}
Let $\Pi=\Pi_{\nu,\psi}$ be a compact set such that 
$$
\Pi_{\nu,\psi}\subset\widehat\Pi_{\nu,\psi}:=\{\bar\pi\in\bR \mbox{ s.t. }\bar\pi \psi y>-1 \mbox{ for each } y\in[m,M] \mbox{ and } \psi\in[\psi_1,\psi_2] \}. 
$$
A predictable process $\bar\pi:[t,T]\to\Pi$ is called normalized admissible strategy if there exists an admissible strategy $\pi\in\mathcal{A}([t,T])$ such that%, by an abuse of notation,
$$
\pi(u)=\bar\pi(u) X(u-)
$$ 
for all $u\in[t,T]$, $\bP$-a.s.
\end{definition}
\begin{remark}\label{18}
According to the support of the measure $\nu$, the set $\widehat\Pi:=\widehat\Pi_{\nu,\psi}$ consists of 
\begin{description}
\item[case A] $\widehat\Pi=(-\frac1{M\psi_2},-\frac1{m\psi_2})$ if $m<0$ and $M>0$ (both positive and negative jumps),
\item[case B]$\widehat\Pi=(-\frac1{M\psi_2},+\infty)$ if $0\leq m\leq M$ and $M\not=0$ (only positive jumps),
\item[case C] $\widehat\Pi=(-\infty,-\frac1{m\psi_2})$ if $m\leq M\leq 0$ and $m\not=0$ (only negative jumps),
\item[case D] $\widehat\Pi=\bR$ if $m=M=0$ (no jumps),
\end{description}
by consistently interpreting where necessary: for instance if $M=+\infty$, $(-\frac1{M\psi_2},+\infty):=[0,+\infty)$. Observe that in all cases we have $0\in\widehat\Pi$. If $m=-\infty$ and $M=\infty$, then $\widehat\Pi=\{0\}$ which makes the problem trivial. Therefore, in order to get rid of this situation, we may assume from now on that at least one between $m$ and $M$ is finite.
\end{remark}
%%%%%%%%%%%
%%%%%%%%%%%%%%%%%%
\iffalse
Now we can formulate the stochastic optimal control problem rigorously and define the set of admissible trading strategies.
\begin{definition}[Admissible controls]\label{11}  
We call $\mathcal{A}([t,T])$ the set of admissible controls, which are defined as predictable processes $\pi$ on $[t,T]$ (in the sense of \cite[Definition 3.3]{MR637061}) such that the following conditions hold:
\begin{enumerate}
\item Equations \eqref{5} and \eqref{1} admit a unique strong solution $(S,X)=(S^{t,s,x},$ $X^{t,s,x;\pi})$ for each initial condition $S(t)=s,X(t)=x$, with $t\in[0,T)$ and $(s,x)\in\bR\times\bR^+$.
%\item $X_u=X_u^{t,s,x;\pi}>0,\quad\bP$-a.s. for $t\leq u\leq T$.
\item For all $t\in[0,T)$, every $\pi\in\mathcal{A}([t,T])$ has the form%, by an abuse of notation,
$$
\pi(u)=\pi(u, X(u-))=\bar\pi(u) X(u-)
$$ 
for $u\in[t,T]$. Here $\bar\pi$ is a predictable process on $[t,T]$ taking values in a given compact subset of $\widehat\Pi=\widehat\Pi_{\nu,\psi}$, which from now on we denote by $\Pi=\Pi_{\nu,\psi}$.
\end{enumerate}
\end{definition}
\fi
%%%%%%%%%%%%%%
%%%%%%%%%%%%%%%%%%%

\begin{remark}\label{23}
The set $\widehat\Pi=\widehat\Pi_{\nu,\psi}$ is defined according to the jump features of the process $L$  (cf. the analogous notion of \emph{neutral constraints} in \cite[Section 2]{goll_kallsen2003}). On the other hand, we have a certain freedom in the definition of $\Pi$, as we only require that it is a compact subset of $\widehat\Pi$. Intuitively, we are restricting the range of possible trading strategies so that the instantaneous portfolio value can not jump to (or below) zero for any admissible (normalized) position $\bar\pi$.
\end{remark}

In order to find a solution to the HJB equation, we make the following ansatz:
$$
H(t,s,x)=U(x\,e^{g(t,s)})=\log(x)+g(t,s).
$$ 
This transform, which has been introduced in \cite{gallana} for the specific case of Gaussian processes, is analogous to the one employed in \cite{MR1922696}, with the main difference due to the arithmetic nature of our spot price dynamics.
%%%%%%%
\iffalse
Hence
$$
A^{\pi}H(t,s,x)=(b-\lambda s) \left(\frac\pi x+ g_s(t,s)\right)-\frac12\,\sigma^2\frac{\pi^2}{x^2}+\frac12 \sigma^2 g_{ss} (t,s)
$$
$$
+\int_\bR \left[\log\left(1+\frac\pi x y\right)-\frac\pi x y+g(t,s+y)-g(t,s)-g_s(t,s) y\right]\, \nu(dy).
$$
\fi
%%%%%%%%%%%
We start from the static maximization problem, namely the maximization of the generalized Hamiltonian over all possible values of the strategies $\pi$. 
As usual in this approach (see the discussion in \cite{MR2179357}) a candidate optimal policy $\pi\in\mathcal{A}[t,T]$ can be found by computing $\pi^*(t,s,x)=\arg\max_{\pi} A^{\pi}H(t,s,x)$ and defining $\pi^*(t):=\pi^*(t,S(t-),X(t-))$. It is common to refer to the deterministic function $\pi^*(t,s,x)$ as the optimal Markov control policy. Since we are in the case of logarithmic utility (cf. \eqref{fact}), we can write $\pi^*(t,s,x)=\bar\pi^*(t,s)\cdot x$. Simple computations yield
\begin{align}\label{27}
A^{\pi}H(t,s,x)&=(b(t)-\lambda s) \left(\frac\pi x+ g_s(t,s)\right)-\frac12\,\sigma(t)^2\frac{\pi^2}{x^2}+\frac12 \sigma(t)^2 g_{ss} (t,s)
\\ \nonumber
&+\int_{\bR} \Bigl[\log(x+\pi \psi(t) y)+g(t,s+y)-\log(x)-g(t,s)%\mathds{1}_{|y|\leq 1}\, 
-\left(\frac{\pi \psi(t)}x+ g_s(t,s)\right) y\Bigr]\nu(dy).
\end{align}
%\begin{remark}\label{14} 
%Let us observe that the integration of $\log(x+\pi \psi(t) y)$ is well defined if and only if $x+\pi \psi(t) y>0$  for any $t\in[0,T]$ and $\nu$-almost every $y\in\bR$. Let us consider, for instance, the case $m<0<M$ (which corresponds to the case of both positive and negative jumps in the asset value). Then, it must hold that $\pi\in(-\frac x{M \psi_2}, -\frac x{m \psi_2})$, for each positive value of $x$. So, in this sense, we mean that our assumptions in Definition \ref{11} turn out to be necessary. 
%\end{remark}
%Recall that all the admissible strategies have the form $\pi(t)=\bar\pi(t)X(t-)$ and are defined so that each admissible normalized strategy $\bar\pi$ takes values in the same compact set $\Pi$ (cf. Definition \ref{11}). 
Neglecting the terms which do not depend on $\pi$, we have
\begin{align*}
\arg\max_\pi A^{\pi}H(t,s,x)&=\arg\max_\pi \ (b(t)-\lambda s) \frac\pi x-\frac12\,\sigma(t)^2\frac{\pi^2}{x^2}+\int_\bR \left[\log\left(1+\frac{\pi \psi(t)}x y\right)-\frac{\pi \psi(t)}x y\right]\, \nu(dy)\\
&=x\cdot\arg\max_{\bar\pi\in\Pi} f(\bar\pi;t,s),
\end{align*}
where the function $f:\Pi\times[0,T]\times\bR\to\bR$ is defined as
\begin{equation}\label{25}
f(\bar\pi;t,s):=(b(t)-\lambda s) \bar\pi-\frac12\,\sigma(t)^2 \bar\pi^2+\int_\bR \left[\log\left(1+\bar\pi \psi(t) y\right)-\bar\pi \psi(t) y\right]\, \nu(dy).
\end{equation}
%Let us observe that this procedure provides a Markov control policy of the form $\pi(t,s,x)=\bar\pi(s)\cdot x$. 
The expression to be maximized with respect to the variable $\bar\pi$ reads as the sum of three terms: a linear term, a strictly concave function and the integral of a strictly concave function. Therefore we are maximizing an overall strictly concave function on a compact set $\Pi$. This ensures the existence of a unique maximizer 
%of the form
%\begin{equation}\label{35}
%\pi^*=\pi^*(t,s,x):=\bar\pi^*(t,s)\cdot x,
%\end{equation}
%where
\begin{equation}\label{24}
\bar\pi^*=\bar\pi^*(t,s):=\arg\max_{\bar\pi\in\Pi} f(\bar\pi;t,s).
\end{equation}
%\begin{remark}\label{14}
%Since $\pi^*(t,s,x)$ takes this specific functional form, our assumptions in Definition \ref{11} are natural. 
%\end{remark}
\begin{remark}
By adopting this notation we are revealing in advance that $\pi^*$ corresponds to an optimal strategy, but we have not given a proof yet. The optimality of this candidate will be derived in Theorem \ref{36} by applying the Verification Theorem of the previous section.
\end{remark}
Recalling that $\pi^*(t,s,x)=\bar\pi^*(t,s)\cdot x$,
% in view of ...%\eqref{35}, 
we can write the HJB equation in reduced form 
$$
\frac{\partial}{\partial t} H(t,s,x)+A^{\pi^*(t,s,x)} H(t,s,x)= 0,
$$ 
that is, consistently with our guess $H(t,s,x)=\log(x)+g(t,s)$,
\begin{align*}
g_t(t&,s)+(b(t)-\lambda s) \left(\bar\pi^*(t,s)+ g_s(t,s)\right)-\frac12\,\sigma(t)^2 \bar\pi^*(t,s)^2
+\frac12 \sigma(t)^2 g_{ss} (t,s)
\\
+&\int_\bR \Bigl[\log\left(1+\bar\pi^*(t,s) \psi(t) y\right)-\bar\pi^*(t,s) \psi(t) y+g(t,s+ \psi(t)y)-g(t,s)-g_s(t,s)  \psi(t) y\Bigr]\, \nu(dy)=0,
\end{align*}
After the terms with $g$ are collected, the equation reads
\begin{align}\nonumber
g_t(t,s)&+(b(t)-\lambda s) g_s(t,s)+\frac12 \sigma(t)^2 g_{ss} (t,s)\\ \label{97}
&+\int_\bR \left[g(t,s+\psi(t) y)-g(t,s)-g_s(t,s) \psi(t) y\right]\, \nu(dy)=-f^*(t,s),
\end{align}
with terminal condition $g(T,s)=0$, 
%Recall that $\Pi$ is a compact subset of $\widehat\Pi$. Thus, by construction we can write
where we define $f^*:[0,T]\times\bR\to\bR$ by
\begin{equation}\label{98}
f^*(t,s):=f(\bar\pi^*(t,s);t,s).
\end{equation}

If we interpret \eqref{97} as an equation in the only unknown $g$, it takes the form of a linear parabolic partial integro-differential equation (PIDE). 
%Therefore it is a linear parabolic partial differential equation involving linear integral terms as well. 
The analysis of such an equation is typically a delicate task
%especially in the case that the derivatives of the unknown appear in the integral operator (see the compensating term $g_s(t,s) y$). 
and, to the best of our knowledge, there are not many existence results for \emph{regular} solutions in literature for this class of problems (see \cite{MR673169,MR2211710,MR2418232,MR1650147} and references therein). Under certain assumptions, we are able to prove the existence and a probabilistic representation formula: we will do this in Propositions \ref{z} and \ref{41}. %However, one might proceed alternatively, by assuming the existence of a regular (in a sense to be specified) solution for \eqref{38}, in order to obtain a Feynman-Ka\v{c} formula for it. Thus, potentially, one might tackle the problem by starting from its probabilitic representation (see Remark \ref{90}). 
It is crucial noticing that in the logarithmic case we can solve the HJB equation directly by disentangling the problem of finding $\bar\pi^*(t,s)$ and the function $g(t,s)$. This has been verified by the authors not to be the case for a general CARA or CRRA utility, which makes the issue of solving the HJB equation more difficult as well as interesting (see also \cite{aase}). Nevertheless, an approximation of the HJB equation has been proposed in an analogous stochastic framework for CRRA utility in \cite{ascheberg,imma_reno}. 

In order to solve the PIDE, we first have to study the properties of the strategy defined implicitly in \eqref{24}. A straightforward application of the dominated convergence theorem and the finiteness of the second moment of $L$ assure that $f(\cdot;t,s)$ is differentiable for any $t\in[0,T]$ and $s\in\bR$. Therefore, if the maximizer $\bar\pi^*=\bar\pi^*(t,s)$ is a internal point, it is the unique solution of the first order condition
\begin{equation}\label{7}
f'(\bar\pi^*;t,s)=b(t)-\lambda s -\sigma(t)^2\bar\pi^*-\int_\bR 
\frac{\bar\pi^* \psi(t)^2 y^2}{1+\bar\pi^* \psi(t) y}\, \nu(dy)=0.
\end{equation}
We remark that this is the explicit deterministic counterpart of the third condition appearing in \cite[Theorem 3.1]{goll_kallsen}.
%%%%%%%%
\iffalse
The second-order derivative is
$$
f''(\bar\pi;s)=-\sigma^2-\int_\bR 
\frac{y^2}{(1+\bar\pi y)^2}\, \nu(dy)<0.
$$
\fi
%%%%%%%%%%%%

In the two upcoming propositions, we sum up the properties of the candidate (normalized) optimal policy and of the function $f^*$ appearing in the HJB equation.
\begin{prop}\label{19}
Assume that $\Pi$ is a compact interval containing $0$. % and the long-term mean $\frac b\lambda$. 
The static optimization problem in \eqref{25} and \eqref{24} admits a unique maximizer $\bar\pi^*:[0,T]\times\bR\to\Pi$ with the following properties:
\begin{enumerate}
\item For each $t\in[0,T]$, it holds $\bar\pi^*\left(t,\frac{b(t)}\lambda\right)=0$.
\item The map $\bar\pi^*:[0,T]\times\bR\to\Pi$ is continuous and then, in particular, measurable and bounded.
\item For each $t\in[0,T]$, there exists an open interval $\Sigma(t)$ such that the restrictions $\bar\pi^*(t,\cdot)|_{\Sigma(t)}$ are strictly decreasing and smooth, where
\begin{description}
\item[case A] $\Sigma(t)=(s_1(t),s_2(t))$,
\item[case B] $\Sigma(t)=(-\infty,s_2(t))$,
\item[case C] $\Sigma(t)=(s_1(t),+\infty)$,
\item[case D] $\Sigma(t)\equiv\bR$.
\end{description}
Moreover, for any $t\in[0,T]$ the derivatives of $\bar\pi^*(t,\cdot)|_{\Sigma(t)}$ can be extended to $\overline\Sigma(t)$. 
\item For each $t\in[0,T]$, the map $\bar\pi^*(t,\cdot):\bR\to\Pi$ is decreasing on the whole real line and, in particular,
\begin{description}
\item[case A] there exist $s_1,s_2$ such that, for any $t\in[0,T]$, we have $-\infty< s_1\leq 
\frac{b(t)}\lambda \leq s_2<\infty$ and
$$
\bar\pi^*(t,s)\equiv\begin{cases} \max\Pi, & \mbox{if }s\leq s_1, \\ 
\min\Pi, & \mbox{if }s\geq s_2.\end{cases}
$$
\item[case B] there exists $s_2$ such that, for any $t\in[0,T]$, we have $\frac{b(t)}\lambda\leq s_2<\infty$ and $\bar\pi^*(t,s)\equiv \min \Pi$ for $s\geq s_2$  .

\item[case C] there exists $s_1$ such that, for any $t\in[0,T]$, we have $-\infty< s_1\leq \frac{b(t)}\lambda$ and $\bar\pi^*(t,s)\equiv \max\Pi$ for $ s\leq s_1$.

\item[case D] we can write down the maximizer explicitly as 
$$
\bar\pi^*(t,s)=\frac{b(t)-\lambda s}{\sigma(t)^2}, 
$$
for each $t$ and $s$ such that the above quantity is well-defined and belongs to $\Pi$.
\end{description}
\item In particular, for all $t\in[0,T]$ the maps $\bar\pi^*(t,\cdot)$ are Lipschitz continuous uniformly in $t\in[0,T]$ (i.e. with Lipschitz constant $L$ independent of $t$).
\end{enumerate}
\end{prop}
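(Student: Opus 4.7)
The plan rests on rewriting the first-order condition \eqref{7} in the form $b(t) - \lambda s = \phi(t, \bar\pi^*)$, where
\[
\phi(t, \bar\pi) := \sigma(t)^2\, \bar\pi + \int_\bR \frac{\bar\pi\, \psi(t)^2 y^2}{1+\bar\pi\, \psi(t)\, y}\,\nu(dy).
\]
Differentiation under the integral (legitimate by \eqref{variance} since on $\Pi \subset \widehat\Pi$ the factor $1+\bar\pi\,\psi(t)\,y$ stays bounded away from zero on the support of $\nu$) shows that $\phi(t, \cdot)$ is smooth on $\widehat\Pi$ with $\phi(t, 0) = 0$ and
\[
\partial_{\bar\pi}\phi(t, \bar\pi) = \sigma(t)^2 + \int_\bR \frac{\psi(t)^2 y^2}{(1+\bar\pi\,\psi(t)\, y)^2}\,\nu(dy) > 0,
\]
the strict positivity coming from the standing assumption that $\sigma$ and $\psi$ never vanish simultaneously. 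Hence $\phi(t, \cdot)$ is a smooth strictly increasing bijection of $\widehat\Pi$ onto an open interval, and the whole proposition will be extracted from this monotone reformulation of the FOC.

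Item (1) is then immediate: at $s = b(t)/\lambda$ the FOC becomes $\phi(t, \bar\pi^*) = 0$, whose unique zero $\bar\pi^* = 0$ belongs to $\Pi$ by hypothesis. Item (2) follows from Berge's maximum theorem applied to the compact constant correspondence $\Pi$ and the jointly continuous, strictly concave $f$, which yields a single-valued continuous argmax. For item (3), set
\[
\Sigma(t) := \{\, s \in \bR : \bar\pi^*(t, s) \in \mathrm{int}\,\Pi \,\}.
\]
On $\Sigma(t)$ the FOC holds and the implicit function theorem (applicable thanks to $\partial_{\bar\pi}^2 f = -\partial_{\bar\pi}\phi < 0$) yields $C^\infty$-smoothness of $\bar\pi^*(t, \cdot)$ together with
\[
\partial_s \bar\pi^*(t, s) = -\frac{\lambda}{\partial_{\bar\pi}\phi(t, \bar\pi^*(t, s))} < 0,
\]
i.e.\ strict monotonicity. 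Since $\bar\pi^*(t, \cdot)$ is continuous on $\bR$, the preimage $\Sigma(t)$ of the open interval $\mathrm{int}\,\Pi$ is an open interval, whose topological form in each of the four cases is inherited from that of $\widehat\Pi$; derivatives extend continuously to $\overline{\Sigma(t)}$ because $\phi$ and its derivatives are continuous on $\overline{\Pi} \subset \widehat\Pi$.

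Item (4) is handled case-by-case by introducing the saturation thresholds $s_1(t) := (b(t) - \phi(t, \max\Pi))/\lambda$ and $s_2(t) := (b(t) - \phi(t, \min\Pi))/\lambda$. Strict concavity of $f(\cdot; t, s)$ forces the constrained maximizer to sit at $\max\Pi$ whenever $b(t) - \lambda s \geq \phi(t, \max\Pi)$, that is $s \leq s_1(t)$, and at $\min\Pi$ when $s \geq s_2(t)$; on $(s_1(t), s_2(t))$ we are in the interior regime of item (3). The inequalities $\phi(t, \min\Pi) \leq 0 \leq \phi(t, \max\Pi)$, valid because $0 \in \Pi$ and $\phi(t, \cdot)$ is increasing with $\phi(t, 0) = 0$, give $s_1(t) \leq b(t)/\lambda \leq s_2(t)$. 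Uniform $t$-independent constants are obtained by $s_1 := \inf_{t \in [0,T]} s_1(t)$ and $s_2 := \sup_{t \in [0, T]} s_2(t)$, both finite by continuity of $b, \sigma, \psi$ on the compact interval $[0, T]$; cases B, C reflect the one-sided unboundedness of $\widehat\Pi$, so that only one of the saturations is of modelling interest, whereas case D is fully explicit since the integral in $\phi$ vanishes and FOC becomes a linear equation. Item (5) finally follows from the bound
\[
|\partial_s \bar\pi^*(t, s)| \leq \frac{\lambda}{\inf_{t \in [0, T],\, \bar\pi \in \Pi} \partial_{\bar\pi}\phi(t, \bar\pi)} =: L,
\]
valid on each $\Sigma(t)$; the denominator is strictly positive and finite by compactness of $\Pi$ inside $\widehat\Pi$ together with continuity of $\sigma, \psi$. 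Combining this bound with constancy of $\bar\pi^*(t, \cdot)$ outside $\Sigma(t)$ and continuity at the thresholds (item (2)) delivers the claimed uniform Lipschitz estimate. The main obstacle here is organisational rather than analytical: one has to keep track of the four case distinctions and ensure that every threshold and every bound can be taken independent of $t$; once the reformulation via the monotone $\phi(t, \cdot)$ is in place, each assertion boils down to a short monotonicity argument.
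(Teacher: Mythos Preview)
Your proof is correct and follows essentially the same route as the paper: the paper inverts the first-order condition to write $s^*(t,\bar\pi) = (b(t)-\phi(t,\bar\pi))/\lambda$ and applies the Inverse Function Theorem, which is the mirror image of your reformulation via the strictly increasing map $\phi(t,\cdot)$ together with the implicit function theorem. For continuity the paper spells out the sequential compactness argument underlying Berge's theorem, and the uniform thresholds and Lipschitz bound are obtained exactly as you do, by bounding the derivative of the inverse on the compact set $[0,T]\times\Pi$.
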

\begin{proof}
See Appendix B.
\end{proof}
\begin{remark}
In the notation of Proposition \ref{19}, for {\bf case A,B,C}, we can write more explicitly
$$
s_1(t)=\lim_{\bar\pi\to\bar\pi_{2}} s^*(t,\bar\pi)=\frac1\lambda \left(b(t)-\sigma(t)^2\bar\pi_{2}-\int_\bR 
\frac{\bar\pi_{2} \psi(t)^2 y^2}{1+\bar\pi_{2} \psi(t) y}\, \nu(dy)\right),
$$ 
$$
s_2(t)=\lim_{\bar\pi\to\bar\pi_{1}} s^*(t,\bar\pi)=\frac1\lambda \left(b(t)-\sigma(t)^2\bar\pi_{1}-\int_\bR 
\frac{\bar\pi_{1} \psi(t)^2 y^2}{1+\bar\pi_{1} \psi(t) y}\, \nu(dy)\right),
$$
and
$$
s_1=\min_{t\in[0,T]} s_1(t),\qquad s_2=\max_{t\in[0,T]} s_2(t).
$$
\end{remark}
\begin{remark}
In order to interpret the results of Proposition \ref{19}, let us suppose for example to be in {\bf case A}. Recall that in this case we can have both upward and downward jumps in prices (see Remark \ref{18}) and that the normalized position $\bar\pi(t,s(t))$ can take also negative values, so that short-selling is allowed. At each time $t$, a trader who executes optimally takes a net zero position if price $s(t)$ reaches the (time-dependent) ``equilibrium'' level $b(t)/\lambda$. Furtherly, he goes long if price goes above this level and, accordingly, he goes short when the price is below. The trading allocation increases (with sign) as price decreases and vice versa. Also, $s_1$ (resp. $s_2$) consists of a lower (resp. upper) price threshold at which the trader, independently of the time instant, takes the longest (resp. shortest) position possible according to the trading constraints prescribed in $\Pi$.
\end{remark}
%Since $f(\bar\pi;t,s)$ is convex in $s$ and $t$???, we can apply Danskin's Theorem \cite{MR0210456,MR0228260}, which gives that $f^*(s)$ is convex and differentiable. By combining it with our previous findings, we derive the following result. We formulate it in \emph{case A}, as the other cases are completely analogous (cf. Proposition \ref{19}).
%\begin{prop}
%The function $f^*:\bR\to\bR$ in \eqref{98} is convex in $\bR$ and smooth for each $s\in(s_1,s_2)$, with $(f^*)^{(n)}(s)=-\lambda(\bar\pi^*(s))^{(n-1)}$ for all $n\in\bN$.
%\end{prop}
%\begin{proof}
%Dankin's theorem gives the convexity and the expression for the first order derivative. Since $(f^*)'(s)=-\lambda\bar\pi^*(s)$ and $\bar\pi^*(s)$ is smooth in $(s_1,s_2)$ by Proposition \ref{19}, we get the statement.
%\end{proof}
\begin{prop}\label{91}
The function $f^*:[0,T]\times\bR\to\bR$ in \eqref{98} is continuously differentiable and Lipschitz continuous, with partial derivatives 
\begin{align*}
\frac{\partial}{\partial t} f^*(t,s)&= b'(t) \bar\pi^*(t,s)-\sigma(t)\sigma'(t) {\bar\pi^*(t,s)}^2-\psi(t)\psi'(t) {\bar\pi^*(t,s)}^2 
\,\int_\bR \frac{y^2}{1+\bar\pi^*(t,s) \psi(t) y}\, \nu(dy),\\
\frac{\partial}{\partial s} f^*(t,s)&=  -\lambda \bar\pi^*(t,s).
\end{align*}
Furthermore, it grows as a linear function of $s$ uniformly in $t$, i.e. 
$$
|f^*(t,s)|\leq C (1+ |s|),
$$
being $C$ dependent only on $\lambda,T,\|b\|_\infty,\|\sigma\|_\infty,\|\psi\|_\infty$.
\end{prop}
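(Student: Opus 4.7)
The plan is to recognize this as an envelope-theorem (Danskin) computation. Since $f^*(t,s)=\max_{\bar\pi\in\Pi}f(\bar\pi;t,s)$ has a unique maximizer $\bar\pi^*(t,s)$ that is continuous by Proposition \ref{19}, and since $f$ is smooth in $(t,s)$ with derivatives jointly continuous in $(\bar\pi,t,s)$, Danskin's theorem \cite{MR0210456} (already invoked in the introduction) will yield
\[
\partial_t f^*(t,s)=(\partial_t f)(\bar\pi^*(t,s);t,s),\qquad
\partial_s f^*(t,s)=(\partial_s f)(\bar\pi^*(t,s);t,s).
\]
So the proof reduces to computing the two partial derivatives of $f$ in closed form, checking the joint continuity needed for Danskin, and finally reading off the Lipschitz and linear-growth bounds from the explicit formulas.

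The $s$-derivative is immediate from \eqref{25}: $\partial_s f(\bar\pi;t,s)=-\lambda\bar\pi$. For the $t$-derivative the drift and diffusion parts differentiate trivially, while the jump integral requires bringing $\partial_t$ through by dominated convergence. Differentiating the integrand and combining the two resulting terms over a common denominator produces $-\bar\pi^2\psi(t)\psi'(t)y^2/(1+\bar\pi\psi(t)y)$, which matches the formula in the statement. The dominating bound is the one delicate step: compactness of $\Pi\subset\widehat\Pi$ together with $\psi(t)\in[\psi_1,\psi_2]$ ensures $|1+\bar\pi\psi(t)y|\ge\eta>0$ uniformly for $y$ in any bounded set, so that the integrand is controlled by $Cy^2$ near the origin (integrable by the Lévy-measure property $\int(1\wedge y^2)\,\nu(dy)<\infty$) and by $C|y|$ for $|y|\ge1$ (integrable thanks to the second-moment assumption \eqref{variance}). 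The bound is uniform in $(\bar\pi,t)$ over compact neighborhoods, which is exactly what the hypotheses of Danskin's theorem demand.

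Joint continuity of the derivative expressions in $(\bar\pi,t,s)$ then follows from a standard continuity-under-the-integral argument with the same dominating function, and combined with the continuity of $\bar\pi^*$ asserted in Proposition \ref{19}(2) this gives $f^*\in C^1([0,T]\times\bR)$ with exactly the stated formulas.

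For the quantitative estimates I would exploit that $\bar\pi^*$ takes values in the compact set $\Pi$ and that all coefficients $b,\sigma,\psi$ as well as $b',\sigma',\psi'$ are bounded on $[0,T]$; plugging into the explicit formulas shows both $\partial_s f^*$ and $\partial_t f^*$ are uniformly bounded, whence $f^*$ is Lipschitz on $[0,T]\times\bR$. For the linear growth in $s$, since $0\in\Pi$ one has $f^*(t,s)\ge f(0;t,s)=0$, while the upper bound uses that the jump integral in \eqref{25} is non-positive (by the concavity inequality $\log(1+z)\le z$) together with $|\bar\pi^*|\le\max_{\Pi}|\cdot|$, giving $f^*(t,s)\le(\|b\|_\infty+\lambda|s|)\max_\Pi|\cdot|+\tfrac12\|\sigma\|_\infty^2\max_\Pi\bar\pi^2\le C(1+|s|)$ with $C$ depending only on the advertised data. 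The main obstacle throughout is the uniform-in-$\bar\pi$ domination for the jump integral; once this is in place, the remainder is a routine packaging of continuous functions on compact sets.
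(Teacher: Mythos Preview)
Your proposal is correct and follows essentially the same route as the paper: both recognize $f^*=\max_{\bar\pi\in\Pi}f(\bar\pi;t,s)$ and invoke Danskin's theorem \cite{MR0210456} to obtain $\partial_t f^*=(\partial_t f)(\bar\pi^*;t,s)$ and $\partial_s f^*=(\partial_s f)(\bar\pi^*;t,s)$, then conclude $C^1$-regularity and Lipschitz continuity from the boundedness of these expressions, and finally derive the linear growth from the definition of $f$ together with the boundedness of $\bar\pi^*$. Your write-up is in fact more detailed than the paper's, which merely states that $f(\bar\pi;\cdot,\cdot)$ is $C^1$ because $b,\sigma,\psi$ are, applies Danskin, and records the conclusions; you spell out the dominated-convergence justification for differentiating the jump integral in $t$, the uniform lower bound $|1+\bar\pi\psi(t)y|\ge\eta>0$ coming from compactness of $\Pi\subset\widehat\Pi$, and the explicit one-sided estimates ($f^*\ge f(0;t,s)=0$ and the upper bound via $\log(1+z)\le z$) for the linear growth.
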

\begin{proof}
%Dankin's theorem gives the convexity and the expression for the first order derivative. Since $(f^*)'(s)=-\lambda\bar\pi^*(s)$ and $\bar\pi^*(s)$ is smooth in $(s_1,s_2)$ by Proposition \ref{19}, we get the statement.
Recall that by definition
$$
f(\bar\pi;t,s) = (b(t)-\lambda s) \bar\pi-\frac12\,\sigma(t)^2 \bar\pi^2+\int_\bR \left[\log\left(1+\bar\pi \psi(t) y\right)-\bar\pi \psi(t) y\right]\, \nu(dy),
$$
which is a continuously differentiable function in the variable $(t,s)$ for any $\bar\pi\in\Pi$, since $b,\sigma$ and $\psi$ are continuously differentiable.
Then, by Danskin's theorem \cite[Theorem 1]{MR0210456},
$$
f^*(t,s)=\max_{\bar\pi\in\Pi} f(\bar\pi;t,s)
$$
is differentiable with partial derivatives
\begin{align*}
\frac{\partial}{\partial t} f^*(t,s)&= \frac{\partial}{\partial t} f(\bar\pi;t,s)|_{\bar\pi=\bar\pi^*(t,s)},\\
\frac{\partial}{\partial s} f^*(t,s)&= \frac{\partial}{\partial s} f(\bar\pi;t,s)|_{\bar\pi=\bar\pi^*(t,s)}.
\end{align*}
Since they are bounded continuous functions, it follows that $f^*\in C^{1}([0,T]\times\bR)$ and Lipschitz continuous. The linear bound is direct consequence of the definition of $f$ and the boundedness of $\bar\pi^*(t,s)$.
\end{proof}
\subsection{Probabilistic representation and existence of regular solutions}
After studying the regularity properties of the forcing term of the reduced HJB equation \eqref{97}, we move on to the problem of existence of solutions. 
%The equation is
%\begin{align}
%\nonumber
%g_t(t,s)&+(b(t)-\lambda s) g_s(t,s)+\frac12 \sigma(t)^2 g_{ss} (t,s)\\ 
%&+\int_\bR \left[g(t,s+\psi(t)y)-g(t,s)-g_s(t,s) \psi(t) y\right]\, \nu(dy)=-f^*(t,s), \\ 
%g(T,s)&=0, 
%\end{align}
%where $f^*(t,s)=f(\bar\pi^*(t,s);t,s)$. 
First of all, we clarify the natural notion of classical solution for such a class of integro-differential equations. Tracing through \cite[Section 17.4]{MR3443368}, we say that a function $g:[0,T]\times\bR\to\bR$ belongs to the set $C_{\nu,\psi}^{1,2}=C_{\nu,\psi}^{1,2}([0,T)\times\bR)$, if it is once continuously differentiable in its first argument and twice continuously differentiable in its second and, furtherly, the following integrability condition holds true: for every $t\in[0,T)$ and $s\in\bR$,
\begin{equation}
\int_\bR \left|g(t,s+\psi(t)y)-g(t,s)-g_s(t,s) \psi(t) y\right|\, \nu(dy)<\infty.
\end{equation}
Then, a classical solution of the HJB equation is a function $g:[0,T]\times\bR\to\bR$ belonging to $C_{\nu,\psi}^{1,2}([0,T)\times\bR)$ and satisfying the integro-differential equation \eqref{97}.
% together with the terminal condition $g(T,s)=0$. 

We now present three results. Firstly, we recall a version of the Feynman-Ka\v{c} theorem, which gives the probabilistic representation of regular solutions. Then, we state two existence results for classical solutions: the first is valid for additive processes without diffusion part, while the second works for compound Poisson processes and uniformly non-degenerate Brownian component.

\begin{theorem}[Feynman-Ka\v{c} formula]
Assume that $g$ is a $C_{\nu,\psi}^{1,2}([0,T)\times\bR)\cap C([0,T]\times\bR)$ solution of \eqref{97}, satisfying the growth condition:
$$
\max_{t\in[0,T]} |g(t,s)|\leq K\,(1+ s^2), \qquad \mbox{for } s\in\bR. 
$$
If, moreover, there exists $\eps>0$ such that 
$$
\int_{|y|\geq1} |y|^{2+\eps}\,\nu(dy)<\infty, 
$$
then we can represent $g$ in the following Feynman-Ka\v{c} type form 
\begin{equation}\label{37}
g(t,s)=\bE\left[\int_t^T f^*(u,S^{t,s}(u)) \,du\right].
\end{equation}
%with $S=S^{t,s}$ the solution of
%\begin{align*}
 %   dS(u)  &= -\lambda S(u)\,du + dL(u),& u\in(t,T],  \\ 
%\label{8}
  %  S(t) &= s, & u=t.
%\end{align*}
\end{theorem}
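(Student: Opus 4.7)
The approach I will take is the standard It\^o-based derivation of a Feynman-Ka\v c formula: apply It\^o's formula for jump diffusions to $g(u, S^{t,s}(u))$ on a localized interval, use the PIDE \eqref{97} to replace the generator by $-f^*$, and then take expectations and let the localization parameter tend to infinity, exploiting the terminal condition $g(T,\cdot)=0$ to annihilate the boundary term.

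In more detail, I would introduce the stopping times $\tau_n := \inf\{u\geq t : |S^{t,s}(u)|\geq n\}\wedge T$. From the explicit representation \eqref{9}, a BDG-type inequality together with the assumption $\int_{|y|\geq 1}|y|^{2+\eps}\nu(dy)<\infty$ yields $\bE[\sup_{u\in[t,T]}|S^{t,s}(u)|^{2+\eps}]<\infty$, so $S^{t,s}$ is non-exploding and $\tau_n\uparrow T$ almost surely. Since $g\in C^{1,2}_{\nu,\psi}([0,T)\times\bR)\cap C([0,T]\times\bR)$, It\^o's formula in the sense of \cite[Chapter 17]{MR3443368} applied on $[t,\tau_n]$ gives
\begin{align*}
g(\tau_n,S^{t,s}(\tau_n))-g(t,s)=\int_t^{\tau_n}\mathcal{L}g(u,S^{t,s}(u-))\,du + M^n_{\tau_n}-M^n_t,
\end{align*}
where $\mathcal{L}$ denotes the integro-differential operator on the left-hand side of \eqref{97} and $M^n$ collects the Brownian and compensated-Poisson stochastic integrals. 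By the PIDE, $\mathcal{L}g=-f^*$. On $[t,\tau_n]$, $S(u-)$ is bounded by $n$, so $g_s$ along the trajectory is bounded and the Brownian part of $M^n$ is a genuine martingale. Taking expectations then produces
\begin{align*}
g(t,s)=\bE\!\left[g(\tau_n,S^{t,s}(\tau_n))\right]+\bE\!\left[\int_t^{\tau_n} f^*(u,S^{t,s}(u))\,du\right].
\end{align*}

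The only genuinely delicate point is the limit $n\to\infty$. The integral term is easy: Proposition \ref{91} gives $|f^*(u,s)|\leq C(1+|s|)$, so the integrand is dominated by $C(1+\sup_u|S^{t,s}(u)|)\in L^1$, and dominated convergence yields the desired limit. For the boundary term, I would split according to $\{\tau_n=T\}$ and $\{\tau_n<T\}$: on the first set $g(\tau_n,S(\tau_n))=g(T,S(T))=0$, while on the second H\"older's inequality together with $|g(t,s)|\leq K(1+s^2)$ and the sup-moment bound gives
\begin{align*}
\left|\bE\!\left[g(\tau_n,S^{t,s}(\tau_n))\mathbf{1}_{\{\tau_n<T\}}\right]\right|\leq C\,\bP(\tau_n<T)^{\eps/(2+\eps)}\longrightarrow 0.
\end{align*}
Combining the two limits yields \eqref{37}. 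The main obstacle I anticipate is the rigorous verification that the compensated-Poisson part of $M^n$ is a true (not merely local) martingale under only $C^{1,2}_{\nu,\psi}$ regularity and the minimal moment assumption on $\nu$: the naive bound $|g(v,S+\psi y)-g(v,S)|^2\lesssim 1+y^4$ would require $\int y^4\,\nu(dy)<\infty$. To avoid this, I would split $\int_\bR=\int_{|y|\leq 1}+\int_{|y|>1}$, bound the small-jump integrand by a first-order Taylor remainder (using the $C^{1,2}_{\nu,\psi}$ integrability), and absorb the large-jump part either by a secondary localization on jump sizes or directly via the $(2+\eps)$-moment of $\nu$ together with the quadratic growth of $g$.
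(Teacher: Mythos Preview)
Your argument is correct and is exactly the classical It\^o-plus-localization derivation underlying the reference the paper invokes: the paper does not prove this statement itself but simply cites \cite[Theorem 17.4.10]{MR3443368}, whose proof proceeds along the lines you sketch. Your identification of the $(2+\eps)$-moment as the ingredient needed both for the sup-moment bound and for controlling the boundary term via H\"older is precisely why that hypothesis appears in the statement.
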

\begin{proof}
The proof is classical: see \cite[Theorem 17.4.10]{MR3443368}.
\end{proof}
%%%%%%%%%%%%%%%%%%%%%%
%\begin{remark}
%\begin{align*}
%dS(u)  &= (b(u)-\lambda S(u))\,du + \psi(u)\,\int_\bR y\,\overline N(dy,du), \qquad u\in(0,T],\\
%S(t)&=s.\\
%S^{t,s}(u)&=s e^{-\lambda (u-t)}+\int_t^u e^{-\lambda (u-v)}\, dL(v).\\
%G(t,s)&:=\bE\left[\int_t^T f^*(u,S^{t,s}(u)) \,du\right]\\
%g_t(t,s)+(b(t)-\lambda s) g_s(t,s)&
%+\int_\bR \left[g(t,s+\psi(t)y)-g(t,s)-g_s(t,s) \psi(t) y\right]\, \nu(dy)=-f^*(t,s), \\ 
%g(T,s)&=0. 
%\end{align*}
%\end{remark}
%%%%%%%%%%%%%%%%

In the upcoming proposition we prove the existence of a classical solution to \eqref{97} in the case that there is no Brownian component. 
\begin{assumption}\label{pure-jump}
The diffusion component in \eqref{L} is identically zero, i.e. $\sigma\equiv0$.
\end{assumption}
We follow the idea of \cite{MR1967775}, where a guess is constructed via the Feynman-Ka\v{c} formula. Let us remark that we generalize the result in \cite{MR1967775} by proving the existence of a classical solution for time-inhomogeneus Lévy processes and possibly infinite variation square integrable Lévy measure.
More in detail, we prove that 
\begin{equation}
G(t,s):=\bE\left[\int_t^T f^*(u,S^{t,s}(u)) \,du\right]
\end{equation}
is a well-defined regular function and solves the PIDE in the classical formulation. We need some preliminary propositions, which are collected in Appendix A. 
\begin{prop}[Pure-jump case]\label{z}
Under Assumption \ref{pure-jump}, the function $G(t,s)$ is continuously differentiable in $t$ for all $s\in\bR$ and solves the following partial integro-differential equation:
\begin{align*}
G_t(t,s)+(b(t)-\lambda s) G_s(t,s)+\int_\bR \left[G(t,s+\psi(t)y)-G(t,s)-G_s(t,s) \psi(t) y\right]\, \nu(dy)=-f^*(t,s),
\end{align*}
with terminal condition $G(T,s)=0$. In particular, $G\in C_{\nu,\psi}^{1,1}([0,T)\times\bR)\cap C([0,T]\times\bR)$. Furthermore, for all $t\in[0,T)$ and $s\in\bR$ the following integrability condition holds:
\begin{align*}
&\bE\left[\int_{t}^T \int_\bR \left[G(u,S^{t,s}(u-)+\psi(u) y)-G(u,S^{t,s}(u-))\right]^2\,\nu(dy) \, du\right]<\infty.
\end{align*}
\end{prop}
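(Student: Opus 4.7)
The plan is to follow \cite{MR1967775}: define the candidate $G$ by the Feynman-Ka\v{c} formula and verify every claim directly from the explicit representation
$$S^{t,s}(u)=s\,e^{-\lambda(u-t)}+\int_t^u e^{-\lambda(u-v)}\,dL(v).$$
First, $G$ is well defined with $|G(t,s)|\leq C(1+|s|)$ uniformly in $t$: the bound $|f^*(u,y)|\leq C(1+|y|)$ from Proposition \ref{91} combines with $\bE|S^{t,s}(u)|\leq|s|+C$ (available from \eqref{variance}), and continuity of $G$ on $[0,T]\times\bR$ follows by dominated convergence.

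For the spatial regularity, $s\mapsto S^{t,s}(u)$ is affine with derivative $e^{-\lambda(u-t)}$; since $f^*_s(t,s)=-\lambda\bar\pi^*(t,s)$ with $\bar\pi^*$ bounded and uniformly Lipschitz in $s$ (Proposition \ref{19}), dominated convergence lets one differentiate under the expectation to get
$$G_s(t,s)=-\lambda\,\bE\!\left[\int_t^T \bar\pi^*(u,S^{t,s}(u))\,e^{-\lambda(u-t)}\,du\right],$$
which is jointly continuous, bounded, and uniformly Lipschitz in $s$. These bounds immediately yield the $C^{1,1}_{\nu,\psi}$ integrability condition: split into $\{|y|\leq 1\}$ (dominate by $Cy^2$ using the Lipschitz estimate on $G_s$) and $\{|y|>1\}$ (dominate by $C(1+|s|)(1+|y|)$ using the linear growth of $G$ plus the boundedness of $G_s$), both integrable against $\nu$ by \eqref{variance}.

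For the PIDE I would exploit the Markov property to write, for $h>0$,
$$G(t,s)-G(t+h,s)=\bE\!\left[\int_t^{t+h} f^*(u,S^{t,s}(u))\,du\right]+\bigl(\bE[G(t+h,S^{t,s}(t+h))]-G(t+h,s)\bigr).$$
Under Assumption \ref{pure-jump}, the It\^o--Lévy formula applied to the $C^1$ spatial map $s'\mapsto G(t+h,s')$ along the pure-jump semimartingale $S^{t,s}$ on $[t,t+h]$ requires only first-order regularity in space together with the integrability condition just verified; the compensated-Poisson part is a true martingale by the same bounds and vanishes after taking expectation, leaving
$$\bE[G(t+h,S^{t,s}(t+h))]-G(t+h,s)=\bE\!\int_t^{t+h}\!\!\Bigl\{(b(u)-\lambda S^{t,s}(u))G_s(t+h,S^{t,s}(u))+\mathcal I(t+h,u,S^{t,s}(u))\Bigr\}du,$$
with $\mathcal I(r,u,y):=\int_\bR [G(r,y+\psi(u)z)-G(r,y)-G_s(r,y)\psi(u)z]\,\nu(dz)$. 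Dividing by $h$ and sending $h\to 0^+$, joint continuity of $f^*,G_s,\mathcal I$ and $\psi$, together with right-continuity of $S^{t,s}$ at $t$ (with $S^{t,s}(t)=s$), yield
$$\lim_{h\to 0^+}\frac{G(t,s)-G(t+h,s)}{h}=f^*(t,s)+(b(t)-\lambda s)G_s(t,s)+\int_\bR[G(t,s+\psi(t)z)-G(t,s)-G_s(t,s)\psi(t)z]\,\nu(dz).$$
A symmetric argument from the left plus continuity of the right-hand side in $(t,s)$ shows that $G_t$ exists and is continuous, so $G\in C^{1,1}_{\nu,\psi}$ and solves the PIDE; the terminal condition $G(T,s)=0$ is immediate from the definition.

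The closing integrability condition follows from the pointwise bound $|G(u,S^{t,s}(u-)+\psi(u)z)-G(u,S^{t,s}(u-))|^2\leq C\,(1+|S^{t,s}(u-)|^2)\,(z^2\wedge 1+z^2\mathbf{1}_{\{|z|>1\}})$, combined with $\bE[\sup_{u\in[t,T]}|S^{t,s}(u)|^2]<\infty$ and \eqref{variance}. The main obstacle I foresee is justifying the It\^o expansion above \emph{before} having established differentiability of $G$ in $t$: one must apply the jump-It\^o formula in the spatial variable only (keeping $t+h$ frozen), and then recover $G_t$ a posteriori from the limit $h\to 0$, carefully verifying uniform integrability of the various integrands and the true-martingale property of the compensated-Poisson pieces.
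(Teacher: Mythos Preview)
Your proposal is correct and follows essentially the same route as the paper: freeze the time argument at $t+h$, apply the (spatial) It\^o--L\'evy formula to $G(t+h,\cdot)$ along $S^{t,s}$ on $[t,t+h]$, use the Markov property to identify the increment of $G$, divide by $h$ and let $h\downarrow 0$; the preliminary spatial regularity you establish in your second paragraph is exactly what the paper isolates as Lemmas~\ref{a}--\ref{d}. Two cosmetic remarks: (i) for the final $L^2$-integrability the paper simply uses the global Lipschitz bound $|G(u,s+\psi(u)y)-G(u,s)|\le \|G_s\|_\infty\,\psi_2\,|y|$, which avoids the split into small and large jumps and any dependence on $|S^{t,s}(u-)|$; (ii) the paper handles the existence of $G_t$ by observing that the right derivative exists and equals a continuous function of $(t,s)$, rather than repeating the argument from the left.
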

\begin{proof}
Fix $t\in[0,T),h>0$ and apply It\^o's Lemma to $G(t+h,S(\cdot))$ from $t$ to $t+h$. Then, we have
\begin{align*}
G(t+h,S(t+h))=G(t+h,S(t))+\int_t^{t+h} (b(u)-\lambda S(u))\partial_s G(t+h,S(u)) \,du\\
+\int_t^{t+h} \int_\bR [G(t+h,S(u)+\psi(u) y)-G(t+h,S(u))-\partial_s G(t+h,S(u))\psi(u) y] \nu(dy)du \\
+\int_t^{t+h} \int_\bR [G(t+h,S(u-)+\psi(u) y)-G(t+h,S(u-))] \overline N(dy,du). 
\end{align*}
Now, divide by $h$ and take expectation $\bE^{t,s}$. Fubini's theorem gives that
\begin{align}
\frac1h &\left(\bE^{t,s}[G(t+h,S(t+h))]-G(t+h,s)\right)\\ \nonumber
&=\frac 1h\int_t^{t+h} \bE^{t,s}[(b(u)-\lambda S(u))\partial_s G(t+h,S(u))] \,du \\ \nonumber
&+\frac1h \int_t^{t+h} \int_\bR \bE^{t,s}[G(t+h,S(u)+\psi(u) y)-G(t+h,S(u))
-\partial_s G(t+h,S(u))\psi(u) y] \,\nu(dy)du \\ \nonumber
&+\frac 1h \bE^{t,s}\left[\int_t^{t+h} \int_\bR \bigl(G(t+h,S(u-)+\psi(u) y)-G(t+h,S(u-))\bigr) \overline N(dy,du)\right]. 
\end{align}
By the mean value theorem, since the map $u\mapsto\bE^{t,s}[(b(u)-\lambda S(u))\partial_s G(t+h,S(u))]$ is continuous, we have for a $u_h\in[t,t+h]$ that
\begin{align}
\frac 1h\Bigl(\int_t^{t+h} \bE^{t,s}[(b(u)-\lambda S(u))\partial_s G(t+h,S(u))] \,du\Bigr)=\bE^{t,s}[(b(u_h)-\lambda S(u_h))\partial_s G(t+h,S(u_h))],
\end{align}
which converges to $(b(t)-\lambda s) G_s(t,s)$ as $h$ approaches $0$. 
Analogously, for the second term it holds that
\begin{align*}
\frac1h \int_t^{t+h} \int_\bR \bE^{t,s}[G(t+h,S(u)+\psi(u) y)-G(t+h,S(u))-\partial_s G(t+h,S(u))\psi(u) y] \nu(dy) du \\
= \int_\bR \bE^{t,s}[G(t+h,S(u_h)+\psi(u_h) y)-G(t+h,S(u_h))-\partial_s G(t+h,S(u_h))\psi(u_h) y] \nu(dy),
\end{align*}
for a $u_h\in[t,t+h]$. Since all the maps in the expectation are continuous, as $h$ tends to zero (cf. Lemma \ref{d}), the last term converges to 
$$
\int_\bR \left[G(t,s+\psi(t)y)-G(t,s)-G_s(t,s) \psi(t) y\right]\, \nu(dy).
$$
Moreover, 
$$
\bE^{t,s}\left[\int_t^{t+h} \int_\bR \bigl(G(t+h,S(u-)+\psi(u) y)-G(t+h,S(u-))\bigr) \overline N(dy,du)\right]=0.
$$
Finally, the left-hand side can be written as
\begin{align*}
\frac1h \left(\bE^{t,s}[G(t+h,S(t+h))]-G(t,s)\right)+\frac1h \left(G(t,s)-G(t+h,s)\right).
\end{align*}
By the Markov property and the tower rule,
\begin{align*}
\bE^{t,s}[G(t+h,S(t+h))]&=\bE\left[\bE\left[\int_{t+h}^T f^*(u,S^{t+h,S^{t,s}(t+h)}(u)) \,du\right]\right]\\
&=\bE\left[\bE\left[\left.\int_{t+h}^T f^*(u,S^{t,s}(u)) \,du\right|\mathcal{F}_{t+h}\right]\right]\\
&=\bE\left[\int_{t+h}^T f^*(u,S^{t,s}(u)) \,du \right].
\end{align*}
Therefore,
\begin{align*}
\frac1h& \left(\bE^{t,s}[G(t+h,S(t+h))]-G(t,s)\right)\\
&=\frac1h \left(\bE\left[\int_{t+h}^T f^*(u,S^{t,s}(u)) \,du \right]-\bE\left[\int_{t}^T f^*(u,S^{t,s}(u)) \,du \right]\right)\\
&=-\frac1h \bE\left[\int_t^{t+h} f^*(u,S^{t,s}(u)) \,du \right],
\end{align*}
which converges to $-f^*(t,s)$ as $h$ goes to zero. Then, we have found that the limit of $-\frac1h \left(G(t+h,s)-G(t,s)\right)$ exists and is equal to
\begin{align*}
(b(t)-\lambda s) G_s(t,s)+\int_\bR \left[G(t,s+\psi(t)y)-G(t,s)-G_s(t,s) \psi(t) y\right]\, \nu(dy)+f^*(t,s),
\end{align*}
so that $G_t(t,s)$ exists and it is continuous, being the right-hand term continuous.  Also, we get from this expression that $G$ solves the integro-differential equation of the statement. 

For the last point, as in Lemma \ref{d}, it is sufficient observe that
\begin{align*}
\bE\left[\bigl(G(u,S^{t,s}(u)+\psi(u) y)-G(u,S^{t,s}(u))\bigr)^2\right] \leq \sup_{z\in\bR} G_s(u,z)^2 \psi(u)^2 y^2\leq C e^{2\lambda u} y^2,
\end{align*}
since $G$ is Lipschitz continuous in $z$ uniformly in $u$.
\end{proof}

In the last proposition, a result by \cite{MR1650147} is applied to prove existence and uniqueness in the case that the second-order operator is uniformly elliptic and the jump part of $L$ is a compound Poisson process.
\begin{assumption}\label{pham}
Assume in \eqref{L} that $\sigma(t)>0$ for all $t\in[0,T]$ and $\nu$ is a finite L\'{e}vy measure.
\end{assumption}
\begin{prop}[Finite Lévy measure]\label{41}
%and $\sigma(t)\geq \sigma>0$ (uniform ellipticity)
Under Assumption \ref{pham}, the function
\begin{equation*}
G(t,s):=\bE\left[\int_t^T f^*(u,S^{t,s}(u)) \,du\right]
\end{equation*}
is the unique $C_{\nu,\psi}^{1,2}([0,T)\times\bR)\cap C([0,T]\times\bR)$ solution of \eqref{97}. Moreover, the following integrability conditions hold:
\begin{align*}
&\bE\left[\int_{t}^T \int_\bR \left[G(u,S^{t,s}(u-)+\psi(u) y)-G(u,S^{t,s}(u-))\right]^2\,\nu(dy) \, du\right]<\infty,\\
&\bE\left[\int_{t}^T \sigma(u)^2\, G_s(u,S^{t,s}(u))^2\, du\right]<\infty.
\end{align*}

\end{prop}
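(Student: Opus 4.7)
The plan is to invoke the existence and uniqueness result of \cite{MR1650147} for parabolic PIDEs with uniformly non-degenerate diffusion part and finite Lévy measure, then identify its solution with $G$ via an It\^o--Feynman-Ka\v{c} argument, and finally verify the two integrability conditions. The key structural observation is that Assumption \ref{pham} makes the second-order part of \eqref{97} uniformly parabolic on $[0,T]$: $\sigma$ is continuous and strictly positive on a compact interval, hence bounded away from zero; meanwhile, finiteness of $\nu$ turns the integral operator
\[
\mathcal{I} g(t,s) := \int_\bR [g(t,s+\psi(t)y) - g(t,s) - g_s(t,s)\psi(t)y]\,\nu(dy)
\]
into a bounded lower-order perturbation of a classical linear parabolic operator.

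First I would check that our PIDE fits into the framework of \cite{MR1650147}. The drift $(t,s) \mapsto b(t) - \lambda s$ is affine in $s$ and smooth in $t$; the coefficients $\sigma$ and $\psi$ are continuously differentiable and bounded with $\sigma$ bounded away from zero; $\nu$ is a finite measure with finite second moment by \eqref{variance}; and by Proposition \ref{91}, the forcing term $f^*$ is continuously differentiable, globally Lipschitz, and of at most linear growth in $s$ uniformly in $t$. Together with the terminal condition $g(T,\cdot) \equiv 0$, these inputs yield a unique classical solution $\widetilde G \in C^{1,2}_{\nu,\psi}([0,T) \times \bR) \cap C([0,T] \times \bR)$ of \eqref{97}, whose $s$-growth and the growth of $\widetilde G_s$ are controlled by those of $f^*$ (a fact further supported by the auxiliary statements in Appendix A).

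Next, to identify $\widetilde G$ with $G$, I would apply It\^o's formula to $\widetilde G(u, S^{t,s}(u))$ on $[t, T \wedge \tau_n]$, where $\tau_n := \inf\{u \geq t : |S^{t,s}(u)| \geq n\} \wedge T$. Since $\widetilde G$ solves the PIDE, all the drift and integral terms arising from the infinitesimal generator are absorbed by $\widetilde G_t$, leaving only $-f^*(u, S(u))\,du$ plus Brownian and compensated Poisson martingale increments. Taking expectations annihilates the martingale parts; letting $n \to \infty$ via dominated convergence, using the linear growth of $\widetilde G$, the bound on $\widetilde G_s$, the finite second moment of $S$ ensured by \eqref{variance}, and finiteness of $\nu$, gives $\widetilde G(t,s) = \bE\bigl[\int_t^T f^*(u, S^{t,s}(u))\,du\bigr] = G(t,s)$, which simultaneously delivers existence in the stated probabilistic form and uniqueness.

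The two integrability conditions then follow from the regularity already established. By the mean value theorem and the uniform bound on $G_s$,
\[
(G(u, S(u-) + \psi(u) y) - G(u, S(u-)))^2 \leq \|G_s\|_\infty^2 \, \psi_2^2 \, y^2,
\]
so the first condition reduces to $\int_\bR y^2 \, \nu(dy) < \infty$, guaranteed by \eqref{variance}; the second condition is immediate from boundedness of $\sigma$ on $[0,T]$ together with the same $L^\infty$ bound on $G_s$. The main obstacle is matching the setup of \cite{MR1650147} to our unbounded state space with affine drift and linearly growing forcing, so as to extract both a classical solution and enough regularity of $G_s$ to justify the martingale and integrability arguments; once this is settled, the Feynman-Ka\v{c} identification and the integrability estimates are routine.
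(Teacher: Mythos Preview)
Your proposal is correct and follows essentially the same route as the paper: both invoke \cite[Proposition 5.3]{MR1650147} after checking its hypotheses (uniform ellipticity from $\sigma>0$, finite L\'evy measure, and the Lipschitz continuity of $f^*$ from Proposition \ref{91}), and both derive the integrability conditions from a uniform bound on $G_s$ via the argument of Lemma \ref{c}. The only minor difference is that you spell out an explicit It\^o--localization step to identify the PIDE solution with the Feynman--Ka\v{c} expectation, whereas the paper absorbs this identification directly into Pham's result; your extra step is harmless and, if anything, makes the probabilistic representation more self-contained.
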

\begin{proof}
First of all, observe that, since $\nu$ is the Lévy measure associated to a compound Poisson process, the spaces $C_{\nu,\psi}^{1,2}([0,T)\times\bR)$ and $C^{1,2}([0,T)\times\bR)$ coincide (cf. \cite[Definition 17.4.9]{MR3443368}). Therefore, we only need to verify if the assumptions of \cite[Proposition 5.3]{MR1650147} are fulfilled. Notice that (H6) there corresponds to assuming that the Lévy jump component is a compound Poisson process. Then, in order to apply \cite[Proposition 5.3]{MR1650147} it remains to prove that $f^*:[0,T]\times\bR\to\bR$ is Lipschitz continuous, which follows from Proposition \ref{91}. 
The integrability conditions can be proved as in Lemma \ref{c}.
\end{proof}

%%%%%%%%%%%%%%%%%%%%%%%%
\iffalse
\begin{remark}\label{90}
Define the function $h:[0,T]\times \bR\to\bR$ by 
\begin{equation}\label{92}
h(t,s)=\bE\left[\int_t^T f^*(u,S^{t,s}(u)) \,du\right]
\end{equation}
and suppose that it is possible to differentiate continuously with respect to $t$ inside the expectation. 
The same argument seems not to work for the derivative with respect to $s$, as $f^*(t,\cdot)$ is generally not smooth in $s_1(t)$ and $s_2(t)$ (cf. Proposition \ref{19}). However, if we assume that $S^{t,s}(u)$ admits a density $p(t,u,s,\cdot)$, then \eqref{92} can be written as
$$
h(t,s)=\int_t^T \int_\bR f^*(u,z)\,p(t,u,s,z)\, dz \,du.
$$
Consequently, if $p(t,u,s,\cdot)$ is twice differentiable in $s$ and the differentiation under the integral sign holds, then $h$ belongs to $C^{1,2}([0,T)\times\bR)$. Then, by applying similar arguments to \cite[Section 5]{MR1967775}, one can hope to prove that $h$ is a classical solution of the PIDE in \eqref{97}.
\end{remark}
\fi
%%%%%%%%%%%%%%%%%%%%%%%%%%%%%%%%%%%%%%
%Recall that
%$$
%f^*(t,s):=(b(t)-\lambda s) \bar\pi^*(t,s)-\frac12\,\sigma(t)^2 \bar\pi^*(t,s)^2+\int_\bR \left[\log\left(1+\bar\pi^*(t,s) \psi(t) y\right)-\bar\pi^*(t,s) \psi(t)  y\right]\, \nu(dy).
%$$
%By standard calculus, thanks to the boundedness of $\bar\pi^*(s)$ and its derivative $(\bar\pi^*)'(s)$ we get that $(f^*)'(s)$ is bounded.

Finally, we apply the Verification Theorem and state the main results of this section.

\begin{theorem}\label{36}
Let $g$ be a $C_{\nu,\psi}^{1,2}([0,T)\times\bR)\cap C([0,T]\times\bR)$ solution of \eqref{97} and assume that, for any $t\in[0,T)$ and $s\in\bR$, we have the following conditions
\begin{align*}
&\bE\left[\int_{t}^T \int_\bR \left[g(u,S^{t,s}(u-)+\psi(u) y)-g(u,S^{t,s}(u-))\right]^2\,\nu(dy) \, du\right]<\infty,\\
&\bE\left[\int_{t}^T \sigma(u)^2\, g_s(u,S^{t,s}(u))^2\, du\right]<\infty,
\end{align*}
where $S=S^{t,s}$ is the solution of
\begin{align*}
    dS(u)  &= -\lambda S(u)\,du + dL(u),& u\in(t,T],  \\ 
\label{8}
    S(t) &= s. &
\end{align*}
Then, the function $\pi^*(t,s,x):=\bar\pi^*(t,s)\cdot x$, with $\bar\pi^*$ as in Proposition \ref{19}, is an optimal Markov control policy, i.e. it induces an admissible strategy in the sense of Definition \ref{11} and, for each $t\in[0,T), s\in\bR, x\in\bR^+$, we get that $J(t,s,x;\pi^*)=V(t,s,x)=\log(x)+g(t,s)$.
\end{theorem}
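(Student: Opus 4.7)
The plan is to invoke the Verification Theorem \ref{verification} with the candidate value function $H(t,s,x):=\log(x)+g(t,s)$, which by the reduction carried out in \eqref{27}--\eqref{97} is a classical solution of \eqref{2}--\eqref{3}. Indeed, the pointwise supremum $\sup_\pi A^\pi H$ is attained precisely at $\pi^*(t,s,x)=\bar\pi^*(t,s)\,x$; substituting this maximizer into $\partial_t H + A^{\pi^*}H = 0$ collapses the wealth dependence and leaves exactly the PIDE \eqref{97} for $g$, while the terminal datum $g(T,\cdot)=0$ yields $H(T,s,x)=\log x = U(x)$. Since $g\in C^{1,2}([0,T)\times\bR)$ and $\log x\in C^\infty(\bR^+)$, we also have $H\in C^{1,2}([0,T)\times\bR\times\bR^+)$.

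Next I would verify that $\pi^*(u):=\bar\pi^*(u,S(u-))\,X^{\pi^*}(u-)$ is admissible in the sense of Definition \ref{11}. By Proposition \ref{19}, $\bar\pi^*$ is bounded, Lipschitz continuous uniformly in time, and takes values in the compact set $\Pi\subset\widehat\Pi$. Standard existence and uniqueness results for SDEs with Lipschitz coefficients driven by L\'evy-type noise give a unique strong solution $(S,X^{\pi^*})$ of \eqref{5} and \eqref{22}, and the stochastic exponential formula displayed after \eqref{22}, together with $\bar\pi^*\in\widehat\Pi$, yields $X^{\pi^*}>0$ by the positivity proposition of Section 3.1. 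The terminal constraint $\pi^*(T)=0$ may be imposed by redefinition at the single instant $u=T$ without affecting any of the stochastic integrals.

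The main technical step is to show that $H\in\mathcal D$, i.e. that Dynkin's formula \eqref{4} holds along every admissible $\pi=\bar\pi\,X^\pi$. Applying It\^o's formula to $H(u,S(u),X^\pi(u))=\log X^\pi(u)+g(u,S(u))$ on $[t,T]$ produces the drift $\int_t^T A^{\pi(u)}H(u,S(u),X^\pi(u))\,du$ plus two Brownian stochastic integrals with integrands $\sigma(u)g_s(u,S(u))$ and $\bar\pi(u)\sigma(u)$, and two compensated Poisson integrals with integrands $g(u,S(u-)+\psi(u)y)-g(u,S(u-))$ and $\log(1+\bar\pi(u)\psi(u)y)$. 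The Brownian terms are true martingales thanks to the assumption $\bE[\int_t^T\sigma^2 g_s^2\,du]<\infty$ together with the boundedness of $\bar\pi$ and $\sigma$. The $g$-jump integral is a true martingale by the second integrability hypothesis. For the remaining term I would use that $\Pi$ is compactly contained in $\widehat\Pi$, so that $1+\bar\pi\psi y$ is uniformly bounded away from $0$ on the support of $\nu$; a second order Taylor expansion then yields $|\log(1+\bar\pi\psi y)|^2\leq C\,y^2$ for small $y$ and a logarithmic bound for large $y$, both integrable against $\nu\otimes du$ thanks to \eqref{variance}. Taking expectations in It\^o's formula then gives \eqref{4}.

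With these three ingredients in place, part (1) of Theorem \ref{verification} gives $H(t,s,x)\geq J(t,s,x;\pi)$ for every admissible $\pi$, and part (2) applied to $\pi^*$, which realizes the pointwise maximum of $A^\pi H$ by the very definition of $\bar\pi^*$, gives $H=J(\cdot;\pi^*)=V=\log x + g$. The hardest item is the verification of $H\in\mathcal D$: because $\pi$ enters through $\log X^\pi$, the analysis of the jump martingale requires carefully exploiting the strict inclusion $\Pi\subset\widehat\Pi$ in order to stay away from the singularity of $\log$ at $-1$, whereas the two integrability hypotheses of the theorem statement are tailored precisely to handle the $g$-dependent terms.
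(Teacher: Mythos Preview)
Your proposal is correct and follows essentially the same route as the paper: set $H=\log x+g$, check it solves the HJB equation with maximizer $\pi^*=\bar\pi^*x$, establish admissibility of $\pi^*$ from Proposition~\ref{19}, and then verify that $H\in\mathcal D$ by applying It\^o's formula and showing the four resulting stochastic integrals are true martingales---the two $g$-dependent ones via the hypotheses and the two $\bar\pi$-dependent ones via the compactness of $\Pi\subset\widehat\Pi$ and the boundedness of $\sigma$. The paper's own proof is more terse (it groups the Brownian integrands as $\sigma(\bar\pi+g_s)$ and simply asserts that the first two integrability conditions ``follow from the definition of $\Pi$ and the boundedness of $\sigma$''), but the logical structure is identical; your explicit treatment of the $\log(1+\bar\pi\psi y)$ term and the remark on redefining $\pi^*(T)=0$ just fill in details the paper leaves to the reader.
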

\begin{proof}
See Appendix C.
\end{proof}
\begin{corollary}
Assume either Assumption \ref{pure-jump}, or Assumption \ref{pham} and define
$$
G(t,s)=\bE\left[\int_t^T f^*(u,S^{t,s}(u)) \,du\right].
$$
Then, $\pi^*(t,s,x):=\bar\pi^*(t,s)\cdot x$, as in Proposition \ref{19}, is an optimal Markov control policy and $J(t,s,x;\pi^*)=V(t,s,x)=\log(x)+G(t,s)$.
\end{corollary}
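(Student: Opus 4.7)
The plan is to treat the corollary as a direct assembly of the Verification Theorem \ref{36} with the two existence results for classical solutions of the reduced HJB equation \eqref{97}, namely Propositions \ref{z} and \ref{41}. Concretely, I would take $g = G$ and verify that it meets every hypothesis of Theorem \ref{36} under whichever of the two assumptions is in force; the identification $V(t,s,x) = \log(x) + G(t,s)$ with $\pi^*(t,s,x) = \bar\pi^*(t,s)\cdot x$ then follows by invoking that theorem.

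First I would treat the two cases separately. Under Assumption \ref{pham}, Proposition \ref{41} gives immediately that $G \in C^{1,2}_{\nu,\psi}([0,T)\times\bR)\cap C([0,T]\times\bR)$ solves \eqref{97} with $G(T,s)=0$, and both integrability conditions required by Theorem \ref{36} are part of the statement of Proposition \ref{41}. Under Assumption \ref{pure-jump}, Proposition \ref{z} yields that $G \in C^{1,1}_{\nu,\psi}([0,T)\times\bR)\cap C([0,T]\times\bR)$ solves the corresponding first-order PIDE (which coincides with \eqref{97} since $\sigma\equiv 0$); the first integrability condition is again supplied by Proposition \ref{z}, and the second one, involving $\sigma(u)^2 g_s(u,\cdot)^2$, is trivially satisfied because $\sigma\equiv 0$.

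The one subtlety worth flagging is that Theorem \ref{36} is stated for $g\in C^{1,2}_{\nu,\psi}$, whereas in the pure-jump case $G$ is only $C^{1,1}_{\nu,\psi}$. This is a benign mismatch: when $\sigma\equiv 0$, the second-order spatial derivative never enters the infinitesimal generator $A^\pi$ nor the PIDE, so the proof of Theorem \ref{36} (via Dynkin's formula applied to $g(u,S(u),X^\pi(u))$) goes through unchanged with $g \in C^{1,1}_{\nu,\psi}$. I would note this observation explicitly to make the reduction clean, then cite Theorem \ref{36} to conclude admissibility of $\pi^*$ and the identity $J(t,s,x;\pi^*) = V(t,s,x) = \log(x) + G(t,s)$.

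There is no real obstacle here beyond that bookkeeping remark; the work has all been done in the earlier propositions and the Verification Theorem. The corollary is essentially a wrapper that picks out the concrete Feynman--Ka\v{c} representation $G(t,s) = \bE\bigl[\int_t^T f^*(u,S^{t,s}(u))\,du\bigr]$ as the explicit value function for both modelling regimes of interest (compound Poisson with non-degenerate diffusion, and pure-jump possibly of infinite variation).
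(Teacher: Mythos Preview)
Your proposal is correct and matches the paper's own (implicit) reasoning: the corollary is stated without proof precisely because it is meant to follow by plugging Propositions \ref{z} and \ref{41} into Theorem \ref{36}. Your remark that in the pure-jump case $G$ is only $C^{1,1}_{\nu,\psi}$ and that this is harmless since $\sigma\equiv 0$ removes the second-order term from the generator is the only nontrivial point, and you have handled it appropriately.
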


\section{Estimating the Optimal Strategy: the Merton Ratio and Taylor Approximations}

After proving the existence and describing the analytical properties of the optimal strategy, we now study simple ways to compute it by approximation.
%Let us recall the first-order condition in \eqref{7}:
%\begin{equation*}
%f'(\bar\pi^*;s)=(b-\lambda s) -\sigma^2\bar\pi^*-\int_\bR 
%\frac{\bar\pi^* y^2}{1+\bar\pi^* y}\, \nu(dy)=0.
%\end{equation*}
%It is rather clear that an explicit formula for $\bar\pi^*$ is generally not possible.
 
\subsection{Definition and intuition} In his seminal work on portfolio selection \cite{merton}, Merton studies the optimal allocation of the investor's wealth when the risky asset follows a geometric Brownian motion:
\begin{align*}
dS(t) =\mu S(t)\,dt + \sigma S(t)\,dW(t),  \qquad t\in(0,T],
\end{align*}
and finds that the optimal allocation for a log-utility is\footnote{Consistently with our setting, we are assuming that the risk-free interest rate $r$ is zero and there is no consumption during the trading period.}
$$
\pi^*_M = \frac\mu{\sigma^2},
$$
which consists of the ratio of the excess return over the local variance of the log-price. Instead, in our framework the price dynamics are
\begin{align*}
    dS(t) &= (b(t)-\lambda S(t))\,dt + \sigma(t)\,dW(t) + \psi(t) \int_\bR y\,\overline N(dy,dt), \qquad t\in(0,T].%\\
    %dX(t) &= \bar\pi^*(S(t))\,X(t-)\,dt.
\end{align*}
Here, the local variance at time $t$ is the sum of the variance of the continuous component $\sigma(t)^2$ and that of the jump part $\sigma_L(t)^2:=\psi(t)^2 \int_\bR y^2\, \nu(dy)$. Then, in this context, it is natural to define the analogue of Merton's Ratio $\pi^*_M$ as%\footnote{The corresponding strategy is \begin{equation*}
%\pi_1^*(s,x)=\frac{b-\lambda s}{\sigma^2+\sigma_L^2}\, x.
%\end{equation*}}
\begin{equation}\label{15}
\bar\pi_1^*(t,s):=\frac{b(t)-\lambda s}{\sigma(t)^2+\sigma_L(t)^2}.%,\qquad\mbox{for }\bar\pi_1^*(s)\in\Pi.
\end{equation}
This ratio appears naturally when applying a Taylor approximation to \eqref{7}. Recall that the optimal normalized strategy $\bar\pi^*$ is defined as 
$$
\bar\pi^*(t,s)=\arg\max_{\bar\pi\in\Pi} f(\bar\pi;t,s),
$$
where
$$
f(\bar\pi;t,s)=(b(t)-\lambda s) \bar\pi-\frac12\,\sigma(t)^2 \bar\pi^2+ \int_\bR 
\left[\log\left(1+\bar\pi \psi(t) y\right)-\bar\pi \psi(t) y\right]
\, \nu(dy).
$$
If the maximum is attained at an interior point (cf. Proposition \ref{19}), then $\bar\pi^*$ satisfies the integral equation 
\begin{equation}\label{95}
b(t)-\lambda s -\sigma(t)^2\bar\pi-\int_\bR \frac{\bar\pi \psi(t)^2 y^2}{1+\bar\pi \psi(t) y}\, \nu(dy)=0.
\end{equation}
If we replace the integrand by the second-order Taylor expansion around zero, the integral equation becomes
\begin{equation}\label{13}
b(t)-\lambda s -\sigma(t)^2 \bar\pi-\bar\pi \psi(t)^2 \int_\bR  y^2\, \nu(dy)=0,
\end{equation}
whose unique solution is exactly the strategy {\it \`{a} la} Merton $\bar\pi_1^*(t,s)$ that we defined in \eqref{15}. A similar Taylor truncation has been introduced in \cite{MR3176490} to study approximations of L\'{e}vy processes and tested numerically in \cite{pasin}. Also, \cite{ascheberg,imma_reno} arrive at an analogous approximated strategy for stochastic volatility models with jumps, however they start by approximating the HJB directly. The idea is to treat the small jumps as an additional Brownian component (very much in the spirit of \cite{asm_ros}) and neglect larger jumps. 

Let us now introduce a more accurate approximation in the finite activity case. We assume that $\nu([m,M]\setminus\{0\})<\infty$, i.e. the jump component of \eqref{5} is, in fact, a compound Poisson process, which is often the most interesting case for application purposes. Therefore, the L\'{e}vy measure takes the form $\nu(dy)=\eta\,F(dy)$, where $\eta$ is the jump intensity and $F(dy)$ the jump size distribution. Thanks to our standing assumptions, the distribution $F$ admits finite expectation $\mu_F$ and variance $\sigma^2_F$. 
We remind that the optimal policy is defined through the first order condition in \eqref{95}. 
Let us write the Taylor polynomial of the integrand around an arbitrary (finite) point $y_0\in[m,M]\setminus\{0\}$. So, we set $\phi(y):=\frac{\bar\pi y^2}{1+\bar\pi \psi(t) y}$ and compute its derivatives. Writing down its expansion up to the first order, we have
\begin{equation}\label{40}
\phi(y)=\frac{\bar\pi y_0^2}{1+\bar\pi \psi(t) y_0}+\frac{2\bar\pi y_0+\bar\pi^2 y_0^2 \psi(t)}{(1+\bar\pi \psi(t) y_0)^2}(y-y_0)+o(y-y_0).
\end{equation}
Then, \eqref{95} becomes
\begin{equation}\label{94}
b(t)-\lambda s -\sigma(t)^2\bar\pi-\frac{\bar\pi \psi(t)^2 y_0^2}{1+\bar\pi \psi(t) y_0} \nu([m,M]\setminus\{0\})
-\frac{2\bar\pi \psi(t)^2 y_0+\bar\pi^2 \psi(t)^3 y_0^2}{(1+\bar\pi \psi(t) y_0)^2}\int_\bR (y-y_0)\, \nu(dy)=0.
\end{equation}
From this expression it is clear that a significant simplification is given by the choice 
\begin{equation*}
y_0:=\frac1{\nu([m,M]\setminus\{0\})} \int_\bR y\, \nu(dy),
\end{equation*}
since in this case the first-order term just disappears. Besides, by writing the L\'{e}vy measure with respect to $F$, we see that $y_0$ corresponds to the mean value of the jump size:
\begin{equation}\label{0}
y_0=\frac1{\eta\,\int_\bR F(dy)} \eta\,\int_\bR y\, F(dy)=\mu_F.
\end{equation}
We are essentially replacing the integrand with its linear approximation around the integral mean, or, from another point of view, we are approximating a function of the jumps with respect to the jump size mean value. Since here we take into account the jump measure specifications, this is a slightly different approach from the first approximation $\bar\pi_1^*$ (where the Taylor polynomial was centered in zero). Hence, from \eqref{94} we have the following approximated equation for $\bar\pi$:
\begin{equation}\label{26}\begin{split}
%b-\lambda s -\sigma^2\bar\pi-\frac{\bar\pi \mu_F^2}{1+\bar\pi \mu_F} i=
-\sigma(t)^2 \psi(t) \mu_F \bar\pi^2+(\mu_F \psi(t)(b(t)-\lambda s)-\sigma(t)^2-\mu_F^2 \psi(t)^2 \eta)\bar\pi+b(t)-\lambda s=0.
\end{split}\end{equation}
In the case $\sigma(t) \psi(t)\not=0$, it is a second order polynomial in the variable $\bar\pi$. Consequently, for each $t\in[0,T]$ we have two (generally) different solutions. However, only one of them is admissible, meaning that $\bar\pi_2^*(t,s)\in\Pi$ for every possible value of $s$. Since the ambiguity comes from the definition domain of the logarithm, we just need to impose the condition $1+\bar\pi^*_2 \psi(t) \mu_F>0$. This leads to 
$$
\bar\pi_2^*(t,s):=
\begin{cases}
\frac{p_1(t,s)+\sqrt{p_1(t,s)^2+4 p_2(t,s)} }{2 p_3(t,s)}, & \mbox{if } \mu_F>0,\\
\frac{p_1(t,s)-\sqrt{p_1(t,s)^2+4 p_2(t,s)} }{2 p_3(t,s)}, & \mbox{if } \mu_F<0,
\end{cases}
$$
where 
\begin{align*}
p_1(t,s)&=\mu_F \psi(t) (b(t)-\lambda s)-\mu_F^2 \psi(t)^2 \eta-\sigma(t)^2,\\ p_2(t,s)&= \mu_F(b(t)-\lambda s)\sigma(t)^2 \psi(t), \\
p_3(t,s)&= \mu_F \psi(t) \sigma^2. \\
\end{align*}
On the other hand, in a pure jump context ($\sigma(t)\equiv0$), we get simply
\begin{equation}\label{32}
\bar\pi_2^*(t,s)=-\frac{b(t)-\lambda s}{\psi(t)\mu_F \bigl(b(t)-\lambda s-\eta \psi(t)\mu_F\bigr)},%=-\frac{b-\lambda s}{\mu_F (b-\lambda s-i\mu_F)}
\end{equation}
for any $\mu_F\not=0$ and for each $t$ and $s$ such that $\bar\pi_2^*(t,s)$ is well defined and takes values into $\Pi$. 
%%%%%%%%%%%%%
%An estimate of this approximation error can be computed with the same approach of Proposition \ref{34} (see Proposition \ref{99} in Appendix B).
%%%%%%
\iffalse
Since $ -\frac{1}{y_0}\not\in\widehat\Pi$ (cf. Definition \ref{18}), we get that the only admissible strategy (and optimal for this approximated problem) is 
$$
\bar\pi_2^*=\bar\pi_2^*(s)=\frac{b-\lambda s}{\sigma^2+\eta y_0}=\frac{b-\lambda s}{\sigma^2+\mu_L},
$$
observing that $y_0 \eta$ is the jumps' mean $\mu_L:=\int_\bR y\, \nu(dy)$.
\fi
%%%%%%

\subsection{Error bounds} To estimate the approximation error, we compute the difference between the optimal normalized strategy $\bar\pi^*$ and the approximated ones $\bar\pi^*_1$ and $\bar\pi^*_2$. 
%{\color{red} Berry-Esseen-type estimate? NO}
\begin{prop}\label{34}
Assume that $\Pi$ contains $0$ and the finiteness of the third moment of the jumps, that is 
$$
\int_{{|y|\geq1}} |y|^3\, \nu(dy)<\infty,
$$
and denote $\sigma_1^2= \min_{[0,T]}\sigma^2(t),\sigma_\nu^2=\int_\bR y^2\, \nu(dy)$. 
If we are not in {\bf case D} (no jumps) of Remark \ref{18}, then for each $t\in[0,T]$ it holds that
$$
|\bar\pi^*(t,\cdot)-\bar\pi^*_1(t,\cdot)| \leq  C \int_\bR |y|^3\, \nu(dy),
$$
where $C$ is a constant which depends on $\delta:=${\rm dist}$(\Pi,\partial\widehat\Pi),$ $\max\Pi,$ $\min\Pi,$ $m,$ $M,$ $\psi_2$, $\sigma_1^2$ and $\sigma_\nu^2$ according to the following different cases:
\begin{description}
\item[case A] $$
C=\begin{cases}\frac{C_0}{\min\{1,\delta \psi_2 M,-\delta \psi_2 m\}}
%=\max\{(\max\Pi)^2, (\min\Pi)^2\}\max\{1,\frac{1}{\delta M},-\frac{1}{\delta m}\}, 
& \mbox{if } m\not=-\infty, M\not=+\infty,\\
\frac{C_0}{\min\{1,\delta \psi_2 M\}}
%=\max\{(\max\Pi)^2, (\min\Pi)^2\}\max\{1,\frac{1}{\delta M}\}, 
& \mbox{if } m=-\infty, M\not=+\infty,\\
\frac{C_0}{\min\{1,-\delta \psi_2 m\}}
%=\max\{(\max\Pi)^2, (\min\Pi)^2\}\max\{1,-\frac{1}{\delta m}\}, 
& \mbox{if } m\not=-\infty, M=+\infty.
\end{cases}
$$
\item[case B]
$$
C=\begin{cases}
\frac{C_0}{\min\{1,\delta \psi_2 M\}}
%=\max\{(\max\Pi)^2, (\min\Pi)^2\}\max\{1,\frac{1}{\delta M}\} 
& \mbox{if } M\not=+\infty,\\
C_0 & \mbox{if } M=+\infty.
\end{cases}
$$
\item[case C] $$
C=\begin{cases}
\frac{C_0}{\min\{1,-\delta \psi_2 m\}}
%=\max\{(\max\Pi)^2, (\min\Pi)^2\}\max\{1,-\frac{1}{\delta m}\} 
& \mbox{if } m\not=-\infty,\\
C_0 & \mbox{if } m=-\infty,
\end{cases}
$$
\end{description}
where $C_0:=\frac{\psi_2^3}{\sigma_1^2+\psi_2^2 \sigma_\nu^2} \max_{\pi\in\Pi}\pi^2 $.
\end{prop}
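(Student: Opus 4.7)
The plan is to exploit that $\bar\pi^*$ satisfies the exact first-order condition \eqref{7} (assuming it is an interior maximizer, which by Proposition \ref{19} holds for $s\in\Sigma(t)$), while $\bar\pi_1^*$ solves the corresponding Taylor-truncated equation \eqref{13}. The crucial ingredient is the elementary identity
$$
\frac{y^2}{1+\bar\pi\psi y}=y^2-\frac{\bar\pi\psi y^3}{1+\bar\pi\psi y},
$$
valid whenever $1+\bar\pi\psi y>0$. Applying this inside the integral of \eqref{7} and writing $\sigma_\nu^2=\int_\bR y^2\,\nu(dy)$ turns the FOC into
$$
b(t)-\lambda s-(\sigma(t)^2+\psi(t)^2\sigma_\nu^2)\bar\pi^*+\int_\bR\frac{(\bar\pi^*)^2\psi(t)^3 y^3}{1+\bar\pi^*\psi(t)y}\,\nu(dy)=0,
$$
and subtracting \eqref{13} gives the closed-form expression
$$
\bar\pi^*(t,s)-\bar\pi_1^*(t,s)=\frac{1}{\sigma(t)^2+\psi(t)^2\sigma_\nu^2}\int_\bR\frac{\bar\pi^*(t,s)^2\psi(t)^3 y^3}{1+\bar\pi^*(t,s)\psi(t)y}\,\nu(dy).
$$
This representation is the whole engine of the argument; what remains is uniform estimation of both factors.

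Taking absolute values and pulling the prefactor out of the integral gives a bound by $\frac{\bar\pi^*(t,s)^2\psi(t)^3}{\sigma(t)^2+\psi(t)^2\sigma_\nu^2}\int_\bR\frac{|y|^3}{|1+\bar\pi^*(t,s)\psi(t) y|}\,\nu(dy)$. A one-line calculus check shows that the map $\psi\mapsto \psi^3/(\sigma^2+\psi^2\sigma_\nu^2)$ is increasing in $\psi>0$ for each fixed $\sigma^2\geq 0$ and, in turn, decreasing in $\sigma^2$; combined with $\bar\pi^*(t,s)^2\leq \max_{\pi\in\Pi}\pi^2$ this bounds the prefactor by exactly $C_0=\psi_2^3(\max_{\pi\in\Pi}\pi^2)/(\sigma_1^2+\psi_2^2\sigma_\nu^2)$, uniformly in $(t,s)$.

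What is left is to bound $|1+\bar\pi^*\psi(t)y|^{-1}$ uniformly in $(t,s)$ and in $y\in\mathrm{supp}\,\nu\subset[m,M]$, and to read off the three subcases A, B, C. If $\bar\pi^*\psi(t)y\geq 0$, then $|1+\bar\pi^*\psi(t)y|\geq 1$ trivially. Otherwise, in the subcase $\bar\pi^*>0$ and $y\in[m,0)$ (which requires $m$ finite and negative), the standoff condition $\mathrm{dist}(\bar\pi^*,\partial\widehat\Pi)\geq\delta$ and the explicit form of $\widehat\Pi$ in Remark \ref{18} force $\bar\pi^*\leq -1/(m\psi_2)-\delta$, so that $1+\bar\pi^*\psi(t)y\geq 1+\bar\pi^*\psi_2 m\geq-\delta\psi_2 m$; symmetrically, $\bar\pi^*<0$ with $y\in(0,M]$ (requiring $M$ finite and positive) gives $1+\bar\pi^*\psi(t)y\geq\delta\psi_2 M$. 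When $m=-\infty$ (resp.\ $M=+\infty$), the admissibility constraint in Remark \ref{18} forces $\bar\pi^*\leq 0$ (resp.\ $\bar\pi^*\geq 0$), so the product $\bar\pi^*\psi(t)y$ is automatically nonnegative on the unbounded side and the corresponding subcase drops, reproducing exactly the constants listed in cases A, B, C. The algebraic step is the heart of the proof and is clean; the bulk of the effort, and the main obstacle, is the careful subcase enumeration for $|1+\bar\pi^*\psi(t)y|^{-1}$ combined with the admissibility conventions of Remark \ref{18}. A subtler conceptual point is that \eqref{7} holds with equality only in the interior region $\Sigma(t)$ of Proposition \ref{19}, so strictly speaking the bound is established only for $s\in\Sigma(t)$, and the plateau regions (where $\bar\pi_1^*$ need not belong to $\Pi$) would need to be addressed separately if one wishes the estimate to hold uniformly in $s\in\bR$.
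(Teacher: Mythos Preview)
Your proof is essentially the same as the paper's. The paper introduces the affine function $h(\pi)=b(t)-\lambda s-\pi(\sigma(t)^2+\psi(t)^2\sigma_\nu^2)$, notes $h(\bar\pi_1^*)=0$ and $h(\bar\pi^*)=-\int_\bR\frac{\psi(t)^3(\bar\pi^*)^2 y^3}{1+\bar\pi^*\psi(t)y}\,\nu(dy)$, and then applies the Mean Value Theorem to $h^{-1}$; since $h$ is affine this is literally your subtraction step, and the resulting formula for $\bar\pi^*-\bar\pi_1^*$ coincides with yours. The paper is terser about the case analysis for the constant $C$ (it simply writes ``where $C$ is the constant in the statement''), whereas you spell out how the bound on $|1+\bar\pi^*\psi(t)y|^{-1}$ arises from the admissibility constraints in Remark \ref{18}; this is a genuine clarification but not a different argument.

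Your closing caveat about the interior region $\Sigma(t)$ is well-observed: the paper's proof likewise invokes the first-order condition \eqref{7} and hence, strictly speaking, covers only the interior case. Since $\bar\pi_1^*$ as defined in \eqref{15} is unbounded in $s$ while $\bar\pi^*\in\Pi$, the uniform-in-$s$ reading of the statement cannot hold without truncating $\bar\pi_1^*$ as well; the paper does not address this explicitly in the proof (though the numerical section does truncate). So your reservation is apt and applies equally to the original argument.
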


\begin{prop}\label{99}
Let us assume that we are not in {\bf case D} of Remark \ref{18}. Moreover, we suppose $0\in\Pi$ and that $\sigma(t)^2$ and $\mu_F$ are not both identically 0. Then for each $t\in[0,T]$ it holds that
$$
|\bar\pi^*(t,\cdot)-\bar\pi^*_2(t,\cdot)| \leq \frac {\eta \psi_2^2\,C_1\,\sigma_F}{\sigma_1^2+ \eta\,\psi_2^2\, C_2\,\mu_F^2},
$$
where $\sigma_1^2= \min_{[0,T]}\sigma^2(t)$, $\sigma_F$ is the square root of the variance of the random jump size and $C_1,C_2$ are constants depending on $\delta:=\mbox{dist}(\Pi,\partial\widehat\Pi),$ $\max\Pi,$ $\min\Pi,$ $m,$ $M,$ $\psi_2$ according to the following different cases:
\begin{description}
\item[case A] $$
C_1=
\begin{cases}\max\{1,\frac{1}{(\delta \psi_2 M)^2},\frac{1}{(\delta \psi_2 m)^2}\}+\max\{1,\frac{1}{\delta \psi_2 M},\frac{1}{-\delta \psi_2 m}\}, & \mbox{if } m\not=-\infty, M\not=+\infty,\\
\max\{1,\frac{1}{(\delta \psi_2 M)^2}\}+\max\{1,\frac{1}{\delta \psi_2 M}\}, & \mbox{if } m=-\infty, M\not=+\infty,\\
\max\{1,\frac{1}{(\delta \psi_2 m)^2}\}+\max\{1,\frac{1}{-\delta \psi_2 m}\}, & \mbox{if } m\not=-\infty, M=+\infty.
\end{cases}
$$
$$
C_2=\begin{cases}
\frac{1}{(1+\max\Pi \,\psi_2\,\mu_F)^2} & \mbox{if } \mu_F>0,\\
\frac{1}{(1+\min\Pi \,\psi_1\,\mu_F)^2} & \mbox{if } \mu_F<0.
\end{cases}
$$
\item[case B]
$$
C_1=\begin{cases}
\max\{1,\frac{1}{(\delta \psi_2 M)^2}\}+\max\{1,\frac{1}{\delta \psi_2 M}\}, & \mbox{if } M\not=+\infty,\\
1 & \mbox{if } M=+\infty.
\end{cases}
$$
$$
C_2=\frac{1}{(1+\max\Pi \,\psi_2\,\mu_F)^2}.
$$
\item[case C] $$
C_1=\begin{cases}
\max\{1,\frac{1}{(\delta \psi_2 m)^2}\}+\max\{1,\frac{1}{-\delta \psi_2 m}\}, & \mbox{if } m\not=-\infty,\\
1 & \mbox{if } m=-\infty.
\end{cases}
$$
$$
C_2=\frac{1}{(1+\min\Pi \,\psi_1\,\mu_F)^2}.
$$
\end{description}
\end{prop}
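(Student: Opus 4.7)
The plan is to compare $\bar\pi^*$ and $\bar\pi_2^*$ through their respective first-order conditions and to apply the mean value theorem to the strictly concave function $f(\cdot;t,s)$. By Proposition \ref{19}, $\bar\pi^*$ is the unique interior critical point of $f$, so it solves $F_1(\bar\pi^*;t,s)=0$, where $F_1:=f'$ is the left-hand side of \eqref{7}; clearing the common denominator in \eqref{26} shows analogously that $\bar\pi_2^*$ satisfies $F_2(\bar\pi_2^*;t,s)=0$ for
\[
F_2(\bar\pi;t,s):=b(t)-\lambda s-\sigma(t)^2\bar\pi-\frac{\eta\,\bar\pi\,\psi(t)^2\mu_F^2}{1+\bar\pi\,\psi(t)\,\mu_F}.
\]
Differentiability of $F_1$ and the mean value theorem yield a $\xi$ between $\bar\pi^*$ and $\bar\pi_2^*$ with
\[
|\bar\pi^*-\bar\pi_2^*|=\frac{|F_1(\bar\pi_2^*;t,s)-F_2(\bar\pi_2^*;t,s)|}{|F_1'(\xi;t,s)|},
\]
so the proof reduces to an upper bound of the form $\eta\psi_2^2\,C_1\,\sigma_F$ on the numerator and a lower bound of the form $\sigma_1^2+\eta\,\psi_2^2\,C_2\,\mu_F^2$ on the denominator.

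For the numerator, writing $\nu=\eta F$ and combining the two fractions in
\[
F_1(\bar\pi;t,s)-F_2(\bar\pi;t,s)=\eta\,\psi(t)^2\int_\bR\left[\frac{\bar\pi\,\mu_F^2}{1+\bar\pi\,\psi(t)\,\mu_F}-\frac{\bar\pi\, y^2}{1+\bar\pi\,\psi(t)\,y}\right]F(dy)
\]
over a common denominator, I would factor out $(y-\mu_F)$ from the resulting integrand, reflecting the same algebraic identity that made the linear Taylor term in \eqref{40} vanish through the choice $y_0=\mu_F$. A second use of $\int(y-\mu_F)F(dy)=0$ then rewrites the expression as
\[
F_1(\bar\pi;t,s)-F_2(\bar\pi;t,s)=-\frac{\eta\,\psi(t)^2\,\bar\pi}{(1+\bar\pi\,\psi(t)\,\mu_F)^2}\int_\bR\frac{(y-\mu_F)^2}{1+\bar\pi\,\psi(t)\,y}\,F(dy).
\]
I would then bound $\psi(t)\leq\psi_2$, $|\bar\pi|$ by the maximum of $|\max\Pi|$ and $|\min\Pi|$, and the denominators $|1+\bar\pi\,\psi(t)\,y|$ and $(1+\bar\pi\,\psi(t)\,\mu_F)^2$ from below using $\mathrm{dist}(\Pi,\partial\widehat\Pi)\geq\delta$ together with the case-by-case structure of $\widehat\Pi$ from Remark \ref{18}. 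Combining these estimates with $\int(y-\mu_F)^2F(dy)=\sigma_F^2$ and Cauchy--Schwarz (trading one factor of $|y-\mu_F|$ against a uniform constant on $[m,M]$ so that the final bound involves $\sigma_F$ rather than $\sigma_F^2$) yields the desired upper bound $\eta\,\psi_2^2\,C_1\,\sigma_F$.

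For the denominator, starting from
\[
|F_1'(\xi;t,s)|=\sigma(t)^2+\eta\,\psi(t)^2\int_\bR\frac{y^2}{(1+\xi\,\psi(t)\,y)^2}\,F(dy),
\]
the first summand is bounded below by $\sigma_1^2$. For the integral I would apply Cauchy--Schwarz with weight $F$,
\[
\mu_F^2=\left(\int_\bR\frac{y}{1+\xi\,\psi(t)\,y}\,(1+\xi\,\psi(t)\,y)\,F(dy)\right)^2\leq\int_\bR\frac{y^2}{(1+\xi\,\psi(t)\,y)^2}\,F(dy)\cdot\int_\bR(1+\xi\,\psi(t)\,y)^2\,F(dy),
\]
and then bound the last factor pointwise by $\sup_{y\in[m,M]}(1+\xi\,\psi(t)\,y)^2$. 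This produces the lower estimate $\mu_F^2/\sup_y(1+\xi\,\psi(t)\,y)^2$ on the integral; the supremum is attained at the endpoint of $[m,M]$ whose sign matches that of $\xi\psi(t)$, and evaluating it over $\xi\in\Pi$ and $\psi(t)\in[\psi_1,\psi_2]$ according to the sign of $\mu_F$ yields precisely the constant $C_2$ of the statement.

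The main obstacle is the case analysis driven by Remark \ref{18}: the quantities $\mathrm{dist}(\Pi,\partial\widehat\Pi)$, $\min_y|1+\bar\pi\,\psi(t)\,y|$ and $\sup_y(1+\xi\,\psi(t)\,y)^2$ all depend on whether the jumps can be positive, negative or both and on whether $m$ or $M$ is finite. Propagating these sign considerations through $C_1$ and $C_2$ simultaneously produces the bookkeeping-heavy list of subcases of the statement, with the expressions $1/(\delta\psi_2 M)$, $-1/(\delta\psi_2 m)$, $(1+\max\Pi\,\psi_2\,\mu_F)^{-2}$ and $(1+\min\Pi\,\psi_1\,\mu_F)^{-2}$ emerging from the corresponding binding constraints.
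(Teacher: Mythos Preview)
Your overall architecture---compare the two first-order conditions and invoke the mean value theorem---is the same as the paper's, but you apply the MVT to the wrong function, and this causes the constants you obtain not to match the statement.

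\textbf{Denominator.} You apply the MVT to $F_1=f'$, so the denominator is
\[
|F_1'(\xi;t,s)|=\sigma(t)^2+\eta\,\psi(t)^2\int_\bR\frac{y^2}{(1+\xi\psi(t)y)^2}\,F(dy).
\]
Your Cauchy--Schwarz step then produces the lower bound $\mu_F^2/\sup_{y\in[m,M]}(1+\xi\psi(t)y)^2$. But the supremum is taken over $y$, not evaluated at $\mu_F$: it equals $(1+\xi\psi(t)M)^2$ or $(1+\xi\psi(t)m)^2$ at best, and is infinite whenever $M=+\infty$ or $m=-\infty$ (which are permitted subcases in the statement). Even in the bounded case, the resulting constant depends on $m,M$, not on $\mu_F$, so it is \emph{not} the $C_2=(1+\max\Pi\,\psi_2\,\mu_F)^{-2}$ of the proposition. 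The paper avoids this by applying the mean value theorem to $h^{-1}$ with $h:=F_2$ (your notation), so that the denominator is
\[
|h'(\bar\pi)|=\sigma(t)^2+\frac{\eta\,\psi(t)^2\,\mu_F^2}{(1+\bar\pi\,\psi(t)\,\mu_F)^2},
\]
from which $C_2$ drops out directly with no integral to estimate.

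\textbf{Numerator.} Your exact quadratic remainder
\[
F_1-F_2=-\frac{\eta\,\psi(t)^2\,\bar\pi}{(1+\bar\pi\,\psi(t)\,\mu_F)^2}\int_\bR\frac{(y-\mu_F)^2}{1+\bar\pi\,\psi(t)\,y}\,F(dy)
\]
is correct, but it naturally produces a $\sigma_F^2$ bound, not $\sigma_F$. The ``trade one factor of $|y-\mu_F|$ for a uniform constant on $[m,M]$'' step requires $[m,M]$ to be bounded; in the subcases $M=+\infty$ or $m=-\infty$ there is no such uniform constant, and Cauchy--Schwarz does not rescue this (it reintroduces another $\sigma_F$). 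The paper instead bounds $|\phi(y)-\phi(\mu_F)|\leq C_1|y-\mu_F|$ via the mean value theorem applied to $\phi(y)=\bar\pi y^2/(1+\bar\pi\psi(t)y)$, where $\sup|\phi'|$ is finite because $\frac{|\bar\pi z|}{1+\bar\pi\psi(t)z}$ stays bounded even for unbounded $z$. Integrating the linear bound and using $\int|y-\mu_F|\,F(dy)\leq\sigma_F$ then gives the stated $\eta\,\psi_2^2\,C_1\,\sigma_F$.

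In short: swap which function carries the MVT. Use $h=F_2$ for the inverse-function step (so the denominator already has the $\mu_F$-structure of $C_2$), and bound the numerator by a \emph{first-order} MVT on $\phi$ rather than the exact second-order remainder (so you land on $\int|y-\mu_F|\,\nu(dy)\le\eta\sigma_F$).
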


\section{Numerical Example}

In this section we test our trading strategies on one of the most popular electricity price models, namely the factor model in \cite{Benth20121589}. There, the authors conduct a critical comparison of three different spot price models for electricity in the context of day-ahead markets. In fact, this is typically an auction market in which the electricity price is fixed for the subsequent day, so that daily averaged prices are taken into account, over a timeline of years. Consequently, our setting concerns different market and price definitions. To recall, we instead take the point of view of an agent in the intraday market, which is the exchange where the electricity is traded continuously for 8-27 hours (depending on the contract) generally in the form of quarterly or hourly forward contracts. However, we aim to exploit the analysis in \cite{Benth20121589}, where the model is calibrated to Nord Pool Spot market data, for mainly two reasons. Firstly, the stylized features of intraday markets are of similar nature as the ones observed in the day-ahead price series, such as spike behavior, high volatility, leptokurtosis (for a more detailed empirical study see, for instance, \cite{gallana,kieselpara}).  Secondly, the factor model is based on a L\'{e}vy Ornstein-Uhlenbeck process of the same family as the one in \eqref{5}. 

We consider the factor model as in \cite{Benth20121589}, which was originally introduced for electricity price modelling in \cite{MR2323278}. The price dynamics are written as
$$
S(t)=e^{Q(t)} Z(t),
$$
where 
$$
Z(t)=\sum_{i=1}^n w_i\,Y_i(t)
$$
is the deseasonalized price and $Q(t)$
%%%$$
%%%%Q(t)=\alpha+\beta\, t+\gamma \cos(\eps+2\pi \, t)+\delta\cos(\xi+4\pi\, t)
%%%%%$$
is the seasonal component. 
%%%%%%%%%%%%%%%%%%%%%%%%
%The calibrated values are $\alpha=2.9628,\beta=0.1354,\gamma=-0.0737,\delta=0.0117,\eps=6.8662,\xi=0.7464$. 
%%%%%%%%%%%%%%%%%%%
The $w_i$ are positive weights while the factors $Y_i$ $(i=1,...,n)$ are independent non-Gaussian Ornstein-Uhlenbeck processes described by
$$
dY_i(t)=-\lambda_i\,Y_i(t)\,dt + dL_i(t),\qquad Y_i(0)=y_i,\qquad i=1,...,n,
$$
being $L_i$ independent \emph{c\`{a}dl\`{a}g} pure jump additive processes with increasing paths.

As the authors calibrate the model in \cite{Benth20121589}, by comparing the theoretical autocorrelation function to the empirical one, they set the optimal number of factors to $n=2$. The estimated speeds of mean reversion are $\lambda_1=0.0087$ and $\lambda_2=0.3333$. In the paper these two values are interpreted as, respectively, the \emph{base} (slowest) and the \emph{spike} (fastest) signal. 
We start from here to define our equations. Specifically, their data series ranges from 13/07/2000 to 7/08/2008, which comprises, excluding the weekends, $2099$ days. The time unit for $t$ is 1 day. So, in order to adapt it to our timeline, which covers hours of intraday transactions, first we set one hour as our time unit, that is we do the time variable change $u=24\cdot t$ . 
Then, denoting $C:=24$, we set our mean reversion speed $\lambda$ in our own model by rescaling in time the spike speed ($\lambda_2=0.3333$), i.e. we take $\lambda=\lambda_2/C=0.0139$.
The driving process $L$ is a compound Poisson process where the jump intensity, originally adopted in the paper by \cite{gemanR}, is seasonally dependent. Also, the jump size distribution is a Pareto($\alpha$,$z_0$), with $\alpha=2.5406$, $z_0=0.3648$ and density function $f(y)=\frac{\alpha z_0^\alpha}{y^{\alpha+1}}$. Therefore, we are in the case of positive jumps: with the notation of previous sections, $\mbox{supp} (\nu)=[z_0,+\infty)$, i.e. $m=z_0$, $M=+\infty$, $\widehat\Pi=[0,+\infty)$ and $F(dy)=f(y)dy$. The set $\Pi$ can be any compact subset of $\widehat\Pi$ containing $0$.

For our own problem to make sense, another issue to address is to deseasonalize the jump intensity. In details, the form of the intensity is the following
$$
e(t)=\theta\cdot s(t)=\theta\cdot \left( \frac{2}{1+|\sin(\pi\frac{t-\tau}{k})|}-1\right)^d
$$ 
where $\theta=14.0163$ represents the expected number of spikes per time unit at a spike-clustering time, whereas the seasonal parameters are set by the authors' calibration procedure $k=0.5$, $\tau=0.42$, $d=1.0359$. We then decide to compute the integral mean of $e(t)$ over its time periodicity, that is $2k$, obtaining $\mu:=3.7249$, so that, after rescaling, we have our intensity $\eta:=\mu/C=0.1552$. 
%%%%%%%%%%%%%%%%%%%%%%%%%%%%%
\iffalse
Hence
$$
Z(t)=w_1\,Y_1(t)+ w_2\,Y_2(t)
$$
where
$$
dY_1(t)=-\lambda_1\,Y_1(t)\,dt + dL_1(t)\qquad dY_2(t)=-\lambda_2\,Y_2(t)\,dt + dL_2(t)
$$
Thus, by It\^o's formula, setting $w_1=0, w_2=1$,
$$
dS(t)=d(e^{Q(t)} Y_2(t))=Q'(t) e^{Q(t)} Y_2(t) dt + e^{Q(t)} dY_2(t)
$$
$$
=Q'(t) e^{Q(t)} Y_2(t) dt + e^{Q(t)}\,dY_2(t)
$$
$$
=Q'(t) e^{Q(t)} Y_2(t) dt -\lambda_2 e^{Q(t)} \,Y_2(t)\,dt +e^{Q(t)} dL_2(t).
$$
Finally
$$
dS(t)=(Q'(t) -\lambda_2)S(t)\,dt +e^{Q(t)} dL_2(t).
$$
\fi
%%%%%%%%%%%%%%%%%%%%%%%%%%%%%%

To summarize, the electricity price in our hourly intraday market is described by

\begin{equation}\label{20}
dS(t)  = (b(t)-\lambda S(t))\,dt + dL(t),
\end{equation}
where $\lambda=\lambda_2/C=0.0139$ is the mean reversion speed and $L$ is a compound Poisson process with jump intensity $\eta=\mu/C=0.1552$ and jump size distribution a Pareto law of parameters $\alpha=2.5406$ and $z_0=0.3648$. Therefore, there is no Brownian component in the jumps and the coefficient of the jump volatility is normalized to 1 ($\sigma^2\equiv0$ and $\psi\equiv1$). In particular it is important to notice that $L$ is a subordinator, which keeps the price positive. The drift value $b$ cannot be derived directly from \cite{Benth20121589}, being not part of the spike signal and for this reason it will be discussed later. %Our time unit is now 1 hour.

Let us write the equation for the exact normalized strategy $\bar\pi^*=\bar\pi^*(t,s)$, defined in the integral equation \eqref{7}, i.e.
$$
(b(t)-\lambda s) -\eta\,\int_\bR \frac{\bar\pi^* y^2}{1+\bar\pi^* y}\, f(y)\,dy=0,
$$
or, expressing it in terms of the price level $s$,
\begin{equation}\label{16}
s=\frac{b(t)}{\lambda}-\frac{\eta}{\lambda}\cdot\int_{z_0}^\infty \frac{\bar\pi^* y^2}{1+\bar\pi^* y}\,f(y)\,dy,
\end{equation}
where $f(y)=\frac{\alpha z_0^\alpha}{y^{\alpha+1}}$ is the density of a Pareto law and $\eta$ the jump intensity of $L$.
By using a software to integrate exactly the above expression (we used \emph{Mathematica}\texttrademark) and inserting the estimated parameters inside the integral, we get 
$$
s=\frac{b(t)}{0.0139}-6.7233 \cdot\, _2F_1\left(1,1.5406;2.5406;-\frac{2.7412}{\bar\pi^*}\right),
$$
where $_2F_1(a,b;c;z)$ is the hypergeometric function. This explicit formula states the value of the price $s$ with respect to the optimal strategy $\bar\pi^*$. The inverse relation $s\mapsto \bar\pi^*$ can be computed numerically.
As already observed, if we are interested in plotting the functions above, we need to set a value for the drift $b(t)$, being not consistently computable from the analysis in \cite{Benth20121589}. A quick study of the last expression yields the following.

\begin{prop}\label{33} 
Let us recall the definition of the jump measure mean
$$
\mu_L=\int_\bR y\, \nu(dy)=\eta\,\int_\bR y\, f(y)\,dy=\eta\,\mu_F,
$$
where $\mu_F$ is the mean of the jump size distribution $f(y)dy$. For each $t\in[0,T]$, if the drift $b(t)$ in the price equation \eqref{20} is nonpositive, then $\bar\pi^*(t,s)\equiv 0$ for any $s\in\bR^+$. Furthermore, if the drift $b(t)$ is greater than or equal to $\mu_L$ for $t\in[0,T]$, then $\bar\pi^*(t,s)\equiv \max\Pi$ for any $s\leq s^*(t,\max\Pi)$, where $s^*(t,\cdot)$ denotes the inverse function of $\bar\pi^*(t,\cdot)$ (within its range of invertibility). 
\end{prop}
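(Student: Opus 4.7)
The plan is to handle the two assertions separately, exploiting the structural results of Proposition \ref{19} together with the strict concavity of $f(\cdot;t,s)$ on the compact $\Pi$. Throughout, we are in \textbf{case B} of Remark \ref{18} (positive jumps, subordinator) with $\sigma\equiv 0$, $\psi\equiv 1$ and $m=z_0>0$, so $\Pi\subset\widehat\Pi=[0,+\infty)$, $\min\Pi=0$, and $\bar\pi^*(t,s)\geq 0$ pointwise.

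For the first claim, I would leverage Proposition \ref{19}(4): in case B there exists $s_2(t)$, above which $\bar\pi^*(t,s)\equiv\min\Pi=0$. Using the Remark after Proposition \ref{19}, $s_2(t)=\lim_{\bar\pi\to\bar\pi_1}s^*(t,\bar\pi)=\lim_{\bar\pi\to 0}s^*(t,\bar\pi)$, and dominated convergence (with dominant $y$, which is $\nu$-integrable) applied to the formula $s^*(t,\bar\pi)=\lambda^{-1}\bigl(b(t)-\int \bar\pi y^2/(1+\bar\pi y)\,\nu(dy)\bigr)$ gives $s_2(t)=b(t)/\lambda$. Hence if $b(t)\leq 0$, any $s\in\bR^+$ satisfies $s\geq 0\geq b(t)/\lambda=s_2(t)$, whence $\bar\pi^*(t,s)=0$. (Equivalently, inspection of the first-order condition \eqref{7} at $s=b(t)/\lambda$ forces $\int \bar\pi^* y^2/(1+\bar\pi^* y)\,\nu(dy)=0$, which with $\bar\pi^*\geq 0$ and $y\geq z_0>0$ gives $\bar\pi^*=0$ directly.)

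For the second claim, I would use the strict concavity of $f(\cdot;t,s)$ on $\Pi$: its derivative $f'(\bar\pi;t,s)=b(t)-\lambda s-\int \bar\pi y^2/(1+\bar\pi y)\,\nu(dy)$ is strictly decreasing in $\bar\pi$, so $f'(\max\Pi;t,s)\geq 0$ implies $f'\geq 0$ on all of $\Pi$, forcing the constrained maximum to be attained at the right endpoint $\max\Pi$. Rearranging the condition $f'(\max\Pi;t,s)\geq 0$ yields exactly $s\leq s^*(t,\max\Pi)$ with $s^*(t,\max\Pi)=\lambda^{-1}\bigl(b(t)-\int \max\Pi\cdot y^2/(1+\max\Pi\cdot y)\,\nu(dy)\bigr)$ (this is the $s_1(t)$ of the Remark after Proposition \ref{19}, specialized to $\bar\pi_2=\max\Pi$). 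This already proves $\bar\pi^*(t,s)\equiv\max\Pi$ for every $s\leq s^*(t,\max\Pi)$, independently of any extra assumption on $b(t)$.

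The role of the hypothesis $b(t)\geq \mu_L$ is therefore only to ensure the conclusion is non-trivial on the positive half-line: the elementary inequality $\bar\pi y^2/(1+\bar\pi y)\leq y$, valid for $\bar\pi,y\geq 0$, yields $\int \max\Pi\cdot y^2/(1+\max\Pi\cdot y)\,\nu(dy)\leq \int y\,\nu(dy)=\mu_L$, so $s^*(t,\max\Pi)\geq \lambda^{-1}(b(t)-\mu_L)\geq 0$ and the interval $[0,s^*(t,\max\Pi)]\subset\bR^+$ is non-empty. I do not foresee any real obstacle: both parts reduce to elementary first-order/concavity arguments once the formulas from Section 3.1 and the Remark after Proposition \ref{19} are recalled; the only delicate point is verifying the dominated-convergence bound in the limit defining $s_2(t)$ and being careful about the signs, given that we are in the subordinator setting of case B.
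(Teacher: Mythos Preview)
Your proposal is correct and follows essentially the same route as the paper: both arguments analyze the first-order condition \eqref{7}/\eqref{16}, use the monotonicity of $\bar\pi\mapsto\int \bar\pi y^2/(1+\bar\pi y)\,\nu(dy)$ together with its limits $0$ (at $\bar\pi=0$) and $\mu_L$ (as $\bar\pi\to\infty$), and conclude via the concavity/boundary structure of Proposition~\ref{19}. Your treatment of the second claim is in fact more explicit than the paper's (which simply says ``the same reasonings''), in particular your observation that the inequality $\bar\pi y^2/(1+\bar\pi y)\le y$ isolates the role of the hypothesis $b(t)\ge\mu_L$ as ensuring $s^*(t,\max\Pi)\ge 0$; this is exactly the content behind the paper's $\lim_{\bar\pi\to\infty}$ computation.
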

\begin{proof}
On one hand, the admissible values of the strategy belong to $\Pi$, which is any compact subset of $\widehat\Pi=[0,+\infty)$ containing 0. On the other hand, the price $s$ can take only positive values by construction (recall that $L$ is a subordinator). In \eqref{16} the function $\bar\pi^*\mapsto \eta\,\int_{z_0}^\infty \frac{\bar\pi^* y^2}{1+\bar\pi^* y}\,f(y)\,dy$ is increasing on the positive real line and it holds
\begin{align*}
\lim_{\bar\pi^*\to 0} \eta\,\int_{z_0}^\infty \frac{\bar\pi^* y^2}{1+\bar\pi^* y}\,f(y)\,dy&=0,
\\
\lim_{\bar\pi^*\to +\infty}\eta\,\int_{z_0}^\infty \frac{\bar\pi^* y^2}{1+\bar\pi^* y}\,f(y)\,dy&=\mu_L>0.
\end{align*}
Therefore, the first order condition in \eqref{16} is not satisfied for admissible values of $s$ and $\bar\pi^*$ whenever $b(t)\leq0$ since
$$
s=\frac 1\lambda \left(b(t)-\eta\cdot\int_{z_0}^\infty \frac{\bar\pi^* y^2}{1+\bar\pi^* y}\,f(y)\,dy\right)<0.
$$
This means that the maximum in \eqref{24} is attained at the boundary of $\Pi$ and more exactly when $\bar\pi^*=\min\Pi=0$ (cf. Proposition \ref{19}). The case $b(t)\geq\mu_L$ can be proven along exactly the same reasonings.
\end{proof}
Now, we write from \eqref{15} the first approximated strategy:
$$
\bar\pi_1^*(t,s)=
\begin{cases}
\frac{b(t)-\lambda s}{\sigma_L^2},&\mbox{if }\bar\pi_1^*(t,s)\in\Pi,\\
0,&\mbox{if }\bar\pi_1^*(t,s)\not\in\Pi,
\end{cases}
$$
A straightforward computation yields $\sigma^2_L=\int_{m}^M y^2\,\nu(dy)=\eta\cdot\int_{z_0}^\infty y^2\,f(y)\,dy=\eta\cdot 0.6254=0.0971$. Observe that, in order that condition $\bar\pi_1^*(t,s)\in\Pi$ holds, $b(t)$ must be non-negative. Having chosen any $\Pi$  compact subset of $\widehat\Pi=[0,+\infty)$ containing $[0,\frac{b(t)}{\sigma^2_L}]$ for any $t\in[0,T]$, this reads
$$
\bar\pi_1^*(t,s)=
\begin{cases}
\frac{\lambda}{\sigma^2_L}\cdot\left(\frac{b(t)}{\lambda}-s\right), 
&\mbox{if }0\leq s\leq\frac{b(t)}{\lambda},
\\
0, &\mbox{if }s>\frac{b(t)}{\lambda}.
\end{cases}
$$
%%%%%
%From the reverse point of view, neglecting for a moment the dependence of the price level on time, we may write
%$$
%s=\frac{b}{\lambda}-\frac{\sigma^2_L}{\lambda} \, \bar\pi_1^*, \qquad\mbox{if }\bar\pi_1^*\geq0.
%$$
Finally, from \eqref{32} we get the second approximation of the optimal strategy:
$$
\bar\pi_2^*(t,s)=
\begin{cases}
-\frac{\eta}{\mu_L}\,\frac{b(t)-\lambda s}{b(t)-\lambda s-\mu_L},
& \mbox{if }0\leq s\leq\frac{b(t)}{\lambda},
\\
0, &\mbox{if }s>\frac{b(t)}{\lambda},
\end{cases}
$$
still with the condition that $\bar\pi_2^*(t,s)\in\Pi$ (cf. Proposition \ref{33}). %The constant $\mu_L$ is equal to $i\mu_F=...$.

\begin{figure} % "[t!]" placement specifier just for this example
\begin{subfigure}{0.48\textwidth}
\includegraphics[width=\linewidth]{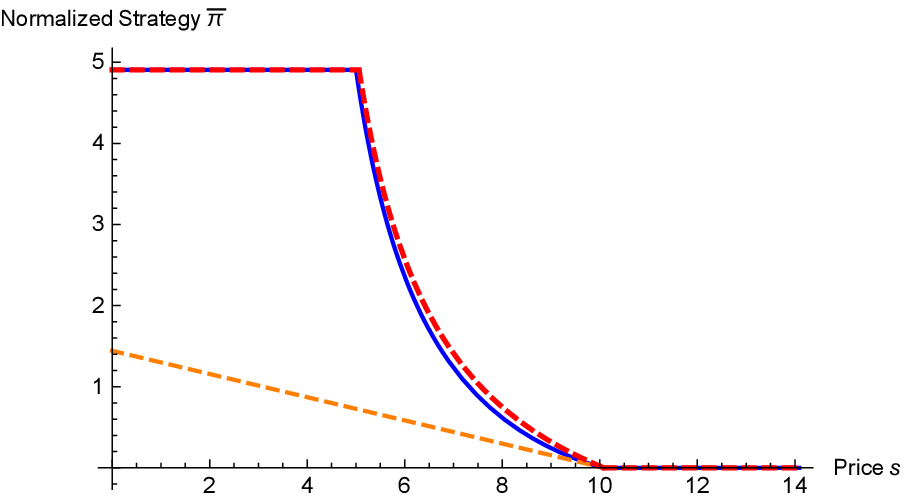}
\caption{$b=150\%$ of $\mu_L$. %Blue line: (exact value) $\bar\pi^*$, orange line: $\bar\pi_1^*$, green line: $\bar\pi_2^*$. The blue and green curves are truncated as soon as they reach the value $\bar\pi=\max\Pi$, otherwise they would diverge to infinity approaching $s=0$ (cf. Proposition \ref{33}).
} \label{fig:a}
\end{subfigure}\hspace*{\fill}
\begin{subfigure}{0.48\textwidth}
\includegraphics[width=\linewidth]{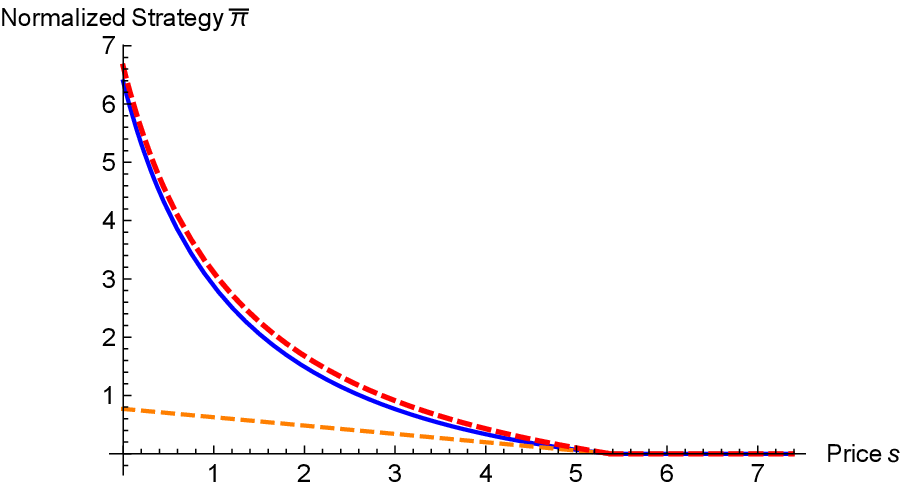}
\caption{$b=80\%$ of $\mu_L$. %Blue line: (exact value) $\bar\pi^*$, orange line: $\bar\pi_1^*$, green line: $\bar\pi_2^*$.
} \label{fig:b}
\end{subfigure}

\vspace{30pt}
\medskip

\begin{subfigure}{0.48\textwidth}
\includegraphics[width=\linewidth]{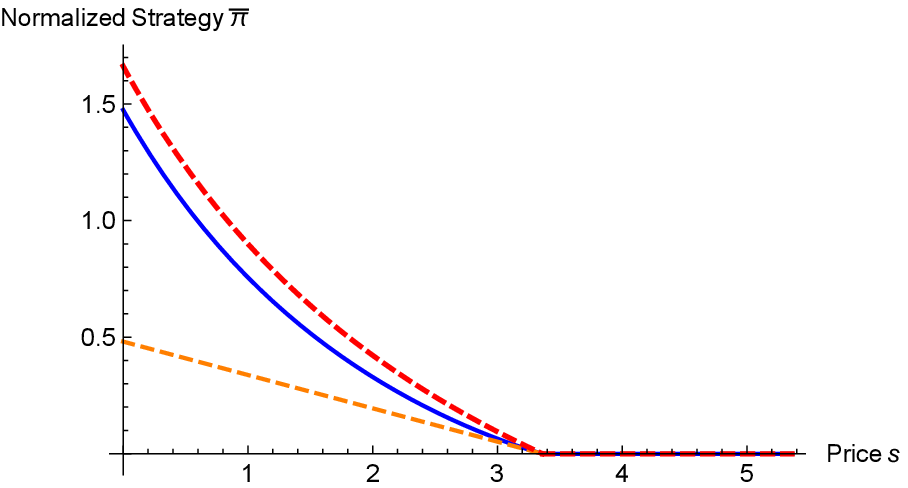}
\caption{$b=50\%$ of $\mu_L$. %Blue line: (exact value) $\bar\pi^*$, orange line: $\bar\pi_1^*$, green line: $\bar\pi_2^*$.
} \label{fig:c}
\end{subfigure}\hspace*{\fill}
\begin{subfigure}{0.48\textwidth}
\includegraphics[width=\linewidth]{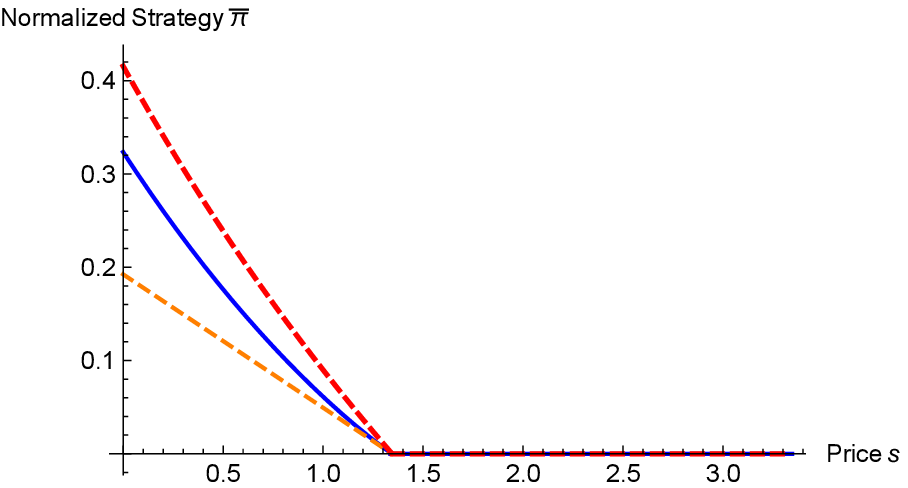}
\caption{$b=20\%$ of $\mu_L$. %Blue line: (exact value) $\bar\pi^*$, orange line: $\bar\pi_1^*$, green line: $\bar\pi_2^*$.
} \label{fig:d}
\end{subfigure}

\vspace{30pt}

\caption{Blue line: (exact value) $\bar\pi^*$, orange line: $\bar\pi_1^*$, red line: $\bar\pi_2^*$.} \label{fig:1}
\end{figure}

We now plot the different strategies: the exact and the two approximations. In view of Proposition \ref{33} we do it for the following values of the drift $b=150,80, 50,20\%$ of $\mu_L$. In such a way we can understand their behavior in the most representative cases (see Figure \ref{fig:1}). Remind that if $s\geq\frac b \lambda$, then $\bar\pi^*$, $\bar\pi_1^*$ and $\bar\pi_2^*$ are all identically equal to 0. We observe from our numerical results the following facts:
\begin{enumerate}
\item The order among the strategies: $ \bar\pi_1^*\leq\bar\pi^*\leq\bar\pi_2^*$ holds.
\item As $b$ approaches (and exceeds) $\mu_L$, the second approximation $\bar\pi_2^*$ gets much better, until it becomes almost indistinguishable from the exact strategy, while if $b$ approaches $0$, the (considerable) error between the first approximation $\bar\pi_1^*$ and the exact value decreases. In both cases, the shapes of the two approximations are similar to the one of $\bar\pi^*$. For instance, in the latter case the optimal strategy flattens out and looks like a straight line.
\item The bad performance of $\bar\pi_1^*$ may be explained from the fact that it does not satisfy the requirements for the estimate in Proposition \ref{34}. This happens because the Pareto law estimated by \cite{Benth20121589} has parameter $\alpha=2.5406<3$, which means that it admits finite second moment but not finite third moment (see assumptions of Proposition \ref{34}). Moreover, this approximation is natural for processes with small jumps, whereas the second one, i.e. $\bar\pi_2^*$, is more consistent with general jump processes since it is constructed around the jump measure mean $\mu_L$. What is particularly interesting is that an economically meaningful quantity as the Merton Ratio, that we translated into $\bar\pi_1^*$ (see Equation \ref{15}), performs generally much worse than the Taylor approximation.
\item As we already mentioned, essentially the same approximation $\bar\pi_1^*$ is numerically investigated in \cite{pasin}. What the authors found there is that it works rather well for three popular price models. The difference from our setting, which could even explain why we observe such an unsatisfactory performance, is that they are in the context of exponentially additive models, while our price dynamics are purely additive and mean-reverting.
\end{enumerate}

%\clearpage
%%%%%%%%%%%%%%%%%%%%%%%%%%%%%%%%%%%
\iffalse
\begin{figure}
\centering
\includegraphics[width=\textwidth]{intraday_perc=150}
\caption{}
\end{figure}

\begin{figure}
\centering
\includegraphics[width=\textwidth]{intraday_perc=80}
\caption{}
\end{figure}

\begin{figure}
\centering
\includegraphics[width=\textwidth]{intraday_perc=50}
\caption{}
\end{figure}

\begin{figure}
\centering
\includegraphics[width=\textwidth]{intraday_perc=20}
\caption{}
\end{figure}

\fi
%%%%%%%%%%%%%%%%%%%%%%%%%%%%%%%%%%%%%%%%%

%\newpage

\appendix

\section{}

The following lemmas are auxiliary results for Proposition \ref{z}. Let us recall that $S=S^{t,s}$ is described by
\begin{align*}
dS(u)  &= (b(u)-\lambda S(u))\,du + \psi(u)\,\int_\bR y\,\overline N(dy,du), \qquad u\in(t,T],\\
S(t)&=s,
\end{align*}
and can be written explicitly as
$$
S^{t,s}(u)=s e^{-\lambda (u-t)}+\int_t^u e^{-\lambda (u-v)}b(v)\, dv+\int_t^u \int_\bR e^{-\lambda (u-v)}\psi(v) y\,\overline N(dy,dv). 
$$
Furthermore, the candidate solution for the PIDE in Proposition \ref{z} is defined as 
\begin{equation}\label{G}
G(t,s)=\bE\left[\int_t^T f^*(u,S^{t,s}(u)) \,du\right].
\end{equation}

\begin{lemma}\label{a}
For all $t\in[0,T]$ and $s\in\bR$, it holds that 
$$
\bE\left[\int_t^T |S^{t,s}(u)| \,du\right]<\infty.
$$
\end{lemma}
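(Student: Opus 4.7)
The plan is to use the explicit representation of $S^{t,s}$ from \eqref{9} (adapted to the time-dependent drift), split it into a deterministic part and a martingale jump part, and bound the expected absolute value by passing through the $L^2$-norm via the Cauchy--Schwarz inequality.

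More concretely, I would write
\begin{equation*}
S^{t,s}(u) = s\, e^{-\lambda(u-t)} + \int_t^u e^{-\lambda(u-v)} b(v)\, dv + M(u),
\end{equation*}
where $M(u) := \int_t^u\int_\bR e^{-\lambda(u-v)} \psi(v)\, y\, \overline N(dy,dv)$ is a square-integrable martingale thanks to \eqref{variance}. The triangle inequality gives
\begin{equation*}
\bE|S^{t,s}(u)| \leq |s| + \int_t^u |b(v)|\, dv + \bE|M(u)| \leq |s| + T\,\|b\|_\infty + \sqrt{\bE[M(u)^2]}\,,
\end{equation*}
by Jensen/Cauchy--Schwarz. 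By the It\^o isometry for compensated Poisson integrals,
\begin{equation*}
\bE[M(u)^2] = \int_t^u e^{-2\lambda(u-v)} \psi(v)^2\, dv \cdot \int_\bR y^2\,\nu(dy) \leq \psi_2^2\, (u-t) \int_\bR y^2\,\nu(dy),
\end{equation*}
and the last integral is finite since $\int(1\wedge y^2)\,\nu(dy)<\infty$ by the definition of a L\'evy measure and $\int_{|y|\geq 1} y^2\,\nu(dy)<\infty$ by \eqref{variance}.

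Combining the bounds yields an estimate of the form $\bE|S^{t,s}(u)| \leq C(1 + |s| + \sqrt{u-t})$ uniformly in $u\in[t,T]$ with a constant $C$ depending only on $T, \|b\|_\infty, \psi_2, \lambda$ and $\int y^2\,\nu(dy)$. Since the right-hand side is a continuous (hence bounded and integrable) function of $u$ on $[t,T]$, Tonelli's theorem allows us to interchange the integral and expectation, giving
\begin{equation*}
\bE\left[\int_t^T |S^{t,s}(u)|\,du\right] = \int_t^T \bE|S^{t,s}(u)|\,du < \infty.
\end{equation*}
There is no real obstacle here; the statement is essentially an immediate consequence of the square-integrability assumption \eqref{variance} on the L\'evy measure together with the continuity (and hence boundedness on $[0,T]$) of the coefficients $b,\psi$.
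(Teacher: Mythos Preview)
Your proof is correct and follows essentially the same approach as the paper: explicit representation of $S^{t,s}$, triangle inequality, Cauchy--Schwarz on the compensated jump integral followed by the It\^o isometry, and Tonelli to swap integral and expectation. The only cosmetic differences are that the paper keeps the sharper bounds $\int_t^u e^{-\lambda(u-v)}\,dv = (1-e^{-\lambda(u-t)})/\lambda$ and $\int_t^u e^{-2\lambda(u-v)}\,dv = (1-e^{-2\lambda(u-t)})/(2\lambda)$ where you use the cruder $e^{-\lambda(u-v)}\le 1$, and one terminological slip: the process $u\mapsto M(u)$ is \emph{not} a martingale in $u$ (the integrand $e^{-\lambda(u-v)}$ depends on the upper limit), though for each fixed $u$ it is indeed a centered, square-integrable random variable to which the isometry applies, so your computation is unaffected.
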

\begin{proof}
For $u\in[t,T]$, we have
\begin{align*}
\bE\left[|S^{t,s}(u)|\right]&\leq |s| e^{-\lambda (u-t)}
+\left|\int_t^u e^{-\lambda (u-v)}b(v)\, dv\right| 
&+\bE\left[\left|\int_t^u \int_\bR e^{-\lambda (u-v)} \psi(v)  y \, \overline N(dy,dv)\right|\right] 
\end{align*}
Since
\begin{align*}
\left(\bE\left[\left|\int_t^u \int_\bR e^{-\lambda (u-v)} \psi(v)  y \, \overline N(dy,dv)\right|\right]\right)^2
&\leq \bE\left[\left(\int_t^u \int_\bR e^{-\lambda (u-v)} \psi(v)  y \, \overline N(dy,dv)\right)^2\right] \\
&=\int_t^u \int_\bR e^{-2\lambda (u-v)} \psi(v)^2  y^2 \, \nu(dy)dv \\
&\leq \psi_2^2 \left( \int_\bR  y^2 \, \nu(dy)\right) 
\left(\frac{1- e^{-2\lambda (u-t)}}{2\lambda}\right)
\end{align*}
and 
$$
\left|\int_t^u e^{-\lambda (u-v)}b(v)\, dv\right| \leq
\int_t^u e^{-\lambda (u-v)}|b(v)|\, dv\leq C \left(\frac{1- e^{-\lambda (u-t)}}{\lambda}\right)
$$
we find that 
$$
\int_t^T \bE\left[|S^{t,s}(u)|\right] \,du<\infty.
$$
We conclude by Tonelli's Theorem.
\end{proof}

\begin{lemma}\label{b}
The function $G:[0,T]\times\bR\to\bR$ in Equation \eqref{G} is well defined. In particular, 
$$
\bE\left[\int_t^T |f^*(u,S^{t,s}(u))| \,du\right]<\infty.
$$
\end{lemma}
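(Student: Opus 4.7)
The plan is to combine the linear growth bound on $f^*$ from Proposition \ref{91} with the integrability of $S^{t,s}$ established in Lemma \ref{a}, and then invoke Tonelli's theorem to interchange expectation and time integral.

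More precisely, by Proposition \ref{91} there exists a constant $C>0$ (depending only on $\lambda, T, \|b\|_\infty, \|\sigma\|_\infty, \|\psi\|_\infty$) such that
$$
|f^*(u,z)| \leq C(1+|z|), \qquad \text{for all } (u,z)\in[0,T]\times\bR.
$$
Substituting $z = S^{t,s}(u)$ and integrating in time and expectation, Tonelli's theorem yields
$$
\bE\left[\int_t^T |f^*(u,S^{t,s}(u))|\, du\right] \leq C\int_t^T \bigl(1+\bE[|S^{t,s}(u)|]\bigr)\, du = C(T-t) + C\bE\left[\int_t^T |S^{t,s}(u)|\, du\right].
$$
By Lemma \ref{a}, the right-hand side is finite, proving both that the integrand defining $G(t,s)$ in \eqref{G} is integrable and that $G$ is well-defined and real-valued on $[0,T]\times\bR$.

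There is no substantive obstacle here: the statement is essentially a bookkeeping corollary of the two preceding results, and the only point requiring attention is the legitimacy of applying Tonelli, which is guaranteed since $|f^*(u,S^{t,s}(u))|$ is nonnegative and measurable in $(u,\omega)$ (the measurability of $(u,\omega)\mapsto S^{t,s}(u,\omega)$ follows from the càdlàg property of $S$, and $f^*$ is continuous by Proposition \ref{91}).
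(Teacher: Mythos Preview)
Your proof is correct and matches the paper's approach exactly: both combine the linear growth bound on $f^*$ from Proposition~\ref{91} with Lemma~\ref{a} to bound the expected time integral. The paper's version is just a terser one-line version of your argument.
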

\begin{proof}
Just observe that 
$$
|G(t,s)|\leq \bE\left[\int_t^T |f^*(u,S^{t,s}(u))| \,du\right]\leq C \left(1+\bE\left[\int_t^T |S^{t,s}(u)| \,du\right]\right),
$$
which is finite by Lemma \ref{a}.
\end{proof}

\begin{lemma}\label{c}
The function $G:[0,T]\times\bR\to\bR$ defined in \eqref{G} is continuous in the time variable for any fixed $s\in\bR$ and continuously differentiable  in $s$ for any fixed time $t\in[0,T)$ with bounded derivative. Specifically, 
$$
\partial_s G(t,s) = \bE\left[-\int_t^T \lambda e^{-\lambda (u-t)} \bar\pi^*(u,S^{t,s}(u)) \,du\right].
$$
Furthermore, $\partial_s G(t,s)$ is Lipschitz continuous in the variable $s$ uniformly in $t\in[0,T]$.
\end{lemma}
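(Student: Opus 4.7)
The plan is to establish the three claims in sequence, exploiting the affine dependence $S^{t,s+h}(u) - S^{t,s}(u) = h\,e^{-\lambda(u-t)}$, the boundedness/continuity/uniform Lipschitz regularity of $\bar\pi^*$ supplied by Proposition \ref{19}, and the identity $\partial_s f^*(u,z) = -\lambda \bar\pi^*(u,z)$ from Proposition \ref{91}. Everything then reduces to careful applications of the mean value theorem and dominated convergence.

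\textbf{Differentiability formula.} Fix $t\in[0,T)$ and $s\in\bR$. The mean value theorem produces, for some random $\xi_{u,h}$ between $S^{t,s}(u)$ and $S^{t,s+h}(u)$,
$$
\frac{f^*(u,S^{t,s+h}(u)) - f^*(u,S^{t,s}(u))}{h} = -\lambda\, e^{-\lambda(u-t)}\, \bar\pi^*(u,\xi_{u,h}).
$$
Since $\|\bar\pi^*\|_\infty<\infty$ ($\Pi$ is compact) and $\bar\pi^*(u,\cdot)$ is continuous, the right-hand side is uniformly bounded by $\lambda\|\bar\pi^*\|_\infty$ and converges $\bP$-a.s. to $-\lambda e^{-\lambda(u-t)}\bar\pi^*(u,S^{t,s}(u))$ as $h\to0$. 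Dominated convergence through the time integral and the expectation yields the stated formula and the uniform estimate $|\partial_s G(t,s)|\leq \|\bar\pi^*\|_\infty$.

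\textbf{Continuity and uniform Lipschitz continuity in $s$.} Continuity of $\partial_s G(t,\cdot)$ follows by another dominated-convergence step, using continuity of $\bar\pi^*(u,\cdot)$ and of $s\mapsto S^{t,s}(u)$. For the uniform Lipschitz estimate, Proposition \ref{19}(5) gives a Lipschitz constant $L$ for $\bar\pi^*(u,\cdot)$ independent of $u$; combined with $|S^{t,s}(u)-S^{t,s'}(u)|=|s-s'|e^{-\lambda(u-t)}$ this gives
$$
|\partial_s G(t,s)-\partial_s G(t,s')| \leq \lambda L|s-s'|\int_t^T e^{-2\lambda(u-t)}\,du \leq \tfrac{L}{2}|s-s'|,
$$
uniformly in $t\in[0,T]$.

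\textbf{Continuity in $t$.} Fix $s$ and $h>0$ (the case $h<0$ is analogous). Decompose
$$
G(t+h,s)-G(t,s) = \bE\Big[\int_{t+h}^T \!(f^*(u,S^{t+h,s}(u))-f^*(u,S^{t,s}(u)))\,du\Big] - \bE\Big[\int_t^{t+h}\! f^*(u,S^{t,s}(u))\,du\Big].
$$
The boundary term vanishes as $h\to0$ by the linear growth of $f^*$ from Proposition \ref{91} and Lemma \ref{a}. The flow-change term is bounded by $K\int_{t+h}^T \bE|S^{t+h,s}(u)-S^{t,s}(u)|\,du$, where $K$ is the global Lipschitz constant of $f^*(u,\cdot)$ (from the boundedness of $\partial_s f^*$); a direct computation from the explicit representations of $S^{t,s}$ and $S^{t+h,s}$, together with the It\^o isometry for the compensated-Poisson integral, gives $\bE|S^{t+h,s}(u)-S^{t,s}(u)|\to 0$ pointwise and uniformly bounded in $h$, so dominated convergence closes the argument.

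\textbf{Main obstacle.} The least routine step is continuity in $t$: one must simultaneously control the shrinking integration window and the change of initial condition in the SDE, which requires a quantitative $L^1$-estimate on $S^{t+h,s}-S^{t,s}$ derived from the explicit OU representation, together with the global Lipschitz constant of $f^*$ furnished by Proposition \ref{91}.
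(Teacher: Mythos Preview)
Your proof is correct and follows essentially the same route as the paper: the same decomposition for continuity in $t$ (shrinking window plus flow-change term, the latter controlled via Lipschitz continuity of $f^*$ and an $L^1$-estimate on $S^{t+h,s}-S^{t,s}$), the same Lipschitz estimate for $\partial_s G$ via Proposition~\ref{19}(5), and the differentiability in $s$ via the identity $\partial_s f^*=-\lambda\bar\pi^*$ together with $\partial_s S^{t,s}(u)=e^{-\lambda(u-t)}$. The only cosmetic difference is that you phrase the differentiation-under-the-integral step through the mean value theorem, whereas the paper invokes the standard dominated-differentiation theorem directly; the underlying estimates are identical.
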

\begin{proof}
For the continuity observe that 
\begin{align*}
G(t+h,s)-G(t,s)=\bE\left[\int_{t+h}^T f^*(u,S^{t+h,s}(u)) \,du\right]
-\bE\left[\int_{t}^T f^*(u,S^{t,s}(u))\,du\right] \\
=\bE\left[\int_{t+h}^T [f^*(u,S^{t+h,s}(u))-f^*(u,S^{t,s}(u))]\,du \right]
-\bE\left[\int_{t}^{t+h} f^*(u,S^{t,s}(u))\,du\right].
\end{align*}
As $h$ tends to zero, the second term vanishes by the dominated convergence theorem (cf. Lemma \ref{b}). For the first term observe that
\begin{align*}
\int_{t+h}^T \bE\left[|f^*(u,S^{t+h,s}(u))-f^*(u,S^{t,s}(u))| \right]\,du 
\leq L \int_{t+h}^T \bE\left[|S^{t+h,s}(u)-S^{t,s}(u)| \right]\,du.
\end{align*}
Hence,
\begin{align*}
\bE[|S^{t+h,s}(u)-S^{t,s}(u)|]&\leq \bE[(S^{t+h,s}(u)-S^{t,s}(u))^2]\\
&\leq 3 s^2 e^{-2\lambda (u-t)}(e^{\lambda h}-1)^2
+3\left(\int_t^{t+h} e^{-\lambda (u-v)}b(v)\, dv\right)^2 \\
&+3 \left( \int_\bR y^2 \, \nu(dy)\right)\left(\int_t^{t+h}  e^{-2\lambda (u-v)} \psi(v)^2 dv \right),
\end{align*}
which is Lebesgue-integrable for $u\in[t,T]$ and approaches zero as $h$ tends to zero. Then, by the dominated convergence theorem, it holds that
$$
\int_{t+h}^T \bE\left[|S^{t+h,s}(u)-S^{t,s}(u)| \right]\,du
$$
vanishes and that $G(\cdot,s)$ is continuous.

In order to prove differentiability, we apply the classical theorem about differentiation under the integral sign. First, define 
$$
F(t,s):=\int_t^T f^*(u,S^{t,s}(u)) \,du.
$$
Since $f^*(u,S^{t,s}(u))$ is continuously differentiable for each $u$ with partial derivative dominated by an integrable function:
$$
\partial_s  f^*(u,S^{t,s}(u))=
-\lambda e^{-\lambda (u-t)} \bar\pi^*(t,S^{t,s}(u))\leq C\, e^{-\lambda (u-t)},  
$$
we have
$$
\partial_s F(t,s)=-\lambda\int_t^T e^{-\lambda (u-t)} \bar\pi^*(t,S^{t,s}(u)) \,du.
$$
With the same argument, since 
$$
G(t,s)=\bE\left[F(t,s)\right],
$$
where $F(t,s)$ is differentiable with dominated derivative, we get the statement.

Fix $t\in[0,T]$. Since
$$
\partial_s G(t,s) = \bE\left[-\int_t^T \lambda e^{-\lambda (u-t)} \bar\pi^*(u,S^{t,s}(u)) \,du\right]
$$
and $\bar\pi^*(t,\cdot)$ is uniformly Lipschitz continuous (cf. Proposition \ref{19}), we have
\begin{align*}
|\partial_s G(t,s+h)-\partial_s G(t,s)| =&\left|\bE\left[\int_t^T \lambda e^{-\lambda (u-t)} (\bar\pi^*(u,S^{t,s+h}(u))-\bar\pi^*(u,S^{t,s}(u))) \,du\right]\right|\\
\leq& \bE\left[\int_t^T \lambda e^{-\lambda (u-t)} \left|\bar\pi^*(u,S^{t,s+h}(u))-\bar\pi^*(u,S^{t,s}(u))\right| \,du\right]\\
\leq& L\, \bE\left[\int_t^T \lambda e^{-\lambda (u-t)} \left|S^{t,s+h}(u)-S^{t,s}(u)\right| \,du\right]\\
=& C |h| \left(\int_t^T \lambda e^{-2\lambda (u-t)} \,du\right)\\
=& C |h|,
\end{align*}
where $C$ is a constant depending only on $T$, $\lambda$ and the Lipschitz constant $L$ of $\bar\pi^*(u,\cdot)$ (which is independent of $u$).
\end{proof}

\begin{lemma}\label{d}
For each $s\in\bR$, $t\in[0,T)$ and $h>0$, it holds that 
\begin{align*}
\int_t^{t+h}\int_\bR\bE^{t,s}\left[\bigl(G(t+h,S(u)+\psi(u) y)-G(t+h,S(u))\bigr)^2\right] \nu(dy)du
\end{align*}
and
\begin{align*}
\int_t^{t+h} \int_\bR \bE^{t,s}[|G(t+h,S(u)+\psi(u) y)-G(t+h,S(u))-\partial_s G(t+h,S(u))\psi(u) y|] \nu(dy)du
\end{align*}
are finite.
\end{lemma}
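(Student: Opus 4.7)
The plan is to exploit the regularity of $G$ already established in Lemma \ref{c}: namely, that $G(t,\cdot)$ is continuously differentiable with bounded derivative and, moreover, that $\partial_s G(t,\cdot)$ is Lipschitz continuous uniformly in $t$. Since both integrals are estimated pointwise in $u$ and $y$ before integration, the task reduces to producing pointwise bounds of order $y^2$, which are $\nu$-integrable by the standing assumption \eqref{variance} together with $\int_\bR(1\wedge y^2)\nu(dy)<\infty$.

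For the first integrand, I would simply apply the mean value theorem together with the boundedness of $\partial_s G$: for any $z\in\bR$ and any $w\in\bR$,
\begin{equation*}
\bigl|G(t+h,z+w)-G(t+h,z)\bigr|\leq \|\partial_s G\|_\infty\, |w|.
\end{equation*}
Substituting $z=S(u)$, $w=\psi(u)y$ and taking expectations gives
\begin{equation*}
\bE^{t,s}\bigl[(G(t+h,S(u)+\psi(u)y)-G(t+h,S(u)))^2\bigr]\leq \|\partial_s G\|_\infty^2\,\psi_2^2\, y^2,
\end{equation*}
and integration against $du\,\nu(dy)$ over $[t,t+h]\times\bR$ is finite by \eqref{variance}.

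For the second integrand, the Lipschitz continuity (uniform in $t$) of $\partial_s G$ with some constant $L$, established in Lemma \ref{c}, yields the standard second-order estimate
\begin{equation*}
\bigl|G(t+h,z+w)-G(t+h,z)-\partial_s G(t+h,z)\,w\bigr|\leq \tfrac{L}{2}\, w^2,
\end{equation*}
obtained by writing $G(t+h,z+w)-G(t+h,z)=\int_0^1 \partial_s G(t+h,z+\theta w)\,w\,d\theta$ and subtracting $\partial_s G(t+h,z)w$ inside the integral. Setting again $z=S(u)$, $w=\psi(u)y$ and taking expectations produces the bound $\tfrac{L}{2}\psi_2^2\,y^2$, which is integrable against $du\,\nu(dy)$.

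The main subtlety, though easily handled, is that this second estimate requires exactly the Lipschitz regularity of $\partial_s G$ uniformly in $t$, rather than just continuity, in order to absorb the $y^2$-factor and exploit \eqref{variance}. Once this is in hand, both finiteness claims follow immediately, and no separate treatment of large versus small jumps is needed because the square-integrability of $y$ under $\nu$ already covers both regimes.
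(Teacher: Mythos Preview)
Your proof is correct and follows essentially the same approach as the paper's: both parts use the boundedness of $\partial_s G$ (via the mean value theorem) for the first integral and the uniform Lipschitz continuity of $\partial_s G$ in $s$ for the second, reducing each integrand to a bound of the form $C y^2$ that is $\nu$-integrable by \eqref{variance}. The only cosmetic difference is that the paper phrases the second estimate through the Taylor remainder with the bounded weak second derivative of $G$, whereas you write the increment as $\int_0^1 \partial_s G(t+h,z+\theta w)\,w\,d\theta$ and subtract; the two computations yield the identical $\tfrac{L}{2}w^2$ bound.
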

\begin{proof}
For the first term it is sufficient to observe that 
\begin{align*}
\bE^{t,s}\left[\bigl(G(t+h,S(u)+\psi(u) y)-G(t+h,S(u))\bigr)^2\right] \leq \sup_{z\in\bR} G_s(t+h,z)^2 \psi(u)^2 y^2\leq C e^{2\lambda u} y^2. 
\end{align*}
For the second part of the statement, recall from Lemma \ref{c} that $\partial_s G(t,s)$ is Lipschitz continuous in $s$ uniformly in $t$. Let us denote by $\phi(t,s)$ the weak derivative of $\partial_s G(t,s)$ (which is bounded). Therefore, we can write the Taylor expansion of $G(t+h,\cdot)$ in $s+\psi(u) y$ around the center $s$ with integral remainder: 
\begin{align*}
G(t+h,s+\psi(u) y)=G(t+h,s)+\partial_s G(t+h,s)\psi(u) y+\int_s^{s+\psi(u) y} \phi(t+h,\xi) (s+\psi(u) y-\xi) d\xi.
\end{align*}
Hence, for all $s\in\bR$, we have that
\begin{align*}
|G(t+h&,s+\psi(u) y)-G(t+h,s)-\partial_s G(t+h,s)\psi(u) y|\\
&\leq \int_s^{s+\psi(u) y} |\phi(t+h,\xi)| |s+\psi(u) y-\xi| d\xi\\
&\leq C \int_s^{s+\psi(u) y} |s+\psi(u) y-\xi| d\xi = C\, \psi(u)^2 y^2\\
&\leq C\, y^2.
\end{align*}
As a consequence,
\begin{align*}
\int_t^{t+h} \int_\bR \bE^{t,s}[|G(t+h,S(u)+\psi(u) y)-G(t+h,S(u))-\partial_s G(t+h,S(u))\psi(u) y|] \nu(dy)du \\
\leq C\, \int_\bR  y^2 \nu(dy),
\end{align*}
which is finite by our standing assumptions on the Lévy measure $\nu$.
\end{proof}

\section{}

%In this Appendix we collect the proof of the stability results for the approximations introduced in Section 4.
In this Appendix we collect some of the most technical proofs.

\subsection*{Proof of Proposition \ref{19}}

We prove it for {\bf case A}, the other cases being analogous.
Let us denote $\bar\pi_1=\min\Pi$, $\bar\pi_2=\max\Pi$. %For the sake of clarity, we restrict our attention to the case $\Pi=[\bar\pi_1,\bar\pi_2]$ and $\bar\pi_1<\bar\pi_2$. 
We already observed, due to the concavity of the function $f(\cdot;t,s)$ in \eqref{25}, that the map $\bar\pi^*:[0,T]\times\bR\to[\bar\pi_1,\bar\pi_2]$ is well defined. Also, from \eqref{7} we immediately get that $\bar\pi^*\left(t,\frac{b(t)}\lambda\right)=0$. 

The continuity of the function $\bar\pi^*:[0,T]\times\bR\to\Pi$ relies on a general argument based on the concavity of $f(\bar\pi;t,s)$. If $z=(t,s)\in[0,T]\times\bR$, then $\bar\pi^*=\bar\pi^*(z)=\arg\max_{\bar\pi\in\Pi} f(\bar\pi;z)$. %Then we have to prove that $\bar\pi^*$ is continuous in each $z\in[0,T]\times \bR$. 
Take a sequence $(z_k)_k\subset[0,T]\times\bR$ such that $z_k\to z_0$ as $k\to\infty$. Then the statement follows from proving that $\bar\pi^*(z_k)\to \bar\pi^*(z_0)$. This is equivalent to saying that each subsequence of $\bar\pi^*(z_k)$ admits a subsequence which converges to $\bar\pi^*(z_0)$. Take any subsequence of $\bar\pi^*(z_k)$ and denote it by $\bar\pi^*(z_k)$ (i.e. we do not rename the indexes). Since $\bar\pi^*(z_k)\subset\Pi$, which is compact, it admits a subsequence $\bar\pi^*(z_{k_h})$ converging to a limit $\bar\pi_0\in\Pi$ as $h\to\infty$. Observe that, for any $\bar\pi\in\Pi$,
$$
f(\bar\pi_0;z_0)=\lim_{h\to\infty} f(\bar\pi^*(z_{k_h});z_{k_h})\geq \lim_{h\to\infty} f(\bar\pi;z_{k_h})= f(\bar\pi;z_0).
$$
By definition of $\bar\pi^*$, we have that $\bar\pi_0=\bar\pi^*(z_0)$, which implies that $\bar\pi^*(z_{k_h})\to \bar\pi^*(z_0)$. Since this argument is valid for an arbitrary subsequence of $\bar\pi^*(z_{k})$, we obtain that $\bar\pi^*(z_{k})$ itself must converge to $\bar\pi^*(z_0)$ as $k\to\infty$.

Now, fix a $t\in[0,T]$. Then, the first order condition can be inverted in the following sense:
$$
s^*(t,\bar\pi):=\frac1\lambda \left(b(t)-\sigma(t)^2\bar\pi-\int_\bR 
\frac{\bar\pi \psi(t)^2 y^2}{1+\bar\pi \psi(t) y}\, \nu(dy)\right),
$$ 
so to define the inverse function of $\bar\pi^*(t,\cdot)$ from $(\bar\pi^*(t,\cdot))^{-1}\bigl((\bar\pi_1,\bar\pi_2)\bigr)$ to $(\bar\pi_1,\bar\pi_2)$. By the Inverse Function Theorem (cf. \cite[Appendix C.5]{MR1625845}), since
\begin{equation*}
\frac{\partial s^*(t,\bar\pi)}{\partial \bar\pi} =\frac1\lambda \left(-\sigma(t)^2-\int_\bR 
\frac{\psi(t)^2 y^2}{(1+\bar\pi \psi(t) y)^2}\, \nu(dy)\right)<0,
\end{equation*}
we get that, for a fixed $t\in[0,T]$, $\bar\pi^*(t,\cdot)$ is strictly decreasing and continuously differentiable for any $s\in\bigl(\bar\pi^*(t,\cdot)\bigr)^{-1}\bigl((\bar\pi_1,\bar\pi_2)\bigr)$. Observe, in particular, that $\bigl(\bar\pi^*(t,\cdot)\bigr)^{-1}\bigl((\bar\pi_1,\bar\pi_2)\bigr)$ must be an interval, which we denote by $(s_1(t),s_2(t))$. Also, since $s^*(t,\cdot)$ is smooth by Lebesgue's dominated convergence theorem, with $n$-th derivative  \begin{equation}\label{93}
\frac{\partial^n s^*(t,\bar\pi)}{\partial \bar\pi^n}=(-1)^n\frac{n!}\lambda \int_\bR \frac{ \psi(t)^{n+1} y^{n+1}}{(1+\bar\pi \psi(t) y)^{n+1}}\, \nu(dy),
\end{equation}
then $\bar\pi^*(t,\cdot):(s_1(t),s_2(t))\to(\bar\pi_1,\bar\pi_2)$ is smooth. The boundedness of the derivatives of $\bar\pi^*(t,\cdot)$ follows from \eqref{93} and the fact that, by definition, $1+\bar\pi \psi(t) y\geq\delta>0$ for each $t\in[0,T]$ and $\bar\pi\in\Pi$.

Since $f'(\bar\pi;t,s)\to\mp\infty$ as $s\to\pm\infty$, uniformly with respect to $\bar\pi\in\Pi$ and $t\in[0,T]$, there exist $s_1,s_2$ independent from $\bar\pi$ and $t$ such that, for any $t\in[0,T]$ and $\bar\pi\in[\bar\pi_1,\bar\pi_2]$, $-\infty< s_1\leq \frac{b(t)}\lambda \leq s_2<\infty$ and
$$
\begin{cases} f'(\bar\pi;t,s)>0, & \mbox{if }s\leq  s_1, \\ f'(\bar\pi;t,s)<0, & \mbox{if }s\geq s_2.\end{cases}
$$
By monotonicity of $f$, it follows that
$$
\bar\pi^*(t,s)\equiv\begin{cases} \bar\pi_2, & \mbox{if }s\leq s_1, \\ 
\bar\pi_1, & \mbox{if }s\geq s_2,\end{cases}
$$
which proves the fourth statement. 

The Lipschitz continuity of $\bar\pi^*(t,\cdot)$ follows from the fact that its derivative in $[s_1(t),s_2(t)]$ is bounded uniformly in $t$ and that $\bar\pi^*(t,\cdot)$ is constant outside $[s_1(t),s_2(t)]$ (and the constant is independent of $t$). 

\subsection*{Proof of Theorem \ref{36}}

The admissibility of $\pi^*(u,S(u-),X(u-))=\bar\pi^*(u,S(u-)) X(u-)$ is immediate consequence of Proposition \ref{19}. In order to apply the Verification Theorem (Theorem \ref{verification}), which allows us to conclude, we need to prove that $H(u,s,x)=\log(x)+g(u,s)$ satisfies the Dynkin formula \eqref{4}. By It\^o's lemma, for each admissible strategy $\pi$, we get 
\begin{align*}
dH(u,S(u),X(u))=&\, H_u(u,S(u),X(u))+A^{\pi} H(u,S(u),X(u))\\
&+\sigma(u)\left(\frac{\pi(u)}{X(u)}+ g_s(u,S(u))\right)dW(u)\\
&+\int_{\bR} \Bigl[\log(X(u-)+\pi(u) \psi(u) y)-\log(X(u-))\Bigr]\,\overline N(dy,du)\\
&+\int_{\bR} \Bigl[g(u,S(u-)+\psi(u) y)-g(u,S(u-))\Bigr]\,\overline N(dy,du).
\end{align*}
Since $\pi(u)=\bar\pi(u,S(u-)) X(u-)$, we can rewrite it as
\begin{align*}
dH(u,S(u),X(u))=&\,H_u(u,S(u),X(u))+A^{\pi} H(u,S(u),X(u))\\
&+\sigma(u)\left(\bar\pi(u,S(u)) + g_s(u,S(u))\right)dW(u)\\
&+\int_{\bR} \log(1+\bar\pi(u,S(u-)) \psi(u) y)\,\overline N(dy,du)\\
&+\int_{\bR} \Bigl[g(u,S(u-)+\psi(u) y)-g(u,S(u-))\Bigr]\,\overline N(dy,du).
\end{align*}
Then the validity of Dynkin's formula boils down to the martingale property of the process 
\begin{align*}
dZ(u):=&\,\sigma(u)\left(\bar\pi(u,S(u)) + g_s(u,S(u))\right)dW(u)\\
&+\int_{\bR} \log(1+\bar\pi(u,S(u-)) \psi(u) y)\,\overline N(dy,du)\\
&+\int_{\bR} \Bigl[g(u,S(u-)+\psi(u) y)-g(u,S(u-))\Bigr]\,\overline N(dy,du).
\end{align*}
Sufficient conditions are
\begin{align*}
&\bE\left[\int_{t}^T \sigma(u)^2\, \bar\pi(u,S^{t,s}(u))^2\, du\right]<\infty,\\
& \bE\left[\int_{t}^T \int_\bR \left[\log(1+\bar\pi(u,S(u)) \psi(u) y)\right]^2\,\nu(dy) \, du\right]<\infty,\\
&\bE\left[\int_{t}^T \int_\bR \left[g(u,S^{t,s}(u)+\psi(u) y)-g(u,S^{t,s}(u))\right]^2\,\nu(dy) \, du\right]<\infty,\\
&\bE\left[\int_{t}^T \sigma(u)^2\, g_s(u,S^{t,s}(u))^2\, du\right]<\infty.
\end{align*}
The first two follow from the definition of $\Pi$ (the range of $\bar\pi$) and the boundedness of $\sigma$, while the last conditions depend on $g$ and are assumed valid in the statement. By standard integrability reasonings (see e.g. \cite[Section 3.1]{pasin} for an argument), we get that $Z$ is a martingale, and then the result.

\subsection*{Proof of Proposition \ref{34}}
We follow the same lines of reasoning as in Section 4 of \cite{MR3176490}. 
First, let us write from \eqref{13} for $\pi\in\mathring{\Pi}$ and fixed $t\in[0,T]$ and $s\in\bR$:
$$
h(\pi):=h(\pi;t,s)=b(t)-\lambda s-\pi\sigma(t)^2-\psi(t)^2 \int_\bR \pi y^2\, \nu(dy),
$$
observing that
\begin{align*}
h(\bar\pi^*_1)&=0,\\
h(\bar\pi^*)&=\psi(t)^2\int_\bR 
\frac{\bar\pi^* y^2}{1+\bar\pi^* \psi(t) y}\, \nu(dy)-\psi(t)^2\int_\bR 
\bar\pi^* y^2\, \nu(dy)=-\int_\bR 
\frac{\psi(t)^3 (\bar\pi^*)^2 y^3}{1+\bar\pi^* \psi(t) y}\, \nu(dy).
\end{align*}
Hence,
$$
\bar\pi^*_1=h^{-1}(0),\qquad \bar\pi^*=h^{-1}\left(-\int_\bR 
\frac{\psi(t)^3 (\bar\pi^*)^2 y^3}{1+\bar\pi^* \psi(t) y}\, \nu(dy)\right).
$$
Now,
$$
|\bar\pi^*-\bar\pi^*_1|=\left|h^{-1}\left(-\int_\bR 
\frac{\psi(t)^3 (\bar\pi^*)^2 y^3}{1+\bar\pi^* \psi(t) y}\, \nu(dy)\right)-h^{-1}(0)\right|.
$$
By applying the Mean Value Theorem, since $\bar\pi^*$ takes values in $\Pi=\Pi_{\nu,\psi}$, a compact set whose distance from the boundary of $\widehat\Pi$ is a certain $\delta>0$ and $|(h^{-1}(z))'|=\frac 1{\sigma(t)^2+\sigma_L(t)^2}$, we finally get the above estimate
\begin{align*}
|\bar\pi^*-\bar\pi^*_1|&\leq \frac 1{\sigma(t)^2+\sigma_L(t)^2} \left| \int_\bR 
\frac{\psi(t)^3 (\bar\pi^*)^2 y^3}{1+\bar\pi^* \psi(t) y}\, \nu(dy)\right|\leq \frac{\psi_2^3}{\sigma_1^2+\psi_2^2 \sigma_\nu^2} \int_\bR \frac{(\bar\pi^*)^2 }{1+\bar\pi^* \psi(t) y}\, |y|^3\, \nu(dy)\\
&\leq C \int_\bR |y|^3\, \nu(dy),
\end{align*}
where $C$ is the constant in the statement.

\subsection*{Proof of Proposition \ref{99}}
Define for $\bar\pi\in\mathring{\Pi}$ and fixed $t\in[0,T]$ and $s\in\bR$:
$$
h(\bar\pi):=h(\bar\pi;t,s)=b(t)-\lambda s-\bar\pi\sigma(t)^2-\frac{\bar\pi\psi(t)^2 \mu_F^2}{1+\bar\pi \psi(t) \mu_F} \eta,
$$
being $\eta=\nu([m,M]\setminus\{0\})$.
Then, 
\begin{align*}
h(\bar\pi^*_2)&=0,\\
h(\bar\pi^*)&=\int_\bR 
\frac{\bar\pi^* \psi(t)^2 y^2}{1+\bar\pi^* \psi(t) y}\, \nu(dy)-\frac{\bar\pi^* \psi(t)^2 \mu_F^2}{1+\bar\pi^* \psi(t) \mu_F} \eta,
\end{align*}
and
$$
\bar\pi^*_2=h^{-1}(0),\qquad \bar\pi^*=h^{-1}\left(\int_\bR 
\frac{\bar\pi^* \psi(t)^2 y^2}{1+\bar\pi^* \psi(t) y}\, \nu(dy)-\frac{\bar\pi^* \psi(t)^2 \mu_F^2}{1+\bar\pi^* \psi(t) \mu_F} \eta\right).
$$
Coming back to \eqref{40}, observe that
$$
\int_\bR 
\frac{\bar\pi^* \psi(t)^2 y^2}{1+\bar\pi^* \psi(t) y}\, \nu(dy)-\frac{\bar\pi^* \psi(t)^2 \mu_F^2}{1+\bar\pi^* \psi(t) \mu_F} \eta=\psi(t)^2 \int_\bR (\phi(y)-\phi(\mu_F) )\nu(dy).
$$
By applying the Mean Value Theorem to $\phi$:
$$
|\phi(y)-\phi(\mu_F)|\leq\sup_{z\in[m,M]\cap\bR}|\phi'(z)||y-\mu_F|
\leq C_1|y-\mu_F|,
$$
with, for instance in {\bf case A}, $C_1=\max\{1,\frac{1}{(\delta \psi_2 M)^2},\frac{1}{(\delta \psi_2 m)^2}\}+\max\{1,\frac{1}{\delta \psi_2 M},\frac{1}{-\delta \psi_2 m}\}$,
since (cf. Proposition \ref{34})
$$
|\phi'(z)|=\frac{|\bar\pi z|(2+\bar\pi \psi(t) z)}{(1+\bar\pi \psi(t) z)^2}=
\frac{|\bar\pi z|}{(1+\bar\pi \psi(t) z)^2}+
\frac{|\bar\pi z|}{(1+\bar\pi \psi(t) z)}.
$$
Therefore, we have found the following estimate:
$$
\left|\int_\bR 
\frac{\bar\pi^* \psi(t)^2 y^2}{1+\bar\pi^* \psi(t) y}\, \nu(dy)-\frac{\bar\pi^* \psi(t)^2 \mu_F^2}{1+\bar\pi^* \psi(t) \mu_F} \eta \right|\leq \psi_2^2 \,C_1 \int_\bR |y-\mu_F|\, \nu(dy).
$$
Now,
$$
|\bar\pi^*-\bar\pi^*_2|=\left|h^{-1}\left(\int_\bR 
\frac{\bar\pi^* \psi(t)^2 y^2}{1+\bar\pi^* \psi(t) y}\, \nu(dy)-\frac{\bar\pi^* \psi(t)^2 \mu_F^2}{1+\bar\pi^* \psi(t) \mu_F} \eta \right)-h^{-1}(0)\right|.
$$
Before applying the Mean Value Theorem again, let us compute 
$$
h'(\bar\pi)=-\sigma(t)^2-\frac{\psi(t)^2 \mu_F^2 \eta}{(1+\bar\pi \psi(t) \mu_F)^2},
$$ 
which is negative for each $\bar\pi\in\Pi$, so that
$$
|h'(\bar\pi)|=\sigma(t)^2+\frac{\mu_F^2 \psi(t)^2 \eta}{(1+\bar\pi \psi(t) \mu_F)^2}.
%=\sigma^2+\frac{\mu_F^2 i}{(1+\bar\pi \mu_F)^2}.
$$ 
Finally, denoting $z=h(\bar\pi)$, since $(h^{-1}(z))'=\frac{1}{h'(h^{-1}(z))}=\frac{1}{h'(\bar\pi)}$,
\begin{align*}
&|\bar\pi^*-\bar\pi^*_2|\leq \sup_{z\in [m,M]\cap\bR}|(h^{-1}(z))'| \left|\int_\bR 
\frac{\bar\pi^* \psi(t)^2 y^2}{1+\bar\pi^* \psi(t) y}\, \nu(dy)-\frac{\bar\pi^* \psi(t)^2 \mu_F^2}{1+\bar\pi^* \psi(t) \mu_F} \eta \right| \\
&\leq \frac {\psi_2^2\,C_1}{\sigma(t)^2+\frac{\mu_F^2 \psi_2^2 \eta}{\sup_{\bar\pi\in\Pi}(1+\bar\pi \psi(t)\mu_F)^2}} 
\int_\bR |y-\mu_F|\, \nu(dy)
\leq \frac {\psi_2^2\,C_1}{\sigma_1^2+ \mu_F^2\,\psi_2^2\, \eta\, C_2} 
\int_\bR |y-\mu_F|\, \nu(dy)
\\
&\leq \frac {\psi_2^2\,C_1\,\eta\,\sigma_F}{\sigma_1^2+ \mu_F^2\, \eta\, \psi_2^2\,C_2},
%\bE\left[ |J_F-\mu_F|\right],
\end{align*}
where $\sigma_F$ is the square root of the variance of the jump size and $C_1,C_2$ are the constants in the statement.

%\clearpage
\bibliographystyle{amsplain} %abstract, named, alpha
\bibliography{bibbase}

\def\cprime{$'$} \def\cprime{$'$} \def\cprime{$'$}
\providecommand{\bysame}{\leavevmode\hbox to3em{\hrulefill}\thinspace}
\providecommand{\MR}{\relax\ifhmode\unskip\space\fi MR }
% \MRhref is called by the amsart/book/proc definition of \MR.
\providecommand{\MRhref}[2]{%
  \href{http://www.ams.org/mathscinet-getitem?mr=#1}{#2}
}
\providecommand{\href}[2]{#2}
\begin{thebibliography}{10}

\bibitem{aase}
K.~K. Aase, \emph{Optimum portfolio diversification in a general
  continuous-time model}, Stochastic Processes and their Applications
  \textbf{18} (1984), no.~1, 81 -- 98.

\bibitem{MR3439396}
R.~A\"id, P.~Gruet, and H.~Pham, \emph{An optimal trading problem in intraday
  electricity markets}, Math. Financ. Econ. \textbf{10} (2016), no.~1, 49--85.

\bibitem{ascheberg}
M.~Ascheberg, N.~Branger, H.~Kraft, and F.~T. Seifried, \emph{When do jumps
  matter for portfolio optimization?}, Quantitative Finance \textbf{16} (2016),
  no.~8, 1297--1311.

\bibitem{asm_ros}
S.~Asmussen and J.~Rosiński, \emph{Approximations of small jumps of {L}évy
  processes with a view towards simulation}, J. Appl. Probab. \textbf{38}
  (2001), no.~2, 482--493.

\bibitem{MR1841412}
O.~E. Barndorff-Nielsen and N.~Shephard, \emph{Non-{G}aussian
  {O}rnstein-{U}hlenbeck-based models and some of their uses in financial
  economics}, J. R. Stat. Soc. Ser. B Stat. Methodol. \textbf{63} (2001),
  no.~2, 167--241.

\bibitem{MR673169}
A.~Bensoussan and J.-L. Lions, \emph{Contr\^ole impulsionnel et in\'equations
  quasi variationnelles}, M\'ethodes Math\'ematiques de l'Informatique
  [Mathematical Methods of Information Science], vol.~11, Gauthier-Villars,
  Paris, 1982.

\bibitem{MR2323278}
F.~E. Benth, J.~Kallsen, and T.~Meyer-Brandis, \emph{A non-{G}aussian
  {O}rnstein-{U}hlenbeck process for electricity spot price modeling and
  derivatives pricing}, Appl. Math. Finance \textbf{14} (2007), no.~2,
  153--169.

\bibitem{MR2153580}
F.~E. Benth and K.~H. Karlsen, \emph{A note on {M}erton's portfolio selection
  problem for the {S}chwartz mean-reversion model}, Stoch. Anal. Appl.
  \textbf{23} (2005), no.~4, 687--704.

\bibitem{MR1967775}
F.~E. Benth, K.~H. Karlsen, and K.~Reikvam, \emph{Merton's portfolio
  optimization problem in a {B}lack and {S}choles market with non-{G}aussian
  stochastic volatility of {O}rnstein-{U}hlenbeck type}, Math. Finance
  \textbf{13} (2003), no.~2, 215--244.

\bibitem{Benth20121589}
F.~E. Benth, R.~Kiesel, and A.~Nazarova, \emph{A critical empirical study of
  three electricity spot price models}, Energy Economics \textbf{34} (2012),
  no.~5, 1589 -- 1616.

\bibitem{BPV}
F.~E. Benth, M.~Piccirilli, and T.~Vargiolu, \emph{Additive energy forward
  curves in a {Heath-Jarrow-Morton} framework}, Preprint, 2018.

\bibitem{MR3176490}
F.~E. Benth and M.~D. Schmeck, \emph{Stability of {M}erton's portfolio
  optimization problem for {L}\'evy models}, Stochastics \textbf{85} (2013),
  no.~5, 833--858.

\bibitem{cartea_flora}
\'{A}. Cartea, M.~Flora, G.~Slavov, and T.~Vargiolu, \emph{Optimal cross-border
  electricity trading}, In preparation, 2018.

\bibitem{carteaF}
Á. Cartea and M.~G. Figueroa, \emph{Pricing in electricity markets: A mean
  reverting jump diffusion model with seasonality}, Applied Mathematical
  Finance \textbf{12} (2005), no.~4, 313--335.

\bibitem{MR3443368}
S.~N. Cohen and R.~J. Elliott, \emph{Stochastic calculus and applications},
  second ed., Probability and its Applications (New York), Springer, New York,
  2015.

\bibitem{MR2042661}
R.~Cont and P.~Tankov, \emph{{Financial Modelling with Jump Processes}},
  Chapman \& Hall/CRC Financial Mathematics Series, Chapman \& Hall/CRC, Boca
  Raton, FL, 2004.

\bibitem{MR2211710}
R.~Cont and E.~Voltchkova, \emph{Integro-differential equations for option
  prices in exponential {L}\'evy models}, Finance Stoch. \textbf{9} (2005),
  no.~3, 299--325.

\bibitem{MR0210456}
J.~M. Danskin, \emph{The theory of {${\rm max-min}$}, with applications}, SIAM
  J. Appl. Math. \textbf{14} (1966), 641--664.

\bibitem{MR2418232}
\L. Delong and C.~Kl\"uppelberg, \emph{Optimal investment and consumption in a
  {B}lack-{S}choles market with {L}\'evy-driven stochastic coefficients}, Ann.
  Appl. Probab. \textbf{18} (2008), no.~3, 879--908.

\bibitem{gallana}
E.~Edoli, M.~Gallana, and T.~Vargiolu, \emph{Optimal intra-day power trading
  with a {G}aussian additive process}, Journal of Energy Markets \textbf{10}
  (2017), no.~4, 23--42.

\bibitem{negative_prices}
EPEX, \emph{{N}egative {P}rices},
  \url{http://www.epexspot.com/en/company-info/basics_of_the_power_market/negative_prices}.

\bibitem{MR1625845}
L.~C. Evans, \emph{Partial differential equations}, Graduate Studies in
  Mathematics, vol.~19, American Mathematical Society, Providence, RI, 1998.

\bibitem{farinelli}
S.~Farinelli and L.~Tibiletti, \emph{Hydroassets portfolio management for
  intraday electricity trading from a discrete time stochastic optimization
  perspective}, Tech. report, 2017.

\bibitem{MR2179357}
W.~H. Fleming and H.~M. Soner, \emph{Controlled {M}arkov processes and
  viscosity solutions}, second ed., Stochastic Modelling and Applied
  Probability, vol.~25, Springer, New York, 2006.

\bibitem{MR0181836}
A.~Friedman, \emph{Partial differential equations of parabolic type},
  Prentice-Hall, Inc., Englewood Cliffs, N.J., 1964.

\bibitem{garnier}
E.~Garnier and R.~Madlener, \emph{Balancing forecast errors in continuous-trade
  intraday markets}, Energy Systems \textbf{6} (2015), no.~3, 361--388.

\bibitem{gemanR}
H.~Geman and A.~Roncoroni, \emph{Understanding the fine structure of
  electricity prices}, The Journal of Business \textbf{79} (2006), no.~3,
  1225--1262.

\bibitem{goll_kallsen}
T.~Goll and J.~Kallsen, \emph{Optimal portfolios for logarithmic utility},
  Stochastic Processes and their Applications \textbf{89} (2000), no.~1, 31 --
  48.

\bibitem{goll_kallsen2003}
\bysame, \emph{A complete explicit solution to the log-optimal portfolio
  problem}, Ann. Appl. Probab. \textbf{13} (2003), no.~2, 774--799.

\bibitem{henriot}
A.~Henriot, \emph{Market design with centralized wind power management:
  Handling low-predictability in intraday markets}, The Energy Journal
  \textbf{35} (2014), no.~1.

\bibitem{MR637061}
N.~Ikeda and S.~Watanabe, \emph{Stochastic differential equations and diffusion
  processes}, North-Holland Mathematical Library, vol.~24, North-Holland
  Publishing Co., Amsterdam-New York; Kodansha, Ltd., Tokyo, 1981.

\bibitem{kallsen2000}
J.~Kallsen, \emph{Optimal portfolios for exponential l{\'e}vy processes},
  Mathematical Methods of Operations Research \textbf{51} (2000), no.~3,
  357--374.

\bibitem{kieselpara}
R.~Kiesel and F.~Paraschiv, \emph{Econometric analysis of 15-minute intraday
  electricity prices}, Energy Economics \textbf{64} (2017), 77 -- 90.

\bibitem{koeke}
S.~Koekebakker, \emph{{An arithmetic forward curve model for the electricity
  market}}, Working Paper 2003, Agder University College, Norway, 2003.

\bibitem{latini}
L.~Latini, M.~Piccirilli, and T.~Vargiolu, \emph{Mean-reverting no-arbitrage
  additive models for forward curves in energy markets}, Energy Economics. In
  press. Available online at https://doi.org/10.1016/j.eneco.2018.03.001, 2018.

\bibitem{Lucia2002}
J.~J. Lucia and E.~S. Schwartz, \emph{Electricity prices and power derivatives:
  Evidence from the {N}ordic power exchange}, Review of Derivatives Research
  \textbf{5} (2002), no.~1, 5--50.

\bibitem{merton}
R.~C. Merton, \emph{Lifetime portfolio selection under uncertainty: The
  continuous-time case}, The Review of Economics and Statistics \textbf{51}
  (1969), no.~3, 247--57.

\bibitem{MR0456373}
\bysame, \emph{Optimum consumption and portfolio rules in a continuous-time
  model}, J. Econom. Theory \textbf{3} (1971), no.~4, 373--413.

\bibitem{MR2322248}
B.~{\O}ksendal and A.~Sulem, \emph{{Applied Stochastic Control of Jump
  Diffusions}}, {S}econd ed., Universitext, Springer, Berlin, 2007.

\bibitem{imma_reno}
I.~Oliva and R.~Ren{\`o}, \emph{Optimal portfolio allocation with volatility
  and co-jump risk that {M}arkowitz would like}, Preprint, 2017.

\bibitem{pasin}
L.~Pasin and T.~Vargiolu, \emph{Optimal portfolio for {CRRA} utility functions
  when risky assets are exponential additive processes}, Economic Notes
  \textbf{39} (2010), no.~s1, 65--90.

\bibitem{MR1650147}
H.~Pham, \emph{Optimal stopping of controlled jump diffusion processes: a
  viscosity solution approach}, J. Math. Systems Estim. Control \textbf{8}
  (1998), no.~1, 27 pp.\ (electronic).

\bibitem{MR1922696}
\bysame, \emph{Smooth solutions to optimal investment models with stochastic
  volatilities and portfolio constraints}, Appl. Math. Optim. \textbf{46}
  (2002), no.~1, 55--78.

\bibitem{wind}
Z.~Tan and P.~Tankov, \emph{Optimal trading policies for wind energy producer},
  SIAM Journal on Financial Mathematics \textbf{9} (2018), no.~1, 315--346.

\bibitem{wolff}
G.~Wolff and S.~Feuerriegel, \emph{Short-term dynamics of day-ahead and
  intraday electricity prices}, International Journal of Energy Sector
  Management \textbf{11} (2017), no.~4, 557--573.

\bibitem{MR1732400}
T.~Zariphopoulou, \emph{Optimal investment and consumption models with
  non-linear stock dynamics}, Math. Methods Oper. Res. \textbf{50} (1999),
  no.~2, 271--296, Financial optimization.

\end{thebibliography}

\end{document}